\documentclass[journal]{IEEEtran}
\usepackage[T1]{fontenc}
\usepackage[utf8]{inputenc}
\usepackage{amsmath,amssymb,amsthm,bm}
\usepackage{graphicx}
\usepackage[caption=false]{subfig}
\usepackage{booktabs}
\usepackage{cite}
\usepackage{tikz}
\usetikzlibrary{calc,decorations.pathreplacing}
\usepackage{algorithm}
\usepackage{algpseudocode}
\usepackage{stfloats}            
\graphicspath{{Figures/}}

\newtheorem{theorem}{Theorem}
\newtheorem{proposition}{Proposition}
\newtheorem{lemma}{Lemma}
\newtheorem{corollary}{Corollary}
\newtheorem{assumption}{Assumption}
\theoremstyle{remark}
\newtheorem{remark}{Remark}

\DeclareMathOperator{\Si}{Si}

\newcommand{\bigO}{\mathcal{O}}
\newcommand{\lilo}{\mathrm{o}}
\newcommand{\softO}{\widetilde{\mathcal{O}}}
\newcommand{\NF}{N_{\mathrm{F}}}
\newcommand{\NFs}{N_{\mathrm{F}}^{\star}}
\newcommand{\NFe}{N_{\mathrm{F}}^{\mathrm{eff}}}
\newcommand{\gfoc}{G_{\mathrm{foc}}}
\newcommand{\gff}{G_{\mathrm{FF}}}
\newcommand{\geig}{G_{\mathrm{eig}}}
\newcommand{\gcf}{G_{\mathrm{cf}}}
\newcommand{\wt}{\bm{w}_{\mathrm{t}}}
\newcommand{\wrx}{\bm{w}_{\mathrm{r}}}
\newcommand{\bH}{\bm{H}}
\newcommand{\suppl}{the supplementary material}

\begin{document}

\title{From Focusing to Con-Focusing:
Optimal Power Transfer in Line-of-Sight Near-Field MIMO}

\author{Marouan~Mizmizi,~\IEEEmembership{Member,~IEEE,}
and~Sanzhar~Yergaliev,~\IEEEmembership{Student~Member,~IEEE}%
\thanks{Manuscript draft, June 2026.}%
\thanks{The authors are with the Department of Electronics, Information
and Bioengineering (DEIB), Politecnico di Milano, 20133 Milan, Italy
(e-mail: marouan.mizmizi@polimi.it; sanzhar.yergaliev@polimi.it).}}

\maketitle

\begin{abstract}
Beamfocusing is the established near-field strategy for a large array
serving a single-antenna user. We consider the single-user
line-of-sight MIMO link, free of multipath, in which the user, too,
carries an extended aperture, and show that the focusing prescription
inverts: beyond a modest Fresnel number, focusing on the user is
outperformed by far-field steering. Under fully analog, unit-modulus
beamforming, we derive closed-form power gains for \emph{focusing}
(each aperture phase-matched to the other's center) and for
\emph{steering} (a planar phase ramp) in the Fresnel regime, and prove
that their comparison is governed by two dimensionless quantities: the
link Fresnel number, the product of the two aperture lengths normalized
by wavelength and link distance, and the aperture ratio, irrespective
of how many elements discretize the apertures. For equal apertures the
two gains cross exactly once, at the universal value $1.947$; beyond
it, focusing loses ten dB per decade of Fresnel number, and the
advantage celebrated in the MISO literature survives only as the
receive aperture vanishes. We then derive the strategy that is
order-optimal at every Fresnel number, \emph{con-focusing}: both
apertures aim at the common point from which they subtend equal angles.
It attains the rank-one eigenbound in leading constant, needs no
channel knowledge, degenerates to plain steering for equal apertures,
and is acquirable within one beam-refinement round with no geometry
exchange between the terminals.
\end{abstract}

\begin{IEEEkeywords}
Near-field communications, line-of-sight MIMO, XL-MIMO,
beamfocusing, beamforming, Fresnel zone.
\end{IEEEkeywords}

\section{Introduction}\label{sec:intro}

\IEEEPARstart{C}{onsider} a line-of-sight link between two antenna
arrays, free of multipath, each carrying an arbitrary number of
elements. When the apertures are small or widely separated their
wavefronts are essentially planar, and classical beam steering is
optimal. As the apertures grow and the link shortens, it enters the radiative
near field, where the planar-wavefront model behind classical steering
no longer holds. This is the regime of the mmWave and sub-THz carriers
envisioned for next generation of wireless networks~\cite{Rappaport2019,Saad2020,Tataria2021,Akyildiz2022}:
a few-centimeter aperture already packs tens to hundreds of
elements~\cite{Marzetta2010,Larsson2014}, and the Rayleigh distance
recedes to tens of meters~\cite{Selvan2017,BjornsonSanguinetti2020}.

\begin{figure}[!b]
\centering
\begin{minipage}{0.49\columnwidth}\centering
\includegraphics[width=\linewidth]{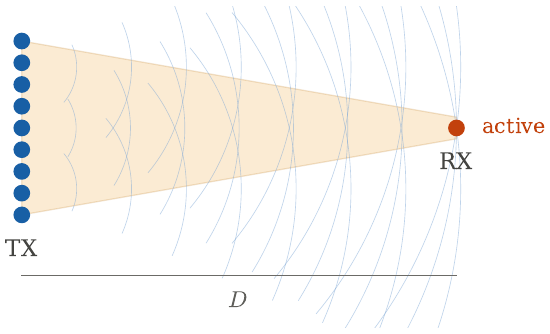}\\[-2pt]
{\footnotesize (a) MISO}
\end{minipage}\hfill
\begin{minipage}{0.49\columnwidth}\centering
\includegraphics[width=\linewidth]{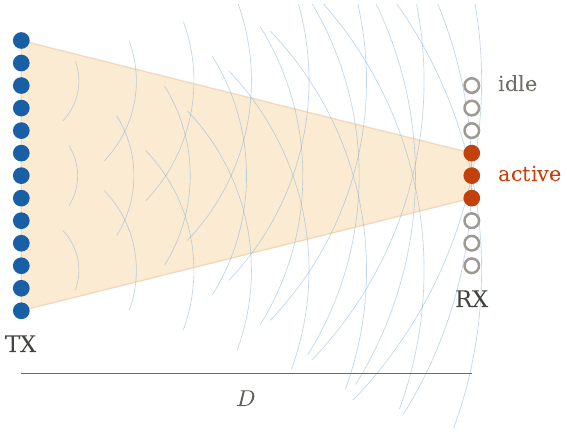}\\[-2pt]
{\footnotesize (b) MIMO}
\end{minipage}
\caption{Why focusing dies in MIMO. (a)~For a single-antenna user the
focal spot serves the one active element: the MISO regime, where
focusing is never inferior. (b)~When the user, too, carries a large
aperture, the same spot reaches only the few central elements while
the rest sit idle once $\NF > 1$.}
\label{fig:concept}
\end{figure}

The MIMO literature has drawn from this premise a now-standard
conclusion: within the Rayleigh distance, steering should be
replaced by \emph{focusing}, i.e., phase profiles matched to the
spherical wavefront converging on the intended
user~\cite{CuiDai2022,ZhangEldar2022,LuZeng2024,CuiDaiMag2023,%
ZhangMag2023,LiuSurvey2024,LuZengTWC2022}. The supporting
analysis, however, is almost invariably carried out for a large array
serving single-antenna terminals: a multiple-input
\emph{single}-output (MISO) geometry, even when the system is labeled
MIMO because many such terminals are served at
once~\cite{ZhangEldar2022,WuDai2023}. The comprehensive tutorial
reviews are explicit on this point: their demonstrations of a
near-field focusing gain, and of steering
underperforming as the array grows, are carried out for a single
large array serving single-antenna users~\cite{LiuSurvey2024,%
LuZeng2024}, never for a link between two extended apertures. For a
single-antenna user the conclusion is correct: focusing is never
inferior to steering at any distance. This is the
small-receive-aperture corner of a more general link. We generalize
near-field beamfocusing to the single-user MIMO case in which the
user, too, carries an extended aperture, the two ends related by their
aperture ratio $\rho$. The question is
how much focusing on the user is then worth.

The answer is: very little, and we can say exactly how little.
Consider a transmit aperture of length $L_{\mathrm{t}}$ and a receive
aperture of length $L_{\mathrm{r}}$, facing each other at distance
$D$, at carrier wavelength $\lambda$. The phase profile that focuses
the transmitter onto the receiver's center creates, at range $D$, a
focal spot of width $\lambda D/L_{\mathrm{t}}$, the
diffraction-limited image of the transmit aperture. As long as the
receive aperture sits inside the spot, every receive element is
served coherently and the MISO intuition applies. Once
$L_{\mathrm{r}}$ outgrows the spot, the focused field simply misses
most of the receiver, and adding receive elements buys nothing
(Fig.~\ref{fig:concept}). The condition
``$L_{\mathrm{r}}$ exceeds the spot width'' is set by
$L_{\mathrm{t}}L_{\mathrm{r}}/(\lambda D) > 1$. The two dimensionless
quantities that govern the entire problem are the \emph{link Fresnel
number} $\NF$ and the \emph{aperture ratio} $\rho$,
\begin{equation}\label{eq:NF-def}
\NF \triangleq \frac{\rho\, d_{\mathrm{F}}}{2D}
= \frac{L_{\mathrm{t}}L_{\mathrm{r}}}{\lambda D},
\qquad
\rho \triangleq \frac{L_{\mathrm{r}}}{L_{\mathrm{t}}} \le 1 ,
\end{equation}
where $d_{\mathrm{F}} \triangleq 2L_{\mathrm{t}}^{2}/\lambda$ is the
Fraunhofer distance of the larger aperture. Written this way, $\NF$ has
an immediate geometric reading: it is the aperture ratio $\rho$ times
the depth of the link inside the Fraunhofer distance, $d_{\mathrm{F}}/D$,
so the near field $\NF \gtrsim 1$ is exactly the regime
$D \lesssim \rho\, d_{\mathrm{F}}/2$, in which the link lies within (the
fraction $\rho/2$ of) the Fraunhofer distance of the larger aperture. 
Steering degrades far more gracefully than
focusing as $\NF$ grows, and a crossover follows. As we shall see,
the whole trade-off collapses onto the pair $(\NF, \rho)$:
the crossover sits at the universal constant $\NFs = 1.947$ for
equal apertures, and at an aperture-dependent $\NFs(\rho)$ in general,
tending to infinity as $\rho \to 0$; in the deep near field steering
wins by the factor $\rho\NF$ in SNR, and
the MISO regime is recovered only in the limit $\rho \to 0$. This limit
is one of vanishing receive aperture, not of a single antenna: what
governs the trade-off is aperture extent, not the number of elements
that discretize it. A small receive aperture stays in the MISO regime
whether it carries one element or many, and an extended one leaves it
even with a single port.

The problem of
maximizing the power coupled between two facing apertures, neither in
the far field of the other, is well investigated. Kay~\cite{Kay1960} derived
the near-field power-transfer bound, the continuous-aperture ancestor
of the eigen-beamforming bound used here, and remarked, without
development, that spherical phase conjugation ``may no longer be
optimum in the nearer part of the Fresnel zone.'' Borgiotti~\cite{Borgiotti1966}
proved that the jointly optimal illuminations are the confocal prolate
modes, establishing optimality over the confocal family as a whole;
that its \emph{equal-subtense} member is the unique optimum, and that a
\emph{unit-modulus} pair attains it in the MIMO link, is established
here. His numerical tables already bracket an inversion between focused
and unfocused square apertures, consistent with the closed forms of
Section~\ref{sec:gains}. What this early work lacks is the
parametrization of the trade-off, the location, uniqueness, and
universality of the crossover, its asymptotic laws, the role of
aperture asymmetry, and any implication for MIMO arrays and beam
training, concepts that did not exist at the time.

The eigenproblem of Kay and Borgiotti was later rediscovered in the
language of communication modes~\cite{Miller2000} and brought into
the wireless literature in the context of holographic and
reconfigurable surfaces~\cite{Dardari2020,Huang2020,DiRenzo2020}.
Decarli and Dardari~\cite{DecarliDardari2021} showed that
\emph{phase-only} focusing generates quasi-orthogonal communication
modes whose number matches the optimal one when a large intelligent
surface serves a much smaller antenna, a conclusion that the
$(\NF,\rho)$ map derived here explains and delimits: it is the
$\rho \ll 1$ region of the map. A parallel LoS MIMO thread designs
array geometries that maximize multiplexing, such as the
antenna-separation-product criterion of B{\o}hagen
\emph{et al.}~\cite{Bohagen2007} (see
also~\cite{Gesbert2002,Bohagen2009}), the beamspace CAP-MIMO
architecture~\cite{SayeedBehdad2011}, and hybrid wide-aperture
designs~\cite{YunRhoChoi2024}; it does not, however, compare
single-stream focusing against steering under unit-modulus
constraints, which is the comparison resolved here. Finally, the
XL-MIMO literature has developed dictionaries of point-focusing
atoms (the polar-domain codebook of Cui and Dai~\cite{CuiDai2022}
and its refinements~\cite{WeiDai2022,CuiRainbow2023,LiuSurvey2024}),
under the premise that focusing is required throughout the
Rayleigh region. The early inversions appear to have gone unnoticed
there, and we are not aware of any XL-MIMO work observing that, for
aperture-symmetric LoS links beyond a modest Fresnel number, steering outperforms focusing on the user.

Closest to our setting is~\cite{Shi2025}, which addresses the same
double-sided near-field geometry. Their goal is different: spatial
multiplexing over the link's near-field degrees of freedom, which they
pursue by selecting beamfocusing codewords from a dictionary.
Restricted to a single beam, their construction amounts to the same
focal correction as our con-focusing law. They obtain it, however,
indirectly: as a correction estimated from full channel state
information, itself acquired through a costly compressive-sensing
procedure. We instead state the confocality condition analytically and
attain it directly, from power measurements alone.

This paper provides the theory of this trade-off: the exact location,
uniqueness, and universality of the crossover, its asymptotic laws,
the role of aperture asymmetry, and the consequences for arrays
and beam training. The early numerical values fall
out of our closed forms as special cases.
Establishing these results is the first
purpose of this paper.

The second purpose is constructive. If focusing on the user is only
the $\rho \to 0$ prescription and plain steering only the $\rho = 1$
one, what is the unit-modulus optimum in between? We show that it is a
\emph{con-focusing law}:
each aperture applies the quadratic phase focused not on the opposite
array but on the common axial point from which both apertures subtend
equal angles: the transmitter at range $D/(1+\rho)$, the receiver at
range $\rho D/(1+\rho)$. This is the point at which the converging transmit
beam exactly fills the receive aperture. The title is therefore to be
read precisely: what fails is focusing \emph{on} the user, while the
optimum stays confocal \emph{with} it. A twin solution places the
common point beyond the receiver and, for equal apertures,
degenerates into plain steering, whereas for $\rho \to 0$ the law
collapses onto the user and recovers the MISO prescription. This
geometry-only, channel-independent pair is asymptotically optimal
among all rank-one schemes: the unit-modulus phase loses nothing to the
prolate amplitude tapers (Borgiotti~\cite{Borgiotti1966}) that phase
shifters could not realize anyway. Steering is its
$r \to \infty$ limit, already within $10\log_{10}(1/\rho)$~dB; the law
tolerates ranging errors up to the classical depth of focus and is
acquirable blindly when no ranging is available.

The contributions are as follows.
\begin{enumerate}
\item We derive closed-form unit-modulus gains for focusing and steering on the Fresnel kernel, and we prove
that for equal apertures their ratio crosses unity exactly once, at
$\NFs = 1.947$, independent of antenna counts, wavelength, and
distance separately. In the deep near field the focusing power gain
decays as $1/\NF^{2}$ while the steering power gain decays only as
$1/\NF$: the SNR gap widens by $10$~dB per decade.
\item We construct the complete operating-region boundary in the
$(\NF,\rho)$ plane, which reconciles the classical MISO result
($\rho \to 0$), the LIS-to-small-device results
of~\cite{DecarliDardari2021}, and the symmetric-aperture inversion
of~\cite{Borgiotti1966} in one picture, and we prove that the
focusing advantage is an \emph{asymmetry} phenomenon.
\item We propose the con-focusing law and show that this
geometry-only pair is asymptotically optimal among rank-one schemes,
attaining the rank-one eigenbound in leading order as the Fresnel
number grows; amplitude tapering and channel knowledge then add only a
vanishing margin. The law has two degenerate solutions, the two centers
of similitude of the aperture pair, one of which reduces to plain
steering for equal apertures; its tolerance to ranging errors is set by
the classical depth of focus of the joint aperture. 
\item We reduce blind acquisition to the estimation of a single
scalar, the range $D$, and develop an adaptive maximum-likelihood
protocol that estimates $D$ from received power alone and reaches the
optimum in a small, adaptively placed budget of probes, with no
ranging and no geometry exchange.
\item We prove that the entire ($N_F, \rho$) map is covariant under array
rotations: tilting either array merely replaces the apertures by
their projections onto the broadside plane, including a roll effect
for non-coplanar linear arrays; we quantify the discrete-array sampling
conditions; and we validate every analytical claim against the exact
spherical channel, with reproducible code.
\end{enumerate}

The rest of the paper is organized as follows.
Section~\ref{sec:model} states the model.
Section~\ref{sec:gains} derives the closed-form gains and the
crossover. Section~\ref{sec:confocus} establishes the optimality of
the con-focusing law and its depth of focus, and extends it from
boresight to tilted and rolled arrays.
Section~\ref{sec:acquisition} develops the acquisition protocol.
Section~\ref{sec:numerical}
presents the numerical campaign, and Section~\ref{sec:conclusion}
concludes. All proofs are collected in \suppl.

\emph{Notation.} Boldface lowercase and uppercase letters denote
vectors and matrices (e.g.\ $\bm{w}$, $\bm{H}$); $(\cdot)^{H}$ is the
Hermitian transpose, $|\cdot|$ the modulus, $\|\cdot\|_{1}$ the
$\ell_{1}$ norm, $\sigma_{\max}(\cdot)$ the largest singular value,
and $\triangleq$ a definition. To compare two positive quantities as
the Fresnel number grows we use the Landau symbols: $f = \bigO(g)$
means $f \le c\,g$ for some constant $c$ (upper bound),
$f = \lilo(g)$ means $f/g \to 0$, $f = \Theta(g)$ means
$c_{1}g \le f \le c_{2}g$ (a tight two-sided order), and
$\softO(\cdot)$ is $\bigO(\cdot)$ up to logarithmic factors;
$a \asymp b$ and $a \lesssim b$ denote equality and inequality up to
an absolute constant. 

\section{System Model and Problem Statement}\label{sec:model}

\begin{figure}[!b]
\centering
\includegraphics[width=0.82\columnwidth]{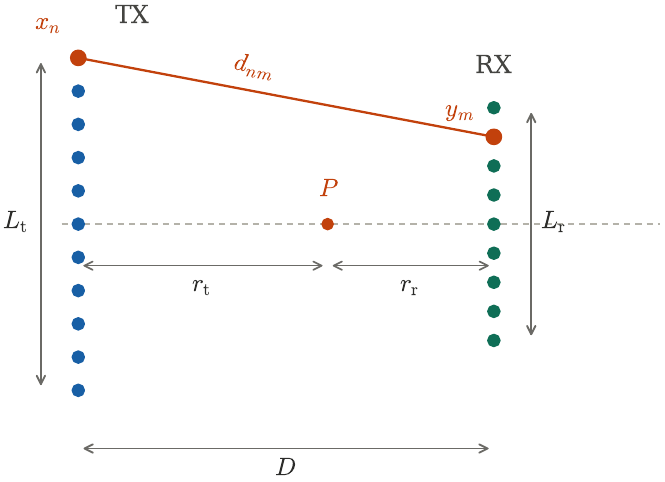}
\caption{Link geometry. Two parallel ULAs of lengths $L_{\mathrm{t}}$
and $L_{\mathrm{r}}$ at separation $D$. A generic transmit element at
$x_n$ and receive element at $y_m$ are joined by the path $d_{nm}$ of
\eqref{eq:dist}. A focused pair \eqref{eq:weights} aims at axial depths
$r_{\mathrm{t}}$ (from TX) and $r_{\mathrm{r}}$ (from RX); when
$r_{\mathrm{t}} + r_{\mathrm{r}} = D$ the two foci meet at a common
point $P$ on the axis. The dimensionless numbers
\eqref{eq:NF-def} and \eqref{eq:NF} are built from
$L_{\mathrm{t}}, L_{\mathrm{r}}, D, \lambda$; the residual phase left by
the pair is measured by the curvatures $A, B$ of \eqref{eq:AB}.}
\label{fig:geometry}
\end{figure}

Consider a LoS link between two parallel uniform linear arrays (ULAs)
in boresight configuration, as in Fig.~\ref{fig:geometry}. The transmit
(TX) array has $N$ elements at positions $x_n \in [-L_{\mathrm{t}}/2,
L_{\mathrm{t}}/2]$; the receive (RX) array has $M$ elements at
$y_m \in [-L_{\mathrm{r}}/2, L_{\mathrm{r}}/2]$ on a parallel line at
distance $D$. The wavelength is $\lambda$ and $k = 2\pi/\lambda$. A
transmit element at $x_n$ and a receive element at $y_m$ are separated
by
\begin{equation}\label{eq:dist}
d_{nm} = \sqrt{D^{2} + (x_n - y_m)^{2}} .
\end{equation}
With a pure-LoS channel the propagation between the two elements is a
single spherical wave, so the channel matrix has unit-modulus entries
\begin{equation}\label{eq:channel}
H_{mn} = e^{-jk d_{nm}} ,
\end{equation}
the amplitude variation $1/d_{nm}$ being deferred to
Section~\ref{sec:gains}, where it is shown uniformly negligible.

The pure-LoS assumption is standard at high frequencies. Measurement campaigns at and above $100$~GHz consistently
report sparse multipath: reflection and scattering losses leave the few
specular components tens of decibels below the direct path, and
blockage removes them
altogether~\cite{PriebeKurner2013,HanAkyildiz2015,JuRappaport2021}. For
the short-range, large-aperture links considered here the direct path
is therefore not merely dominant but, for design purposes, the channel.
The same model is the reference setting of LoS MIMO array
design~\cite{Bohagen2007,SayeedBehdad2011} and of the near-field
XL-MIMO analyses~\cite{CuiDai2022,ZhangEldar2022,LuZengTWC2022}.

\subsection{Beamforming and array gain}

Both terminals employ single-RF-chain analog beamforming with
per-element unit-modulus weights, $|w_{\mathrm{t},n}| = 1/\sqrt{N}$ and
$|w_{\mathrm{r},m}| = 1/\sqrt{M}$, the standard constraint of
phase-shifter front-ends at these
frequencies~\cite{Heath2016,ElAyach2014}. Collecting the weights in
$\wt$ and $\wrx$, with transmit power $P$ and noise power $\sigma^{2}$,
the post-combining SNR is $(P/\sigma^{2})\,NM\,G$, where
\begin{equation}\label{eq:g}
G \triangleq \frac{1}{NM}\,\bigl|\wrx^{H}\bH\,\wt\bigr|^{2} \in [0,1]
\end{equation}
is the normalized array gain; $G = 1$ if and only if all $NM$ element
pairs combine coherently. All gains in this paper are power (SNR)
quantities.

We study the family of \emph{focused} beamformers: a pair that
conjugates the spherical phase toward axial depths $r_{\mathrm{t}}$ at
the transmitter and $r_{\mathrm{r}}$ at the receiver,
\begin{equation}\label{eq:weights}
w_{\mathrm{t},n} = \frac{1}{\sqrt N}\,
e^{\,jk\sqrt{r_{\mathrm{t}}^{2} + x_n^{2}}},
\qquad
w_{\mathrm{r},m} = \frac{1}{\sqrt M}\,
e^{-jk\sqrt{r_{\mathrm{r}}^{2} + y_m^{2}}} .
\end{equation}
The depths $(r_{\mathrm{t}}, r_{\mathrm{r}})$ are the only design
freedom. With
this choice, as shown below, each aperture's self-curvature is the part
the weights cancel. Two members of the family are singled out
throughout (Fig.~\ref{fig:geometry}): \emph{focusing}
($r_{\mathrm{t}} = r_{\mathrm{r}} = D$), each aperture matched to the
other's phase center (the ``focus on the user'' prescription of the
MIMO literature), and \emph{steering} ($r_{\mathrm{t}} =
r_{\mathrm{r}} = \infty$, i.e.\ a uniform phase ramp, the far-field
limit). They will prove to be the two extreme points of the family.
The optimum we derive in Section~\ref{sec:confocus}, \emph{con-focusing},
is the third reference point.

Besides the link Fresnel number $\NF$ and the aperture ratio $\rho$
of \eqref{eq:NF-def}, the analysis uses the per- and
combined-aperture numbers
\begin{equation}\label{eq:NF}
N_{\mathrm{t}} \triangleq \frac{L_{\mathrm{t}}^{2}}{\lambda D},
\qquad
N_{\mathrm{r}} \triangleq \frac{L_{\mathrm{r}}^{2}}{\lambda D},
\qquad
\nu_{\pm} \triangleq \frac{(L_{\mathrm{t}} \pm
L_{\mathrm{r}})^{2}}{4\lambda D} .
\end{equation}
The convention $\rho \le 1$ entails no loss of generality: the roles
of the two ends are symmetric. Here $N_{\mathrm{t}}$ and
$N_{\mathrm{r}}$ are the \emph{per-aperture} Fresnel numbers, measuring
how deeply each aperture, taken alone, operates in its own near field,
and $\NF = \sqrt{N_{\mathrm{t}} N_{\mathrm{r}}}$ is their geometric
mean. The quantities $\nu_{\pm}$ are the Fresnel numbers of the sum and
difference apertures $L_{\mathrm{t}} \pm L_{\mathrm{r}}$: as we shall
see, $\nu_{+}$ controls the quadratic phase accumulated by steering
across the joint aperture (and, later, the depth of focus of the
con-focusing law), while the identity $\nu_{+} - \nu_{-} =
[(L_{\mathrm{t}}+L_{\mathrm{r}})^{2} - (L_{\mathrm{t}}-L_{\mathrm{r}})^{2}]
/(4\lambda D) = \NF$ ties all the parameters together.

\subsection{The residual phase}

The Fresnel reduction turns the gain \eqref{eq:g} into a single
integral whose phase is the object of the whole analysis, and we derive
it here.

\begin{assumption}\label{ass:paraxial}
$L_{\mathrm{t}} + L_{\mathrm{r}} \ll D$, so that the Fresnel
(quadratic) expansion of the square roots applies.
\end{assumption}

Under Assumption~\ref{ass:paraxial} each square root in \eqref{eq:dist}
and \eqref{eq:weights} admits the expansion $\sqrt{a^{2} + \xi^{2}} =
a + \xi^{2}/(2a) - \xi^{4}/(8a^{3}) + \bigO(\xi^{6}/a^{5})$. Applied to
the path
\eqref{eq:dist},
\begin{equation}\label{eq:fresnel}
d_{nm} = D + \frac{x_n^{2}}{2D} + \frac{y_m^{2}}{2D}
- \frac{x_n y_m}{D} + \varepsilon_{nm},
\quad
|\varepsilon_{nm}| \le \frac{(L_{\mathrm{t}} + L_{\mathrm{r}})^{4}}
{128\,D^{3}} .
\end{equation}
A single $(m,n)$ term of the gain \eqref{eq:g} is
$w_{\mathrm{r},m}^{*} H_{mn} w_{\mathrm{t},n}$; by
\eqref{eq:channel}--\eqref{eq:weights} its phase is
$k\sqrt{r_{\mathrm{r}}^{2} + y_m^{2}} - k d_{nm}
+ k\sqrt{r_{\mathrm{t}}^{2} + x_n^{2}}$. Expanding the two weight
square roots to the same order as \eqref{eq:fresnel} and substituting,
\begin{equation}\label{eq:expanded}
k\Bigl(r_{\mathrm{t}} + \tfrac{x_n^{2}}{2r_{\mathrm{t}}}\Bigr)
+ k\Bigl(r_{\mathrm{r}} + \tfrac{y_m^{2}}{2r_{\mathrm{r}}}\Bigr)
- k\Bigl(D + \tfrac{x_n^{2}}{2D} + \tfrac{y_m^{2}}{2D}
- \tfrac{x_n y_m}{D}\Bigr).
\end{equation}
The constant $k(r_{\mathrm{t}} + r_{\mathrm{r}} - D)$ is independent of
$(m,n)$ and drops out of the modulus \eqref{eq:g}. Grouping the rest by
monomial leaves the residual phase
\begin{equation}\label{eq:psi}
\Psi_{mn} = \frac{k x_n^{2}}{2}\Bigl(\frac{1}{r_{\mathrm{t}}}
- \frac{1}{D}\Bigr)
+ \frac{k y_m^{2}}{2}\Bigl(\frac{1}{r_{\mathrm{r}}} - \frac{1}{D}\Bigr)
+ \frac{k x_n y_m}{D} .
\end{equation}
The expansion neglects, besides the path residual $\varepsilon_{nm}$ of
\eqref{eq:fresnel}, the quartic remainders $\bigO(kL_{\mathrm{t}}^{4}/
r_{\mathrm{t}}^{3})$ and $\bigO(kL_{\mathrm{r}}^{4}/r_{\mathrm{r}}^{3})$
of the two weight square roots; on the focused family
($r_{\mathrm{t}} \ge D/(1+\rho)$, $r_{\mathrm{r}} \ge \rho D/(1+\rho)$,
down to the con-focusing pair) all three are $\bigO(k(L_{\mathrm{t}} +
L_{\mathrm{r}})^{4}/D^{3}) \ll 1$ under Assumption~\ref{ass:paraxial}.

The first two terms are the residual self-curvatures of the two
apertures, left uncompensated when the focal depths differ from $D$;
the third is the bilinear coupling between the apertures, which no
choice of $(r_{\mathrm{t}}, r_{\mathrm{r}})$ can remove.

It remains to make \eqref{eq:psi} dimensionless. Set
$u = x_n/L_{\mathrm{t}}$ and $v = y_m/L_{\mathrm{r}}$, so that
$u, v \in [-\tfrac12, \tfrac12]$. Using $\tfrac{k}{2}L_{\mathrm{t}}^{2}
(\tfrac{1}{r_{\mathrm{t}}} - \tfrac{1}{D}) = \pi N_{\mathrm{t}}
(\tfrac{D}{r_{\mathrm{t}}} - 1)$, the analogous identity for
$L_{\mathrm{r}}$, and $k L_{\mathrm{t}} L_{\mathrm{r}}/D = 2\pi\NF$, the
residual phase becomes $\Psi_{mn} = \Phi(u,v)$ with the quadratic form
\begin{equation}\label{eq:Phi}
\Phi(u,v) = A u^{2} + B v^{2} + 2\pi\NF\,uv,
\end{equation}
in which the focal depths enter only through the two curvatures
\begin{equation}\label{eq:AB}
A \triangleq \pi N_{\mathrm{t}}\Bigl(\frac{D}{r_{\mathrm{t}}} - 1\Bigr),
\qquad
B \triangleq \pi N_{\mathrm{r}}\Bigl(\frac{D}{r_{\mathrm{r}}} - 1\Bigr).
\end{equation}
Each depth sets one curvature, independently of the other; the cross
term $2\pi\NF\,uv$, common to every pair, is the non-separable residual
no rank-one beamformer can touch.

\subsection{The field integral}

For uniformly spaced arrays the normalized sums in \eqref{eq:g}
converge, as $N, M \to \infty$, to integrals over $u, v$ of \eqref{eq:g}). The continuous-aperture gain is then
$G = |I|^{2}$ with the field integral
\begin{equation}\label{eq:gAB}
I = \iint_{[-1/2,\,1/2]^{2}} e^{\,j\Phi(u,v)}\,du\,dv ,
\end{equation}
the discrete-to-continuous gap being controlled in
Section~\ref{sec:gains}. Equations \eqref{eq:Phi}--\eqref{eq:gAB} are
the object of the analysis: a single integral, parameterized by the two
curvatures $(A,B)$ and the fixed cross term $2\pi\NF$.

The two named schemes are the extreme curvature points of
\eqref{eq:AB}: focusing, $r_{\mathrm{t}} = r_{\mathrm{r}}
= D$, gives $(A,B) = (0,0)$, where the self-curvatures vanish and only
the bilinear coupling survives; steering, $r_{\mathrm{t}} =
r_{\mathrm{r}} = \infty$, gives $(A,B) = (-\pi N_{\mathrm{t}}, -\pi
N_{\mathrm{r}})$, where $N_{\mathrm{t}} N_{\mathrm{r}} = \NF^{2}$ turns
\eqref{eq:Phi} into the perfect square $-\pi(\sqrt{N_{\mathrm{t}}}\,u -
\sqrt{N_{\mathrm{r}}}\,v)^{2}$. Section~\ref{sec:gains} characterizes
$G$ over the whole family.

\section{Closed-Form Gains and the Crossover}\label{sec:gains}

Section~\ref{sec:model} reduced the array gain to the field integral
\eqref{eq:gAB}, with the array gain as $G = |I|^{2}$. The family
of focused beamformers is thus the two-parameter set $(A,B)$, and we
characterize $G$ over it.

\subsection{The confocal law}

\begin{figure}[!tb]
\centering
\includegraphics[width=\columnwidth]{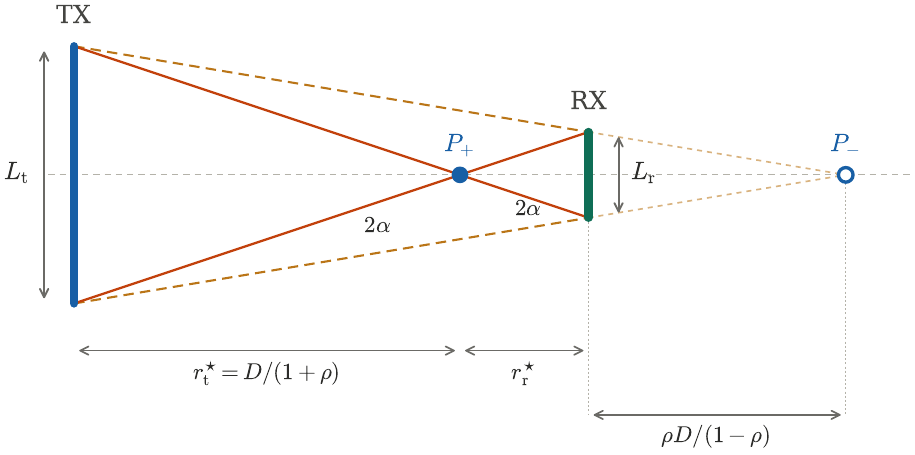}
\caption{The two solutions of the con-focusing law in one geometry
($\rho = 1/3$). The internal similitude center $P_{+}$ (the
con-focal point: both apertures subtend $2\alpha$) maps the
transmit aperture onto the receive aperture inverted (solid rays,
magnification $-\rho$); the external center $P_{-}$, at
$\rho D/(1-\rho)$ beyond the receiver, maps it erect (dashed rays,
$+\rho$). Either way the beam arrives with width exactly
$L_{\mathrm{r}}$; for $\rho = 1$, $P_{-}$ recedes to infinity and
the conjugate branch is plain steering. The gain along this family
of common foci is the profile of Fig.~\ref{fig:sweep}.}
\label{fig:law}
\end{figure}

The two curvatures act on the gain through a single combination of
them. Write the residual phase \eqref{eq:Phi} as the quadratic form
\begin{equation}\label{eq:quadform}
\Phi(u,v) = \bm{q}^{H}\Omega\,\bm{q}, \qquad \bm{q} = (u,v)^{\top},
\end{equation}
where $\Omega$ is the symmetric matrix
\begin{equation}\label{eq:Omega}
\Omega = \begin{pmatrix} A & \pi\NF\\[2pt] \pi\NF & B \end{pmatrix},
\qquad
\Delta \triangleq \det\Omega = AB - \pi^{2}\NF^{2} .
\end{equation}
We call the curve $\Delta = 0$, i.e.\ $AB = \pi^{2}\NF^{2}$, the
\emph{confocal locus}. On it $\Omega$ is singular: the residual phase
\eqref{eq:Phi} loses curvature along one direction and stays flat
across a whole strip of the joint aperture, the element pairs add
coherently, and the gain is as large as a single beam permits. Off it
the phase curves both ways, the integrand oscillates, and the gain
falls. How far it falls is set by how far $(A,B)$ lies from the
locus, measured by the one scalar $\Delta$, the signed distance, and
not by $A$ and $B$ separately. The name is geometric: on the locus the
two apertures share a common focus, an axial point from which the
converging transmit beam just fills the receiver
(Fig.~\ref{fig:law}); Section~\ref{sec:confocus} locates these common
foci, the centers of similitude of the pair, and shows the
unit-modulus law aimed at them is asymptotically optimal.

\begin{lemma}[Confocal law]\label{lem:delta}
Let $\mu_{1},\mu_{2}$ be the eigenvalues of $\Omega$, so that
$\mu_{1}\mu_{2} = \Delta$.
\begin{enumerate}
\item[(i)] \emph{(Decay.)} For $\min_{i}|\mu_{i}| \gg 1$,
\begin{equation}\label{eq:deltalaw}
G(A,B) = \Theta\!\left(\frac{1}{|\Delta|}\right),
\qquad \log G = -\log|\Delta| + \bigO(1),
\end{equation}
the implied constant tending to $\pi^{2}$ as $\min_{i}|\mu_{i}| \to
\infty$.
\item[(ii)] \emph{(Cap.)} As $\Delta \to 0$ the decay law ceases to
hold and $G$ is bounded by the rank-one eigenbound, of order
$\NF^{-1}$, attained in order on the locus by
Theorem~\ref{thm:closedform}.
\end{enumerate}
\end{lemma}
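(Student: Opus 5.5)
Part (i) is a two-dimensional stationary-phase estimate for $I=\iint_{[-1/2,1/2]^2} e^{j\bm{q}^{\top}\Omega\bm{q}}\,d\bm{q}$. The plan is to diagonalize $\Omega$ by a rotation $\bm{q}=R\bm{p}$, $R^{\top}\Omega R=\mathrm{diag}(\mu_1,\mu_2)$. The Jacobian is one, so
\[
I=\iint_{R^{\top}S} e^{j(\mu_1 p_1^2+\mu_2 p_2^2)}\,dp_1\,dp_2 ,
\]
where $S=[-1/2,1/2]^2$. The rotated square contains the disc of radius $1/2$ about the stationary point $\bm{p}=0$ and is contained in the disc of radius $1/\sqrt2$. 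The main term comes from extending the integral to $\mathbb{R}^2$ as an improper Fresnel integral:
\[
\int_{\mathbb{R}^2} e^{j(\mu_1p_1^2+\mu_2p_2^2)}\,d\bm{p}=\frac{\pi}{\sqrt{|\mu_1\mu_2|}}\,e^{j\frac{\pi}{4}(\mathrm{sgn}\,\mu_1+\mathrm{sgn}\,\mu_2)} .
\]
Its modulus squared is exactly $\pi^2/|\Delta|$. This identifies the constant $\pi^2$.

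The second step bounds the complement, i.e.\ the integral over $\mathbb{R}^2\setminus R^{\top}S$. The cleanest route avoids cutting the rotated square into pieces and works in the original coordinates. Integrate by parts once using the identity $e^{j\Phi}=\nabla\!\cdot\!\bigl(e^{j\Phi}\nabla\Phi/(j|\nabla\Phi|^2)\bigr)-e^{j\Phi}\,\nabla\!\cdot\!\bigl(\nabla\Phi/(j|\nabla\Phi|^2)\bigr)$ on $S$ minus a small disc of radius $\delta$ about the origin, with $\nabla\Phi=2\Omega\bm{q}$ so that $|\nabla\Phi|\ge 2\min_i|\mu_i|\,|\bm{q}|$.
\begin{itemize}
\item The boundary terms on $\partial S$ are one-dimensional oscillatory integrals of the type $\int e^{j(\alpha t^2+\beta t)}g(t)\,dt$. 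They contribute $\bigO(1/\min_i|\mu_i|)$ relative to the leading $|\Delta|^{-1/2}$, after a further van der Corput estimate along each edge.
\item The interior divergence term behaves like $\int |\bm{q}|^{-2}\min_i|\mu_i|^{-1}\,d\bm{q}$. On its own this gives only a logarithmic loss.
\end{itemize}
To remove that log I would instead use the complete-Fresnel subtraction. Write the square's integral as the product-type expression obtained after the rotation, then bound the tails by the one-dimensional estimate $\bigl|\int_a^\infty e^{j\mu t^2}dt\bigr|\le 1/(2|\mu| a)$, applied separately in $p_1$ and $p_2$ on the annulus between the two discs. This yields
\[
I=\frac{\pi e^{j\theta}}{\sqrt{|\Delta|}}\Bigl(1+\bigO\bigl(\min_i|\mu_i|^{-1/2}\bigr)\Bigr).
\]
Squaring gives $G=\pi^2|\Delta|^{-1}(1+\lilo(1))$, and hence both the $\Theta$ statement and $\log G=-\log|\Delta|+\bigO(1)$.

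\textbf{Main obstacle.} The hard step is controlling the corner and edge contributions uniformly in the rotation angle. The domain is not aligned with the eigenvectors, so the one-dimensional tail bound must be applied to chords of the rotated square whose endpoints depend on the other variable. I would handle this by writing the integral over $R^{\top}S$ as an iterated integral in $p_1$ over $p_2$-dependent intervals $[a(p_2),b(p_2)]$ with $|a|,|b|\ge$ the distance to the boundary. Applying the second-mean-value/van der Corput bound to each inner integral gives a remainder of size $\bigO(|\mu_1|^{-1})$ integrated against a bounded outer oscillatory integral in $p_2$, which is itself $\bigO(|\mu_2|^{-1/2})$. The resulting remainder is $\bigO(|\mu_1|^{-1}|\mu_2|^{-1/2})$, which is $\lilo(|\Delta|^{-1/2})$.

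For part (ii), the first point is that as $\Delta\to0$ with $\NF$ fixed, one eigenvalue tends to zero, since $\mu_1\mu_2=\Delta$ while $\mu_1+\mu_2=A+B$ stays bounded away from zero on the locus. The hypothesis of (i) therefore fails, and $\pi^2/|\Delta|\to\infty$ cannot describe $G\le1$. For the cap itself, note that every pair \eqref{eq:weights} is a unit-modulus rank-one scheme. Hence
\[
G\le \frac{1}{NM}\,\sigma_{\max}^2(\bH)\,\|\wrx\|^2\|\wt\|^2 ,
\]
and in the continuum the right-hand side is the top eigenvalue of the Fresnel kernel $e^{j2\pi\NF uv}$ on $[-1/2,1/2]^2$, i.e.\ the largest prolate eigenvalue. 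By the standard asymptotics of the finite Fourier transform, this eigenvalue is $\Theta(\NF^{-1})$: the kernel operator has about $\NF$ eigenvalues of size $1/\NF$ and Hilbert–Schmidt norm $1$. Attainment in order on the locus is deferred to Theorem~\ref{thm:closedform}, as the statement indicates.
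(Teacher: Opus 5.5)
Your architecture matches the paper's. You rotate $\Omega$ to its eigenbasis, take the whole-plane Fresnel integral $\pi|\Delta|^{-1/2}e^{j\pi\sigma/4}$ as the main term, and treat the finite square as a correction. For part (ii) you cap $G$ by $\sigma_{\max}^{2}(\bH)/(NM)$, the top eigenvalue of the bilinear (prolate) kernel. Part (ii) is fine, and your heuristic for its order can be made exact in two lines. For $Kf(v)=\int_{-1/2}^{1/2}e^{j2\pi\NF uv}f(u)\,du$, Plancherel gives $\|K\|^{2}\le\NF^{-1}$. The constant test function gives $\|K\|^{2}\ge\int_{-1/2}^{1/2}\bigl(\sin(\pi\NF v)/(\pi\NF v)\bigr)^{2}dv=\NF^{-1}\bigl(1-\bigO(\NF^{-1})\bigr)$. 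The gap is in part (i), at the step you flag as the main obstacle. The paper only sketches that step, via the edge slices in part (iii) of the oscillatory-integral lemma in Appendix~A of the supplementary material. Your estimate of the difference between the square and the plane does not hold.

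Two things fail. First, the inscribed disc does not keep the chord endpoints away from $p_{1}=0$. In row $p_{2}$ it only gives $a(p_{2})^{2}+p_{2}^{2}\ge\tfrac14$. On the rows with $|p_{2}|$ near $\tfrac12$ and beyond, which run through the corners of $R^{\top}S$, the endpoints approach $0$ or the chord misses $0$ altogether, so the inner remainder is only $\bigO(|\mu_{1}|^{-1/2})$. Second, and decisively, that remainder is not an amplitude against which $\int e^{j\mu_{2}p_{2}^{2}}dp_{2}$ can be estimated. It behaves like $e^{j\mu_{1}b(p_{2})^{2}}/(2j\mu_{1}b(p_{2}))$, and its total variation in $p_{2}$ is that of $b$, i.e.\ $\bigO(1)$. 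Van der Corput with amplitude therefore yields only $\bigO(|\mu_{2}|^{-1/2})$. The true outer phase is $\mu_{2}p_{2}^{2}+\mu_{1}b(p_{2})^{2}$, the restriction of $\Phi$ to an edge of the square.

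The bound $\bigO(|\mu_{1}|^{-1}|\mu_{2}|^{-1/2})$ is in fact false when $|\mu_{1}|\gg|\mu_{2}|$. Take $A=B$, so that $R^{\top}S=\{|p_{1}|+|p_{2}|\le 1/\sqrt2\}$ and $\mu_{1,2}=A\pm\pi\NF$, with $\sqrt{\mu_{1}}\ll\mu_{2}\ll\mu_{1}$. This is just off the con-focusing point $A=B=\pi\NF$ and well inside the hypothesis of (i). The edge $p_{1}+p_{2}=1/\sqrt2$ is stationary at $p_{2}^{*}=\mu_{1}/(\sqrt2(\mu_{1}+\mu_{2}))$. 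It contributes $\approx\sqrt{\pi(\mu_{1}+\mu_{2})/2}\,/(\mu_{1}\mu_{2})$, a factor $\approx\sqrt{\pi\mu_{1}/(2\mu_{2})}$ above your bound, and by symmetry the four edges add in phase.

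The correct per-edge size is $\min_{i}|\mu_{i}|^{-1}\max_{i}|\mu_{i}|^{-1/2}$. That is still a relative $\bigO(\min_{i}|\mu_{i}|^{-1/2})$, so the lemma survives, but it has to be proved edge by edge. In the original coordinates, one integration by parts in $u$ turns the edge $u=\tfrac12$ into $\int e^{j\Phi(\frac12,v)}/(j\partial_{u}\Phi(\tfrac12,v))\,dv$. Its phase $A/4+\pi\NF v+Bv^{2}$ is stationary at $v^{*}=-\pi\NF/(2B)$, where $\partial_{u}\Phi=\Delta/B$. This gives $\approx\sqrt{\pi|B|}/|\Delta|$, and $|A|,|B|\le\max_{i}|\mu_{i}|$ finishes the comparison with $\pi|\Delta|^{-1/2}$. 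The points where $\partial_{u}\Phi$ vanishes on an edge, and the corners, then need a uniform Fresnel-function treatment of the inner integral. That treatment must keep the oscillation in $v$ rather than bound the integral in modulus. The same computation shows your first bullet is too optimistic: edge terms are genuinely of relative order $\min_{i}|\mu_{i}|^{-1/2}$, not $\min_{i}|\mu_{i}|^{-1}$. The diagonal case $\int e^{jAu^{2}}du\int e^{jBv^{2}}dv$ already shows this.
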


The lemma collapses the two-parameter design $(A,B)$ to a single aim:
get close to the confocal locus. Far from it, every decade of $|\Delta|$
costs a decade of gain; approaching it, $G$ rises but cannot exceed the
rank-one eigenbound, of order $\NF^{-1}$, the maximum a single beam can
collect. The hypothesis $\min_{i}|\mu_{i}| \gg 1$ marks
this decaying branch, where the phase oscillates many times along both
principal axes; it fails near the locus, where one eigenvalue vanishes
and the saturation takes over.

The phase \eqref{eq:Phi} has a
single stationary point, the aperture center, where it is flat;
everywhere else it oscillates and cancels. The integral localizes at
the center and, taken over the whole plane, equals
$|I| = \pi/\sqrt{|\Delta|}$. The finite aperture adds only an edge
correction of relative size $\bigO(\min_{i}|\mu_{i}|^{-1/2})$, so the
order in \eqref{eq:deltalaw} is exact while its constant is reached
slowly. On the locus the stationary point degenerates and the gain is
capped by the eigenbound. The full argument is in
Appendix~B of \suppl.

\subsection{Focusing and Steering}

Equation \eqref{eq:deltalaw} settles the comparison \emph{in order}:
the two schemes sit at opposite ends of it. Focusing, $(A,B)=(0,0)$,
has $\Delta = -\pi^{2}\NF^{2}$: it lies as far from the confocal locus
as the geometry allows, so $\gfoc = \Theta(\NF^{-2})$. Steering,
$(-\pi N_{\mathrm{t}},-\pi N_{\mathrm{r}})$, has $\Delta = 0$: it lies
\emph{on} the locus, so $\gff = \Theta(\NF^{-1})$. The near-field
penalty of focusing is geometric, not a matter of constants: it focuses
at the worst point of the family. The confocal locus is a curve,
not a point: steering is one member, and the member that maximizes the
gain is the order-optimal con-focusing pair of
Section~\ref{sec:confocus}. This fixes the deep-near-field ranking and
the ten-dB-per-decade gap, but not the crossover: the constant in
\eqref{eq:deltalaw} is reached slowly and $\NFs$ lies at $\NF =
\bigO(1)$, outside the asymptotic regime. The value $\NFs = 1.947$
follows from the exact closed forms (Theorem~\ref{thm:closedform}) and
the verified certificate (Theorem~\ref{thm:regions}(d)), not from the
asymptotic law.


\begin{theorem}[Exact gains of focusing and steering]\label{thm:closedform}
In the continuous-aperture limit,
\begin{align}
\gfoc(\NF) &= \Bigl[\frac{2}{\pi\NF}\,
\Si\!\Bigl(\frac{\pi\NF}{2}\Bigr)\Bigr]^{2},
\label{eq:gfoc-cf}\\
\gff &= \frac{1}{\NF^{2}}\,
\bigl|\nu_{+}\varphi(\nu_{+}) - \nu_{-}\varphi(\nu_{-})\bigr|^{2}.
\label{eq:gff-cf}
\end{align}
where
\begin{equation}\label{eq:phi}
\varphi(\nu) \triangleq \sqrt{\frac{2}{\nu}}\,F(\sqrt{2\nu})
- \frac{1 - e^{-j\pi\nu}}{j\pi\nu},
\qquad \varphi(0^{+}) = 1 .
\end{equation}
For equal apertures, $\nu_{-} = 0$ and $\gff = |\varphi(\NF)|^{2}$.
\end{theorem}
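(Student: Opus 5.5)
Both gains come from evaluating the field integral \eqref{eq:gAB} in closed form at the two named curvature points, and $G=|I|^{2}$ then gives \eqref{eq:gfoc-cf}--\eqref{eq:gff-cf}.

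\emph{Focusing.} Here $(A,B)=(0,0)$, so $\Phi=2\pi\NF\,uv$ is bilinear and the double integral factorizes iteratively. Integrating first over $v\in[-\tfrac12,\tfrac12]$ gives
\[
\int e^{j2\pi\NF uv}\,dv=\frac{\sin(\pi\NF u)}{\pi\NF u}.
\]
This is an even real function of $u$, so
\[
I=2\int_{0}^{1/2}\frac{\sin(\pi\NF u)}{\pi\NF u}\,du .
\]
The substitution $t=\pi\NF u$ turns this into $\tfrac{2}{\pi\NF}\Si(\pi\NF/2)$. Since $I$ is real, squaring yields \eqref{eq:gfoc-cf} immediately.

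\emph{Steering.} Here $\Phi=-\pi s^{2}$ with $s=a u-b v$, where $a=\sqrt{N_{\mathrm t}}$, $b=\sqrt{N_{\mathrm r}}$ and $ab=\NF$. The integrand depends on $(u,v)$ only through $s$, so I will push the uniform measure on the unit square forward to the $s$-line:
\[
I=\frac{1}{ab}\int_{\mathbb R} e^{-j\pi s^{2}}\,w(s)\,ds .
\]
Here $w=\mathbf 1_{[-a/2,a/2]}*\mathbf 1_{[-b/2,b/2]}$ is the convolution of two boxes, and the Jacobian factor $1/(ab)=1/\NF$ comes from the change of variables. This $w$ is a symmetric trapezoid of height $b$ (recall $b\le a$), with plateau half-width $c_-=(a-b)/2$ and support half-width $c_+=(a+b)/2$. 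The key observation is that this trapezoid is a difference of two triangles,
\[
w(s)=(c_+-|s|)_+-(c_--|s|)_+ ,
\]
with $c_\pm^{2}=(L_{\mathrm t}\pm L_{\mathrm r})^{2}/(4\lambda D)=\nu_\pm$. By linearity it then suffices to compute, for a generic $c$,
\[
T(c)=2\int_{0}^{c}(c-s)e^{-j\pi s^{2}}\,ds .
\]
The linear piece integrates exactly: $2\int_0^c s\,e^{-j\pi s^2}ds=(1-e^{-j\pi c^{2}})/(j\pi)$. The constant piece $2c\int_0^c e^{-j\pi s^2}ds$ is a Fresnel integral. Taking $F(x)=\int_0^x e^{-j\pi t^{2}/2}dt$ and substituting $t=\sqrt2\,s$, it equals $\sqrt2\,c\,F(\sqrt2 c)$. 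Writing $\nu=c^{2}$ and factoring out $\nu$ gives $T=\nu\varphi(\nu)$ with $\varphi$ exactly as in \eqref{eq:phi}. Hence $I=(\nu_+\varphi(\nu_+)-\nu_-\varphi(\nu_-))/\NF$, which is \eqref{eq:gff-cf}.

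Two checks remain. For the limit $\varphi(0^+)=1$, use the small-argument expansions $F(x)=x-j\pi x^{3}/6+\dots$ and $(1-e^{-j\pi\nu})/(j\pi\nu)=1-j\pi\nu/2+\dots$. These give $\varphi(\nu)=2-1+\bigO(\nu)\to1$. The same limit shows that the $\nu_-$ term vanishes when $\nu_-=0$ (since $\nu_-\varphi(\nu_-)\to 0$), and with $\nu_+=\NF$ for equal apertures this yields $\gff=|\varphi(\NF)|^{2}$.

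The main obstacle is bookkeeping rather than analysis. One must get the trapezoid decomposition and the Jacobian right, and fix the sign and normalization convention of $F$ so that the $\sqrt{2/\nu}$ prefactor in \eqref{eq:phi} matches. I would fix $F$ as the complex Fresnel integral with kernel $e^{-j\pi t^{2}/2}$, consistent with the sign of $\Phi$ under steering, and verify the result numerically at one value of $\NF$.
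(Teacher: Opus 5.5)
Your proposal is correct and follows the paper's proof. For focusing it uses the sinc inner integral and the substitution $t=\pi\NF u$; for steering it collapses the perfect-square phase to a one-dimensional Fresnel integral against the trapezoidal cross-correlation of the two aperture indicators, written as a difference of two triangles, each of which yields $\nu\varphi(\nu)$. The differences are cosmetic: you work in the normalized variable $s=au-bv$ (kernel $e^{-j\pi s^{2}}$, Jacobian $1/\NF$) instead of the physical $s=x-y$, and you carry out the triangle integral explicitly, which the paper leaves implicit by letting that identity define $\varphi$.
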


\begin{IEEEproof}
At $(A,B) = (0,0)$ the integrand of \eqref{eq:gAB} is $e^{\,j2\pi\NF
uv}$, linear in $v$ at fixed $u$. The inner integral is
\begin{equation}\label{eq:inner}
\int_{-1/2}^{1/2} e^{\,j2\pi\NF uv}\,dv = \frac{\sin(\pi\NF u)}
{\pi\NF u},
\end{equation}
and the substitution $t = \pi\NF u$ in the even outer integral gives
$I = (2/\pi\NF)\,\Si(\pi\NF/2)$, which is \eqref{eq:gfoc-cf}. At
$(A,B) = (-\pi N_{\mathrm{t}}, -\pi N_{\mathrm{r}})$ the form is a
perfect square,
\begin{equation}\label{eq:square}
\Phi(u,v)
= -\pi\bigl(\sqrt{N_{\mathrm{t}}}\,u - \sqrt{N_{\mathrm{r}}}\,v\bigr)^{2}
= -\frac{\pi s^{2}}{\lambda D},
\end{equation}
a function of $s = x - y$ alone. Reducing \eqref{eq:gAB} to the
physical apertures and integrating out the direction orthogonal to $s$
leaves
\begin{equation}\label{eq:overlap}
I = \frac{1}{L_{\mathrm{t}}L_{\mathrm{r}}}\int T(s)\,
e^{-j\pi s^{2}/(\lambda D)}\,ds,
\end{equation}
where $T = f_{c_{+}} - f_{c_{-}}$ is the cross-correlation of the two
aperture indicators, with triangle $f_{c}(s) = (c - |s|)_{+}$ and
$c_{\pm} = (L_{\mathrm{t}} \pm L_{\mathrm{r}})/2$. Each triangle obeys
\begin{equation}\label{eq:tri}
\int f_{c}(s)\,e^{-j\pi s^{2}/(\lambda D)}\,ds = c^{2}\varphi(\nu),
\qquad \nu = c^{2}/(\lambda D),
\end{equation}
which defines $\varphi$; with $c_{\pm}^{2} = \nu_{\pm}\lambda D$ and
$L_{\mathrm{t}}L_{\mathrm{r}} = \NF\lambda D$ this gives
$I = [\nu_{+}\varphi(\nu_{+}) - \nu_{-}\varphi(\nu_{-})]/\NF$, hence
\eqref{eq:gff-cf}. The limit $\varphi(0^{+}) = 1$ follows from
$C(z) = z + \bigO(z^{5})$, $S(z) = \bigO(z^{3})$.
\end{IEEEproof}

The regime analysis below rests on an exact identity for $\varphi$.
With $z = \sqrt{2\nu}$, so that $F'(z) = e^{-j\pi\nu}$, definition
\eqref{eq:phi} reads $\nu\varphi(\nu) = zF(z) - (1 - e^{-j\pi\nu})/(j\pi)$,
whence
\begin{equation}\label{eq:dnuphi}
\frac{d}{d\nu}\bigl[\nu\varphi(\nu)\bigr]
= \frac{F(\sqrt{2\nu})}{\sqrt{2\nu}},
\end{equation}
the oscillatory boundary terms cancelling identically; differentiating
the Fresnel transform of the triangle $f_{c}$ in its half-base
replaces it by the rectangle, whose transform is $F$. Integration by
parts in $F$ gives $|F(z) - (1-j)/2| \le 2/(\pi z)$
\cite[Ch.~3]{Olver1974}.

The structure of \eqref{eq:gfoc-cf} is worth noting. Focusing depends
only on the difference $\nu_{+} - \nu_{-} = \NF$, steering on
$\nu_{+}$ and $\nu_{-}$ separately. A strongly asymmetric link can
have $\nu_{\pm}$ large while $\NF$ stays small (the MISO-like
regime), whereas a symmetric link cannot; the dichotomy of this paper
follows from this asymmetry.

\subsection{Operating regimes and crossover}

\begin{theorem}[Operating regimes and crossover]\label{thm:regions}
The gains \eqref{eq:gfoc-cf} satisfy the following.
\begin{enumerate}
\item[(a)] As $\nu_{+} \to 0$, $\gfoc = 1 + \lilo(1)$ and
$\gff = 1 + \lilo(1)$.
\item[(b)] For $\NF \lesssim 1$ and $\nu_{0} \triangleq
L_{\mathrm{t}}^{2}/(4\lambda D) \gg 1$, $\gfoc = 1 + \bigO(\NF^{2})$ and
\begin{equation}\label{eq:misoNF}
\gff = \frac{|F(\sqrt{2\nu_0})|^{2}}{2\nu_0}
= \frac{1}{4\nu_{0}}\bigl(1 + \bigO(\nu_0^{-1/2})\bigr).
\end{equation}
\item[(c)] For $\NF \to \infty$ at fixed $\rho$,
\begin{equation}\label{eq:deepNF}
\gfoc = \frac{1}{\NF^{2}}\bigl(1 + \bigO(\NF^{-1})\bigr),
\quad
\gff = \frac{\rho}{\NF} + \bigO(\NF^{-3/2}).
\end{equation}
\item[(d)] For $\rho = 1$, $\gfoc - \gff$ has exactly one zero,
\begin{align}\label{eq:NFstar}
\NFs &= 1.947\ldots, \\ \notag
\gfoc(\NFs) &= \gff(\NFs) = 0.3662\ldots,
\end{align}
with $\gfoc > \gff$ for $\NF < \NFs$ and $\gfoc < \gff$ for
$\NF > \NFs$.
\end{enumerate}
\end{theorem}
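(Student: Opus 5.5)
The plan is to read every item off the closed forms of Theorem~\ref{thm:closedform}. The one tool used throughout is the derivative identity \eqref{eq:dnuphi}. Integrating it between $\nu_{-}$ and $\nu_{+}$ turns the steering amplitude into a single integral,
\begin{equation*}
\nu_{+}\varphi(\nu_{+}) - \nu_{-}\varphi(\nu_{-}) = \int_{\nu_{-}}^{\nu_{+}} \frac{F(\sqrt{2\nu})}{\sqrt{2\nu}}\,d\nu ,
\end{equation*}
over an interval of length $\nu_{+}-\nu_{-} = \NF$. Together with $\nu_{\pm} = \nu_{0}(1\pm\rho)^{2}$, this reduces every statement about $\gff$ to estimates on $F(z)/z$. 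The focusing gain is handled directly from the expansions of $\Si$.

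For (a), $\nu_{+}\to 0$ forces $\NF\to 0$ and $\nu_{-}\to 0$. Then $\Si(x)/x \to 1$ gives $\gfoc\to 1$. Since $F(z)/z\to 1$ as $z\to 0$, the integral above equals $\NF(1+\lilo(1))$, so $\gff\to 1$.

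For (b), $\Si(x)/x = 1 - x^{2}/18 + \dots$ gives $\gfoc = 1+\bigO(\NF^{2})$. For steering, $\NF = 4\rho\nu_{0} \lesssim 1$ with $\nu_{0}\gg 1$, so the integration interval has bounded length and sits near $\nu_{0}$. The derivative of $F(\sqrt{2\nu})/\sqrt{2\nu}$ is $\bigO(\nu^{-1})$, since $F' = e^{-j\pi\nu}$ and $F$ is bounded. Freezing the integrand at $\nu_{0}$ then costs a relative error $\bigO(\NF\,\nu_{0}^{-1/2})$, which yields the first equality in \eqref{eq:misoNF} to leading order. The bound $|F(z)-(1-j)/2|\le 2/(\pi z)$ then gives $|F|^{2} = \tfrac12 + \bigO(\nu_{0}^{-1/2})$, and hence the factor $1/(4\nu_{0})$.

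For (c), $\Si(x) = \pi/2 - \cos x/x + \bigO(x^{-2})$ gives $\gfoc = \NF^{-2}(1+\bigO(\NF^{-1}))$. For steering, split $F = (1-j)/2 + R$. The constant part integrates to
\begin{equation*}
\tfrac{1-j}{2}\bigl(\sqrt{2\nu_{+}}-\sqrt{2\nu_{-}}\bigr) = \tfrac{1-j}{2}\cdot\frac{2L_{\mathrm{r}}}{\sqrt{2\lambda D}},
\end{equation*}
whose modulus is $\sqrt{\rho\NF}$; dividing by $\NF$ and squaring gives $\rho/\NF$. The remainder satisfies $R(z) = \bigO(1/z)$ only crudely, which would leave a logarithmic loss. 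I will therefore integrate by parts once more, using $R(z) \approx j e^{-j\pi z^{2}/2}/(\pi z)$, to exploit the oscillation and show $\int R/z\,d\nu = \bigO(1)$. Near $\nu=0$, when $\rho=1$, I will use the boundedness of $F(z)/z$ instead. Cross-multiplying this $\bigO(1)$ error against the $\sqrt{\rho\NF}$ main term gives the $\bigO(\NF^{-3/2})$ remainder.

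For (d), with $\rho=1$, I split the half-line into three ranges.
\begin{enumerate}
\item[(1)] A small-$\NF$ range. Taylor-expand both gains: $\gfoc = 1 - \pi^{2}\NF^{2}/36 + \dots$, while $\gff = |\varphi(\NF)|^{2}$ loses its first order at $\NF^{2}$ with a larger coefficient. Comparing the coefficients gives $\gfoc > \gff$ there.
\item[(2)] A large-$\NF$ range $\NF \ge N_{0}$. Make the constants in (c) explicit, so that $\gfoc \le (1/\NF + c_{1}/\NF^{2})^{2} < 1/\NF - c_{2}\NF^{-3/2} \le \gff$.
\item[(3)] The compact middle range. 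Use validated interval evaluation of $\Si$ and of the Fresnel integrals, with explicit Lipschitz bounds on $\gfoc - \gff$. This certifies a constant sign away from a small bracket around $1.947$, and a strictly negative derivative inside the bracket, so the zero is unique. Verified bisection then locates $\NFs$ and the common value $0.3662$.
\end{enumerate}
I expect the main obstacle to be this certificate: making every constant explicit (the remainder in (c), the Taylor remainders in (1), and the Lipschitz bounds) so that the three ranges genuinely overlap.
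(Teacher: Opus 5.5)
Your proposal is correct and follows essentially the paper's own route. It reads (a)--(c) off the small- and large-argument expansions of $\Si$ and of $\nu\varphi(\nu)$, uses the cancellation $\sqrt{\nu_{+}}-\sqrt{\nu_{-}}=\sqrt{\rho\NF}$ for the steering constant, and handles (d) with the same three-range split: Taylor remainders at small $\NF$, explicit asymptotic bounds at large $\NF$, and a verified interval certificate in between.

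The one unnecessary step is the extra integration by parts in (c). The closed form \eqref{eq:phi} already gives the exact antiderivative of $R(z)/z$ in $\nu$, namely $zR(z)-(1-e^{-j\pi\nu})/(j\pi)$. Together with $|zR(z)|\le 2/\pi$, this bounds your remainder outright by $8/\pi$, or by $4/\pi$ at $\rho=1$. The logarithm you wanted to avoid in fact appears only at $\rho=1$. The constant $4/\pi$ is exactly the explicit constant your large-$\NF$ range in (d) needs.
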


The proof is given in Appendix~B of \suppl; we comment on
the four parts in turn. Part (a) is the joint far field: the two
schemes are equivalent to first order. The hypotheses of part (b)
force $\rho \ll 1$; this is the MISO-like regime, where focusing
wins by the power factor $4\nu_{0}$, and \eqref{eq:misoNF} is the
classical MISO result with its exact constant: the gain of steering
toward a point receiver at finite distance is the well-known Fresnel
defocusing factor of a single aperture. Part (c) is the deep near
field, the heart of the dichotomy: $\gff/\gfoc = \rho\NF\,(1+\lilo(1))$,
so steering wins with an SNR advantage of $+10$~dB per decade of
Fresnel number. The physical reading is clean.
The quadratic phase left uncompensated by steering is
coherent on the diagonal strip $|x - y| \lesssim \sqrt{\lambda D}$ of
the joint aperture, whose area is a fraction $\Theta(\NF^{-1/2})$ of
the whole; the bilinear kernel left by focusing is coherent only on a
hyperbolic neighborhood of the axes of effective area
$\Theta(\lambda D)$, a fraction $\Theta(\NF^{-1})$. The received
power is the square of the coherent fraction, and the squared ratio
of the two fractions is the $\NF$ power law. It is worth noting that
for $\rho < 1$ the deep-near-field steering gain in \eqref{eq:deepNF}
equals $\lambda D/L_{\mathrm{t}}^{2} = 1/N_{\mathrm{t}}$: it is set
by the \emph{transmit} aperture alone, because the smaller (receive)
aperture is fully coherent under the quadratic phase while the larger
(transmit) one is not.
Part (d) pins the crossover: below $\NFs$ focusing wins by under
$1$~dB, above it steering wins without bound.

\begin{remark}
\label{rem:geometry}
The value \eqref{eq:NFstar} is universal with respect to $N$, $M$,
$\lambda$, $D$, and the element spacing (Section~\ref{sec:gains},
sampling conditions), but it is specific to uniformly excited
\emph{linear} apertures with $\rho = 1$. For square uniform planar
arrays the problem factorizes per axis and the crossover occurs at
the same value of the \emph{per-axis} Fresnel number; for circular
uniform apertures it moves to a larger value (numerically $c^{\star}
\approx 3.81$), in line with the tapered-illumination threshold of
Soejima~\cite{Soejima1963};\footnote{Soejima's terminology is inverted
with respect to modern near-field usage: his ``focused'' denotes
constant phase over the aperture (steering here), and his ``defocused''
the spherical phase compensation that the XL-MIMO literature calls
focusing.} amplitude taper shifts it further. What is invariant is the structure: a unique
crossover governed by a Fresnel number, with the laws
\eqref{eq:deepNF}.
\end{remark}

\begin{corollary}[Asymmetry map]\label{cor:map}
Let $\sigma = (1+\rho^{2})/(2\rho)$ and
$\NFs(\rho) \triangleq \sup\{\NF : \gfoc \ge \gff\}$, the largest
Fresnel number at which focusing still matches steering and hence the
operational boundary, beyond which steering wins for every larger
$\NF$.
\begin{enumerate}
\item[(a)] For every $\rho$, focusing wins below an $\bigO(1/\sigma)$
Fresnel number and steering wins above an $\bigO(1/\rho)$ one (explicit
thresholds in Appendix~B of \suppl); hence $\NFs(\rho)$ is
finite.
\item[(b)] $\NFs(1) = 1.947$, with a unique crossing; for every
$\rho \in [0.1, 1]$ the crossings of $\gfoc - \gff$ are confined to an
explicit narrow window (Appendix~B of \suppl), unique except
near $\rho \approx 0.20$ and $\rho \approx 0.11$, where three to five
crossings occur.
\item[(c)] $\NFs(\rho) \to \infty$ as $\rho \to 0$.
\end{enumerate}
\end{corollary}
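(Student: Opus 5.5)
We work directly from the closed forms of Theorem~\ref{thm:closedform}, written in terms of $\NF$ and $\rho$. The substitutions are $\nu_{\pm} = \NF(1\pm\rho)^{2}/(4\rho)$, so that $\nu_{+}+\nu_{-} = \sigma\NF$, and $\nu_{+}-\nu_{-} = \NF$. Part (a) reduces to two one-sided bounds, one for each regime.

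\emph{Small $\NF$: focusing wins.} From \eqref{eq:gfoc-cf} and $\Si(x) \ge x - x^{3}/18$, we get $\gfoc \ge (1 - \pi^{2}\NF^{2}/72)^{2}$. For steering we bound $|I_{\mathrm{FF}}|$ through \eqref{eq:overlap}. Writing $T = f_{c_{+}} - f_{c_{-}}$, the trapezoid $T$ has total mass $L_{\mathrm{t}}L_{\mathrm{r}}$ and is supported on $|s| \le c_{+}$. The phase $\pi s^{2}/(\lambda D)$ then sweeps a range comparable to $\nu_{+} \asymp \sigma\NF$. A second-order bound on $|\int T e^{-j\theta(s)}|$ via $\mathrm{Re}$ (using $\cos\theta \le 1 - \theta^{2}/2 + \theta^{4}/24$) gives $\gff \le 1 - c\,\sigma^{2}\NF^{2}$ for $\sigma\NF \le c'$. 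Focusing loses only $\bigO(\NF^{2})$, so it wins whenever $\NF \lesssim 1/\sigma$, which gives an explicit constant. A cleaner route is available for the steering bound: the variance of the phase under the normalized density $T/(L_{\mathrm{t}}L_{\mathrm{r}})$ is $\asymp \sigma^{2}\NF^{2}$, because the second moment of $T$ is $(L_{\mathrm{t}}^{2}+L_{\mathrm{r}}^{2})/12$. The standard inequality $|\mathbb{E}e^{j\theta}|^{2} \le 1 - \mathrm{Var}(\theta) + \dots$ for small spread then does the job.

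\emph{Large $\NF$: steering wins.} We use \eqref{eq:dnuphi}. Integrating gives $\nu_{+}\varphi(\nu_{+}) - \nu_{-}\varphi(\nu_{-}) = \int_{\nu_{-}}^{\nu_{+}} F(\sqrt{2\nu})/\sqrt{2\nu}\,d\nu$. We insert $F = (1-j)/2 + R$ with $|R(z)| \le 2/(\pi z)$. The main term is $\frac{1-j}{2}\sqrt{2}(\sqrt{\nu_{+}}-\sqrt{\nu_{-}})$, and $\sqrt{\nu_{+}}-\sqrt{\nu_{-}} = \sqrt{\rho\NF}$, so its modulus is $\sqrt{\rho\NF}$. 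The remainder is at most $\frac{1}{\pi}\int_{\nu_{-}}^{\nu_{+}} d\nu/\nu = \frac{1}{\pi}\log(\nu_{+}/\nu_{-})$. This is uniformly bounded for $\rho$ bounded away from $1$. For $\rho$ near $1$ we instead use $|R|\le 1$ on $[\nu_{-},1]$ together with the $2/(\pi z)$ bound above it, which yields $\bigO(1)$ uniformly. Hence $\gff \ge (\sqrt{\rho\NF} - C)^{2}/\NF^{2}$. On the other side, $\Si \le 1.852$ gives $\gfoc \le (1.18/\NF)^{2}$. Steering therefore wins once $\sqrt{\rho\NF} > C + 1.18$, i.e.\ for $\NF \gtrsim 1/\rho$. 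Since the bound holds for every larger $\NF$, $\NFs(\rho)$ is finite.

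Part (c) follows from the focusing-wins threshold of part (a). As $\rho\to0$ we have $\sigma \sim 1/(2\rho)$, which gives $\NFs \ge c\rho$; this is not enough, so the argument must be sharpened. We use Theorem~\ref{thm:regions}(b) instead. Fix $\NF$, let $\rho\to 0$, and note that $\nu_{0} = \NF/(4\rho) \to\infty$. Then $\gff \to 0$ like $\rho/\NF$, while $\gfoc$ is a fixed positive function of $\NF$. So for any $K$, focusing wins at $\NF = K$ once $\rho$ is small enough, i.e.\ once $\rho \lesssim \gfoc(K)K$. This gives $\NFs(\rho)\ge K$, and the argument needs the uniform bound $\gff \le \bigl(\sqrt{\rho\NF}+C\bigr)^{2}/\NF^{2}$ already obtained above.

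Part (b) is the computational core and the main obstacle. $\NFs(1)$ and its uniqueness come from Theorem~\ref{thm:regions}(d). For $\rho\in[0.1,1]$ the asymptotic bounds of part (a) confine crossings to a compact window $[\NF_{\mathrm{lo}}(\rho),\NF_{\mathrm{hi}}(\rho)]$. Inside this window we would run a validated interval-arithmetic certificate on $g(\NF,\rho) = \gfoc-\gff$. The Lipschitz constants of $g$ in both variables follow from \eqref{eq:dnuphi} and $|F|\le 1$. The certificate proceeds on a grid: on each cell we check the sign of $g$, and where $g$ vanishes we check the sign of $\partial_{\NF} g$. Uniqueness holds wherever $\partial_{\NF}g$ keeps its sign, and counting sign changes certifies three to five zeros near $\rho\approx0.20$ and $0.11$. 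The delicate part is resolving those near-tangential multiple crossings. There the grid must be refined adaptively, and tangencies must be certified using $\partial^{2}_{\NF}g$.
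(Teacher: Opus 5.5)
Your architecture is the paper's. You use a small-$\NF$ expansion for the focusing-wins threshold, the large-argument Fresnel bound together with $\Si \le 1.852$ for the steering-wins threshold, a validated interval certificate for (b), and the MISO limit for (c). The genuine gap is in the small-$\NF$ half of (a).

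\textbf{Small-$\NF$ half of (a).} Near $\rho = 1$ both gains lose $\Theta(\NF^{2})$, so ``focusing loses only $\bigO(\NF^{2})$'' decides nothing. The sign is set by the exact leading coefficients, and you never compute them.

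\emph{First route.} It cannot supply the coefficients. An upper bound on $\mathrm{Re}\int T e^{-j\theta}$ does not bound the modulus from above. Moreover, the uncentred $\mathbb{E}\theta^{2}$ contains the mean phase $\pi\sigma\NF/6$, a pure rotation that drops out of $\gff$.

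\emph{Second route.} The mechanism is right, but the justification is wrong. The second moment $(L_{\mathrm{t}}^{2}+L_{\mathrm{r}}^{2})/12$ of $T$ fixes exactly that mean phase. The variance of $\theta$ needs the fourth moment $\mathbb{E}s^{4} = (L_{\mathrm{t}}^{4}+L_{\mathrm{r}}^{4})/80 + L_{\mathrm{t}}^{2}L_{\mathrm{r}}^{2}/24$.

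\emph{What the computation gives.} Carrying this out yields $\mathrm{Var}\,\theta = \pi^{2}\NF^{2}(4\sigma^{2}+3)/180$. Compare the focusing loss $\pi^{2}\NF^{2}/36 = 5\pi^{2}\NF^{2}/180$. Hence $\gfoc - \gff = \pi^{2}\NF^{2}(2\sigma^{2}-1)/90$, plus a remainder of higher order in $\sigma\NF$. At $\rho = 1$ steering loses $7$ and focusing $5$ units of $\pi^{2}\NF^{2}/180$, a margin no order-of-magnitude statement can certify. This is exactly the paper's expansion $\delta = (2\sigma^{2}-1)\pi^{2}\NF^{2}/180 + R$ for $\delta = \gfoc^{1/2}-\gff^{1/2}$. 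Since $2\sigma^{2}-1 \ge \sigma^{2}$, it gives a threshold $\sigma\NF \le c$ with explicit $c$; the paper gets $0.157$.

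\textbf{Large-$\NF$ remainder.} Your integrated remainder is not $\bigO(1)$ uniformly near $\rho = 1$. Splitting at $\nu = 1$ leaves $\sqrt{2} + \pi^{-1}\ln\nu_{+}$. This still gives finiteness and an $\bigO(1/\rho)$ threshold. The paper avoids the logarithm by evaluating the closed form at the endpoints: $\nu\varphi(\nu) = \tfrac{1-j}{2}\sqrt{2\nu} + E$ with $|E| \le 4/\pi$, which gives a uniform error $8/\pi$.

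\textbf{Part (c).} Here you must use your $\rho$-dependent remainder $\tfrac{2}{\pi}\ln\frac{1+\rho}{1-\rho}$ explicitly, because it vanishes as $\rho \to 0$. With an absolute constant $C$, such as the paper's $8/\pi$, the bound $(\sqrt{\rho K}+C)^{2}/K^{2} \to C^{2}/K^{2}$ does not fall below $\gfoc(K) \approx K^{-2}$. With the vanishing constant, your argument for (c) is sound. It is also more explicit than the paper's, which leaves (c) to the MISO edge of Theorem~\ref{thm:regions}(b), and that result is stated only for $\NF \lesssim 1$.

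\textbf{Part (b).} This coincides with the paper's certificate, which works on $\delta$ rather than on $\gfoc - \gff$.
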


The thresholds in (a) are explicit in Appendix~B of \suppl;
part (b) rests on the verified certificate of
Theorem~\ref{thm:regions}(d), and its multiple crossings are the
lobe-jump kinks visible in the boundary map of
Section~\ref{sec:numerical}. Part (c) states that the focusing
advantage is an asymmetry phenomenon: the classical MISO result is
the $\rho \to 0$ boundary case of the map.

Two idealizations remain, continuous apertures and unit-modulus
entries; both are controlled. On a discrete array the two gains
are the finite sums
\begin{align}
\gfoc &= \frac{1}{N^{2}M^{2}}\Bigl|\sum_{n,m} e^{\,jk x_n y_m/D}
\Bigr|^{2},\label{eq:gfoc-sum}\\
\gff &= \frac{1}{N^{2}M^{2}}\Bigl|\sum_{n,m}
e^{-jk(x_n - y_m)^{2}/(2D)}\Bigr|^{2},\label{eq:gff-sum}
\end{align}
the finite-aperture forms of \eqref{eq:gfoc-cf}.

\begin{proposition}[Discretization and path loss]\label{prop:amplitude}
Let $d_{\mathrm{t}} = L_{\mathrm{t}}/(N-1)$, $d_{\mathrm{r}} =
L_{\mathrm{r}}/(M-1)$.
\begin{enumerate}
\item[(a)] The discrete sums \eqref{eq:gfoc-sum}--\eqref{eq:gff-sum}
agree with the continuous gains up to a relative
$\bigO(1/N) + \bigO(1/M)$ whenever each residual phase is sampled above
Nyquist, which for equal apertures ($\rho = 1$) is $N, M > 2\NF$
(Appendix~C of \suppl); below it, grating foci appear.
\item[(b)] Restoring the free-space amplitudes $H_{mn} =
(D/d_{nm})\,e^{-jkd_{nm}}$ perturbs every gain by a relative amount
bounded, uniformly over the $\NF$ sweep, by $C\eta(1 + \ln_{+}\NF)$
with $\eta \le (L_{\mathrm{t}}+L_{\mathrm{r}})^{2}/(8D^{2})$ and an
absolute constant $C$.
\end{enumerate}
\end{proposition}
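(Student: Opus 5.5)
For part (a) I would treat each double sum as a two-dimensional Riemann (midpoint/trapezoidal) sum of the smooth integrand $e^{j\Phi(u,v)}$ over $[-\tfrac12,\tfrac12]^2$. I would do one variable at a time.

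Fix $u$ and regard the inner sum over $m$ as a sampled version of the integral in $v$. By Poisson summation it equals the continuous integral plus aliased copies. These copies are the Fourier transform of the windowed chirp evaluated at shifts of $2\pi/d_{\mathrm r}$ in the normalized frequency (one per grating order). Each copy is negligible as long as the local spatial frequency of the phase stays below the Nyquist frequency:
\begin{equation*}
\max_{u,v}\bigl|\partial_v\Phi\bigr| < \pi(M-1),
\end{equation*}
and symmetrically in $u$ with $N-1$.

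For focusing, $\partial_v\Phi = 2\pi\NF u$ has a maximum of $\pi\NF$. For steering, $|\partial_v\Phi| = 2\pi\sqrt{N_{\mathrm r}}\,|\sqrt{N_{\mathrm t}}u-\sqrt{N_{\mathrm r}}v|$, with a maximum of $\pi(\sqrt{N_{\mathrm t}N_{\mathrm r}}+N_{\mathrm r})$. At $\rho=1$ this is $2\pi\NF$, which gives the stated condition $N,M>2\NF$. Above Nyquist, the remaining discrepancy is the endpoint error of the trapezoidal rule. I would bound it by Euler--Maclaurin: the first correction is a boundary term of size $\bigO(1/M)$ relative to the sum, and the interior error decays faster. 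This gives a relative error of $\bigO(1/M)$ in the inner integral, uniformly in $u$. Summing over $n$ in the same way adds $\bigO(1/N)$.

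Converting absolute errors into relative errors on $G=|I|^2$ requires $|I|$ to be bounded away from zero relative to the errors. Here I would use the closed forms of Theorem~\ref{thm:closedform}, which show that $|I|$ is of order $\NF^{-1}$ or $\NF^{-1/2}$. I expect this step to be the main obstacle: the constant in the Euler--Maclaurin bound grows with the phase derivative, roughly like $\NF$, while $|I|$ shrinks. So I would state the bound as relative with constants depending on $\NF/N$ and $\NF/M$, which are held fixed below the Nyquist limit. Below Nyquist, the aliased copy of order one becomes a stationary contribution, and that copy is the grating focus.

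For part (b), I would write $H_{mn}=(1-\delta_{mn})e^{-jkd_{nm}}$ with $\delta_{mn}=1-D/d_{nm}$. Then $0\le\delta_{mn}\le (x_n-y_m)^2/(2D^2)\le\eta$ with $\eta=(L_{\mathrm t}+L_{\mathrm r})^2/(8D^2)$. The perturbation of the field is $\iint \delta\,e^{j\Phi}$. A crude bound $\eta$ on this, divided by $|I|\sim\NF^{-1}$, would grow polynomially, so I would exploit the smoothness of $\delta$. Since $\delta$ is a quadratic function of $s=x-y$, I would integrate by parts against the same oscillatory kernel. For steering, I would use the triangle representation of \eqref{eq:overlap} and the identity \eqref{eq:dnuphi}. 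This gives $\int T(s)\,\delta(s)\,e^{-j\pi s^2/(\lambda D)}ds$, bounded by $\eta$ times $|I|$ times a factor arising from $\int^{\sqrt{\NF}} ds/s$, i.e.\ $1+\ln_+\NF$. For focusing, I would apply the same argument to the sinc integral, where the $1/(\pi\NF u)$ tail likewise produces the logarithm. The constant $C$ would collect the bound $|F(z)-(1-j)/2|\le 2/(\pi z)$ and the Si asymptotics.
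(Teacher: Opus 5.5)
Your route is the paper's. In (a) you use Poisson summation with the Nyquist rate read off the fastest residual phase, and your thresholds $N-1>N_{\mathrm t}+\NF$, $M-1>N_{\mathrm r}+\NF$ are exactly those of Appendix~C; you then apply Euler--Maclaurin. In (b) you collapse the steering perturbation onto the triangle integral, and for focusing you integrate by parts to get a harmonic $1/(\NF u)$ tail.

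\emph{The gap.} It is the step you flag yourself in (a), and your remedy does not close it.
\begin{itemize}
\item A relative error $\bigO(1/M)$ for the inner quadrature ``uniformly in $u$'' cannot hold. For focusing the inner integral is $\sin(\pi\NF u)/(\pi\NF u)$, which vanishes at $u=k/\NF$. The discrete inner sum is a Dirichlet kernel, which vanishes at $u=k(M-1)/(M\NF)$ instead.
\item What Euler--Maclaurin gives pointwise in $u$ is an \emph{absolute} error of order $1/M$. With weights $1/M$ on nodes spanning $[-\tfrac12,\tfrac12]$, the sum differs from the trapezoidal value $T_{h}$, $h=1/(M-1)$, by $\tfrac{1}{2M}(f_{0}+f_{M-1})-\tfrac{1}{M}T_{h}$, with $|f_{0}|=|f_{M-1}|=1$. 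The derivative corrections $h^{2}|\partial_{v}\Phi|$ can be of the same order near the Nyquist limit.
\item Bounding these in modulus and integrating over $u$ gives an absolute error $\bigO(1/N+1/M)$. For focusing $|I|=\Theta(\NF^{-1})$, so this is only a relative bound $\bigO(\NF/N+\NF/M)$.
\item That bound stays of order one when $\NF/N$ and $\NF/M$ are held fixed. Letting the constants depend on these ratios therefore does not produce $\bigO(1/N)+\bigO(1/M)$.
\end{itemize}

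\emph{The missing idea.} The error terms oscillate in the \emph{other} variable, so you must integrate them before taking moduli; they then gain the same smallness as the main term.
\begin{itemize}
\item For focusing, the inner boundary term is $\tfrac{1}{M}\cos(\pi\NF u)$. Its $u$-integral is $\bigO(1/(M\NF))=\bigO(|I|/M)$.
\item The focusing $h^{2}$ term is proportional to $u\sin(\pi\NF u)$ and integrates to $\bigO(1/M^{2})$. That is a relative $\bigO(\NF/M^{2})=\bigO(1/M)$ above Nyquist.
\item For steering, the inner boundary terms are chirps in $u$ of curvature $\pi N_{\mathrm t}$. By the Van der Corput bound their integrals are $\bigO(N_{\mathrm t}^{-1/2})$, which is the order of $|I|$.
\end{itemize}
It also helps to note that the uniform weights make the sum a midpoint rule on apertures one spacing longer. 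The leading discrepancy is then a rescaling of $L_{\mathrm t}$ and $L_{\mathrm r}$ by $1+\bigO(1/N)$ and $1+\bigO(1/M)$. The closed forms absorb this, since at fixed $\rho$ their logarithmic derivatives are $\bigO(1)$; for steering this relies on the cancellation in \eqref{eq:dnuphi}. Appendix~C asserts the Euler--Maclaurin bound at your level of detail and does not spell out this conversion, so you must supply it.

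\emph{Part (b).} It is sound for the two gains you treat. Your steering bound carries a logarithm it does not need: the paper gets $C_{1}\eta$ by localizing to the coherent strip $|s|\lesssim\sqrt{\lambda D}$, at whose centre $\delta$ vanishes. This is harmless for the stated bound. You omit $\geig$, which the supplement's version includes and handles with a non-oscillatory tool. Expanding $\delta(x-y)$ in modulations $e^{j\omega(x-y)}$ writes $\Delta\bH$ as a superposition of $\bm{D}_{1}\bH\bm{D}_{2}$ with unit-modulus diagonal $\bm{D}_{i}$. Hence $\|\Delta\bH\|_{2}\le C\eta\,\sigma_{\max}(\bH)$, and Weyl's inequality concludes.
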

Both are proved in Appendix~C of \suppl. Under the
paraxial Assumption~\ref{ass:paraxial} they are negligible: $\eta \le
(L_{\mathrm{t}}+L_{\mathrm{r}})^{2}/(8D^{2}) \ll 1$, so even the
focusing bound $\eta(1 + \ln_{+}\NF) \ll 1$ over any realistic Fresnel
number, and $N, M > 2\NF$ holds automatically for half-wavelength
arrays. The comparison is therefore genuinely between the two
unit-modulus, continuous-aperture gains \eqref{eq:gfoc-cf}.

\section{The Con-Focusing Law}\label{sec:confocus}

The focused-pair family \eqref{eq:gAB}, in the curvature coordinates
\eqref{eq:AB}, contains steering, focusing, and
everything between: $A = 0$ is focusing, $A = -\pi N_{\mathrm{t}}$
is steering ($r_{\mathrm{t}} = \infty$), and positive $A$ pulls the
focus closer than the receiver, with arbitrary transverse offsets
allowed. In these coordinates the landscape of the whole family is
completely characterized.

\begin{theorem}[The confocal landscape]\label{thm:landscape}
Consider focused pairs with fixed focal fractions
$r_{\mathrm{t}}/D$, $r_{\mathrm{r}}/D$ and fixed normalized offsets
as $\NF \to \infty$.
\begin{enumerate}
\item[(a)] $G = \Theta(\NF^{-1})$ if and only if $r_{\mathrm{t}} +
r_{\mathrm{r}} = D$ with signed depths, equivalently $AB =
\pi^{2}\NF^{2}$.
\item[(b)] On the set $AB = \pi^{2}\NF^{2}$, parametrized by
$A = \pi\NF\tau$, $B = \pi\NF/\tau$ with $\tau > 0$,
$\NF\,G \to \min(\tau, 1/\tau)$, maximized at $\tau = 1$, the
equal-subtense pair \eqref{eq:law}.
\item[(c)] Uniformly in the transverse offsets,
$G \le \bigl[4\sqrt{\pi}\,|AB - \pi^{2}\NF^{2}|^{-1/2}
+ \softO(\NF^{-1})\bigr]^{2}$; for
$r_{\mathrm{t}} = r_{\mathrm{r}} = D$,
$G = \bigO(\NF^{-2}\ln^{2}\NF)$.
\item[(d)] Steering and joint focusing at the link midpoint have
equal gain at every $(\NF, \rho)$.
\end{enumerate}
\end{theorem}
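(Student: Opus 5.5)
The plan is to work with the general offset phase
$\Phi(u,v) = Au^{2} + Bv^{2} + 2\pi\NF uv + au + bv$, where fixed normalized transverse offsets enter as linear coefficients $a,b = \bigO(\NF)$. I will prove the parts in the order (d), (c), (a), (b), because (c) supplies the ``only if'' direction of (a).

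\emph{Part (d).} The midpoint pair $r_{\mathrm{t}} = r_{\mathrm{r}} = D/2$ gives $(A,B) = (\pi N_{\mathrm{t}}, \pi N_{\mathrm{r}})$ in \eqref{eq:AB}. This makes $\Phi = +\pi(\sqrt{N_{\mathrm{t}}}u + \sqrt{N_{\mathrm{r}}}v)^{2}$. The substitution $v \to -v$ maps it to the negative of the steering square \eqref{eq:square}. Since the domain is symmetric, $I_{\mathrm{mid}} = \overline{I_{\mathrm{ff}}}$, and the two gains are equal for every $(\NF,\rho)$.

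\emph{Part (c).} I plan to complete the square in the variable with the larger curvature, say $v$ with $|B| \ge |A|$ (otherwise swap roles). This gives
\[
\Phi = B\bigl(v - v_{0}(u)\bigr)^{2} + \frac{\Delta}{B}u^{2} + \alpha u + \beta ,
\]
with $v_{0}$ affine in $u$.
\begin{itemize}
\item The inner integral $W(u) = \int_{-1/2}^{1/2} e^{jB(v - v_{0}(u))^{2}}dv$ is a difference of Fresnel integrals. It satisfies $|W| \le \min(1, C|B|^{-1/2})$, and its total variation in $u$ is of the same order, up to the $\softO$ slack coming from the endpoints where $v_{0}(u)$ crosses $\pm\tfrac12$.
\item I then apply the second-derivative van der Corput lemma with bounded-variation amplitude to the outer integral $\int W(u)\,e^{j(\Delta/B)u^{2} + j\alpha u}\,du$. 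This bounds it by $C|\Delta/B|^{-1/2}\bigl(\sup|W| + \operatorname{Var}W\bigr)$.
\item The factors $|B|^{-1/2}$ and $|B|^{1/2}$ cancel, which yields the uniform $4\sqrt{\pi}\,|\Delta|^{-1/2}$ term independently of the offsets.
\end{itemize}
For the focusing pair, the bound is obtained directly. By \eqref{eq:inner} with offsets, the inner integral is a shifted sinc in $u$, so $|I| \le \int \min\bigl(1, 1/(\pi\NF|u - u_{0}|)\bigr)\,du = \bigO(\NF^{-1}\ln\NF)$. Squaring gives the stated $\bigO(\NF^{-2}\ln^{2}\NF)$.

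\emph{Parts (a) and (b).} First, an algebraic step. Using $N_{\mathrm{t}}N_{\mathrm{r}} = \NF^{2}$, the locus condition $AB = \pi^{2}\NF^{2}$ is equivalent to $(D/r_{\mathrm{t}} - 1)(D/r_{\mathrm{r}} - 1) = 1$. This rearranges to $r_{\mathrm{t}} + r_{\mathrm{r}} = D$ with signed depths.
\begin{itemize}
\item \emph{Off the locus.} With fixed focal fractions, $|\Delta| = \Theta(\NF^{2})$. Part (c) then gives $G = \softO(\NF^{-2})$, so $G \neq \Theta(\NF^{-1})$; this is the ``only if'' direction.
\item \emph{On the locus.} With $A = \pi\NF\tau$ and $B = \pi\NF/\tau$, the form becomes $\pi\NF s^{2}$ plus linear terms, where $s = \sqrt{\tau}\,u + v/\sqrt{\tau}$. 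As in the proof of Theorem~\ref{thm:closedform}, integrating out the orthogonal direction gives $I = \int T(s)\,e^{j(\pi\NF s^{2} + \gamma s)}\,ds$. Here $T$ is the density of $s$, a trapezoid with plateau height $1/\max(\sqrt{\tau}, 1/\sqrt{\tau}) = \min(\sqrt{\tau}, 1/\sqrt{\tau})$.
\item \emph{Stationary phase.} Stationary phase at $s^{*} = -\gamma/(2\pi\NF)$ gives $|I|^{2} = T(s^{*})^{2}/\NF + \bigO(\NF^{-3/2})$. The error comes from the kinks of $T$, handled exactly as in Theorem~\ref{thm:regions}(c). Hence $G = \Theta(\NF^{-1})$ whenever $s^{*}$ lies in the support of $T$, which completes (a).
\item \emph{Zero offsets.} Here $s^{*} = 0$ lies on the plateau, so $\NF G \to \min(\tau, 1/\tau)$. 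This is maximized at $\tau = 1$, proving (b).
\end{itemize}

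The main obstacle is the uniformity in (c). The constant must not depend on the offsets, and the variation bound for $W$ must survive the degenerate regimes: $A$ and $B$ of opposite sign, $|B|$ not large, or $v_{0}(u)$ sweeping across the whole aperture. I would first attempt a case split on whether $|B| \gtrsim \NF$. In the complementary regime I would bound $W$ trivially, so that the full decay must come from the $u$-integral.
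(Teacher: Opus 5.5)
\textbf{There is a genuine gap in part (c).} Your treatment of (d), the focusing bound in (c), the ridge algebra $(D/r_{\mathrm{t}}-1)(D/r_{\mathrm{r}}-1)=1 \iff r_{\mathrm{t}}+r_{\mathrm{r}}=D$, and (b) via the trapezoidal density of $s$ are sound and match the paper. The gap is the central estimate of (c): the claim that $\operatorname{Var}_u W$ is of order $|B|^{-1/2}$, up to $\softO$ slack at the crossings, is false.

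Differentiating under the integral gives $\partial_{v_0}W = e^{jB(1/2+v_0)^{2}} - e^{jB(1/2-v_0)^{2}}$. Hence $|\partial_{v_0}W| = 2|\sin(Bv_0)|$ for every $v_0$, whether the stationary point is inside or outside the aperture. The edge contributions of $W$ have small amplitude, about $1/(|B|\,\mathrm{dist})$, but phase rate about $|B|\,\mathrm{dist}$, so their derivative is $\Theta(1)$ everywhere. Since $v_0(u)$ sweeps an interval of length $\pi\NF/|B|$, you get $\operatorname{Var}_u W \approx 4\NF/|B|$, which is $\Theta(1)$ for fixed focal fractions with $B\neq 0$.

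The bounded-variation van der Corput estimate then gives only $|I| \lesssim \NF/\sqrt{|B|\,|\Delta|} = \bigO(\NF^{-1/2})$, i.e.\ $G=\bigO(\NF^{-1})$ off the ridge. That is the on-ridge order, so neither the $4\sqrt{\pi}\,|\Delta|^{-1/2}$ term nor the ``only if'' half of (a) follows. Your fallback case split on $|B|\gtrsim\NF$ does not rescue this, because the failure occurs exactly at $|B|=\Theta(\NF)$.

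\textbf{The missing idea is to keep the oscillatory edge terms out of van der Corput altogether.} Write $W(u) = \sqrt{\pi/|B|}\,e^{j(\pi/4)\operatorname{sgn}B}\,\mathbf{1}[v_0(u)\in(-\tfrac12,\tfrac12)] + R(u)$.
\begin{itemize}
\item The first part is a constant on a sub-interval of $u$. The plain second-derivative bound $|\int_J e^{j(bs^2+\beta s)}ds|\le 4/\sqrt{|b|}$ with $b=\Delta/B$ gives exactly $\sqrt{\pi/|B|}\cdot 4\sqrt{|B|/|\Delta|}=4\sqrt{\pi}\,|\Delta|^{-1/2}$, uniformly in the offsets.
\item Bound the remainder in absolute value, not by oscillation. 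One integration by parts gives $|R(u)|\lesssim\min\bigl(|B|^{-1/2},\,1/(|B|\,\mathrm{dist}(v_0(u),\pm\tfrac12))\bigr)$. The change of variables $du=(|B|/\pi\NF)\,dv_0$ then turns $\int|R|\,du$ into a harmonic integral of size $\bigO(\NF^{-1}\ln\NF)$, which is the $\softO(\NF^{-1})$ term.
\end{itemize}
This is the paper's argument; it completes the square in $u$ rather than $v$, which is immaterial.

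\textbf{A lesser point on (a).} With nonzero offsets, $au+bv$ need not be a function of $s$ alone. Its component along the flat direction is $\Theta(\NF)$ for fixed normalized offsets, and it makes the transverse integral $\bigO(\NF^{-1})$. So your ``if'' direction needs the two foci to coincide transversally ($a=\tau b$), or zero offsets as in the paper's ridge computation.
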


The proof is given in Appendix~D of \suppl. The condition in
(a) is the confocality condition of Boyd--Kogelnik resonator
theory~\cite{BoydKogelnik1962}: the two foci coincide at a common
point, possibly virtual; steering satisfies it with focus at
infinity. The parameter $\tau$ in (b) tracks the position of the
common focal point along the axis. The case $r_{\mathrm{t}} =
r_{\mathrm{r}} = D$ in (c) is focusing: each aperture
focuses on the other's location, two distinct points. The structural
reason for the failure of ``focus on the user'' is thus geometric:
the prescription is not confocal, so the joint phase admits no
stationary direction, and the
contributions of the two apertures can never be made to add along
any line of the joint aperture. Figure~\ref{fig:map} draws the map:
every scheme of the family is a point in the $(A,B)$ plane, all the
good ones sit on one hyperbola, and the prescription of the XL-MIMO
literature is the one popular point off it, with an offset
$\pi^{2}\NF^{2}$ that grows with the depth in the near field.

\begin{figure}[!tb]
\centering
\begin{tikzpicture}[scale=0.62, >=stealth]
\draw[->] (-3.5,0) -- (3.7,0)
  node[right,font=\scriptsize] {$A/\pi\NF$};
\draw[->] (0,-3.5) -- (0,3.7)
  node[above,font=\scriptsize] {$B/\pi\NF$};
\foreach \v in {1,-1} {
  \draw ($(\v,0)+(0,-0.07)$) -- ($(\v,0)+(0,0.07)$);
  \draw ($(0,\v)+(-0.07,0)$) -- ($(0,\v)+(0.07,0)$);
}
\node[font=\tiny,below] at (1,-0.09) {$1$};
\node[font=\tiny,below] at (-1,-0.09) {$-1$};
\draw[domain=0.30:3.3,smooth,very thick,green!55!black]
  plot (\x,{1/\x});
\draw[domain=-3.3:-0.30,smooth,very thick,green!55!black]
  plot (\x,{1/\x});
\node[font=\scriptsize,green!55!black,anchor=west] at (1.62,2.05)
  {confocal ridge $AB=\pi^{2}\NF^{2}$};
\fill[blue] (1,1) circle (2.6pt);
\node[font=\scriptsize,blue,anchor=south west] at (1.06,1.06)
  {$P_{+}$ (con-focal)};
\draw[blue,thick] (-1,-1) circle (2.6pt);
\node[font=\scriptsize,blue,anchor=north east] at (-1.06,-1.06)
  {$P_{-}$ (conjugate)};
\fill[black] (-3.07,-0.41) rectangle ++(0.15,0.15);
\node[font=\scriptsize,anchor=north] at (-2.95,-0.55)
  {FF--FF};
\node[red,font=\small] at (0,0) {$\boldsymbol\times$};
\node[font=\scriptsize,red,anchor=north west] at (-3,1.26)
  {focus on user};
\draw[->,gray] (0.16,0.16) -- (0.86,0.86);
\node[font=\scriptsize,gray,anchor=west] at (0.55,-1.15)
  {$G\!\sim\!|AB-\pi^{2}\NF^{2}|^{-1}$ off the ridge};
\end{tikzpicture}
\caption{The design map in the curvature coordinates \eqref{eq:AB},
drawn for the example geometry of Fig.~\ref{fig:law}
($\rho = 1/3$; axes in units of $\pi\NF$). Every scheme of the
focusing family is a point; all order-optimal pairs lie on the
confocal ridge (Theorem~\ref{thm:landscape}), where the two ends share a
focal point, real (upper branch) or virtual (lower branch); the
power-gain constant $\min(\tau,1/\tau)$ along it is maximized
at the similitude centers $P_{\pm}$, $A = B = \pm\pi\NF$
(Theorem~\ref{thm:confocus}). Steering is on the ridge, with
constant $\rho$; the con-focusing sweep of
Fig.~\ref{fig:sweep} walks along this hyperbola through $P_{-}$ and
$P_{+}$. ``Focus on the user'' is the origin: the one popular point
off the ridge.}
\label{fig:map}
\end{figure}

\begin{theorem}[Optimality of the con-focusing law]\label{thm:confocus}
Let the \emph{con-focusing pair} be the confocal pair focused at the
\emph{con-focal point}, the common point from which the two apertures
subtend equal angles,
\begin{equation}\label{eq:law}
r_{\mathrm{t}}^{\star} = \frac{D}{1+\rho},\qquad
r_{\mathrm{r}}^{\star} = \frac{\rho D}{1+\rho},
\end{equation}
i.e., normalized curvatures $A^{\star} = B^{\star} = \pi\NF$. For
$\NF \to \infty$ at fixed $\rho$:
\begin{enumerate}
\item[(i)] $\gcf = \NF^{-1}(1 + \bigO(\NF^{-1/2}\ln\NF))$;
\item[(ii)] every rank-one scheme, amplitude-tapered and
channel-aware included, satisfies
$G \le \geig = \NF^{-1}(1 + \bigO(\sqrt{\NF}\,e^{-\pi\NF}))$ for
continuous apertures;
\item[(iii)] $\gff/\gcf \to \rho$ and $\gfoc/\gcf \to \NF^{-1}$;
\item[(iv)] the conjugate pair $r_{\mathrm{t}} = D/(1-\rho)$,
$r_{\mathrm{r}} = -\rho D/(1-\rho)$ (curvatures
$A = B = -\pi\NF$) attains the same gain; for $\rho = 1$ it is steering.
\end{enumerate}
\end{theorem}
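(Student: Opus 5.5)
The plan is to reduce every claim to a one-dimensional Fresnel integral of a triangle, as in the proof of Theorem~\ref{thm:closedform}, and to borrow the eigenbound from prolate spheroidal theory.

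\emph{Curvatures of the two pairs.} Substituting \eqref{eq:law} into \eqref{eq:AB} gives $D/r_{\mathrm{t}}^{\star}-1=\rho$ and $D/r_{\mathrm{r}}^{\star}-1=1/\rho$. Hence
\[
A^{\star}=\pi N_{\mathrm{t}}\rho=\pi\NF, \qquad B^{\star}=\pi N_{\mathrm{r}}/\rho=\pi\NF,
\]
using $N_{\mathrm{t}}\rho=N_{\mathrm{r}}/\rho=\NF$. The same substitution for the conjugate pair of (iv) gives $D/r_{\mathrm{t}}-1=-\rho$ and $D/r_{\mathrm{r}}-1=-1/\rho$, so $A=B=-\pi\NF$. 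In either case \eqref{eq:Phi} collapses to a perfect square,
\[
\Phi=\pm\pi\NF\,(u\pm v)^{2},
\]
in the normalized coordinates $u,v\in[-1/2,1/2]$.

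\emph{Proof of (i).} Set $s=u+v$. The overlap of two unit indicators is the triangle $T(s)=(1-|s|)_{+}$ on $[-1,1]$, exactly as in \eqref{eq:overlap}. This gives
\[
I=\int_{-1}^{1}(1-|s|)\,e^{j\pi\NF s^{2}}\,ds,
\]
which is the complex conjugate of $\varphi(\NF)$ in \eqref{eq:tri} with $c=1$ and $\nu=\NF$. I would split it into the two pieces $\int e^{j\pi\NF s^{2}}ds$ and $\int|s|e^{j\pi\NF s^{2}}ds$.
\begin{itemize}
\item The first piece is $\sqrt{2/\NF}\,F(\sqrt{2\NF})^{*}$. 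By the bound $|F(z)-(1-j)/2|\le 2/(\pi z)$ it equals $\NF^{-1/2}e^{j\pi/4}\bigl(1+\bigO(\NF^{-1/2})\bigr)$.
\item The second piece is elementary and equals $\bigO(\NF^{-1})$.
\end{itemize}
Squaring the modulus gives $\gcf=\NF^{-1}\bigl(1+\bigO(\NF^{-1/2})\bigr)$, which is at least as strong as (i). I would keep the $\ln\NF$ slack to absorb the discrete and paraxial remainders of Proposition~\ref{prop:amplitude}.

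\emph{Proof of (iv).} For $A=B=-\pi\NF$ the phase is $-\pi\NF(u-v)^{2}$. The substitution $s=u-v$ produces the same triangle, so $I$ is replaced by its conjugate and $|I|$, hence the gain, is unchanged. At $\rho=1$ the conjugate depths are $r_{\mathrm{t}}=D/0=\infty$ and $r_{\mathrm{r}}=-\infty$, both infinite, which is steering.

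\emph{Proof of (iii).} The steering ratio follows from (i) together with Theorem~\ref{thm:regions}(c): dividing $\gff=\rho/\NF$ by $\gcf=1/\NF$ gives $\rho$. The focusing ratio follows the same way from $\gfoc=\NF^{-2}$.

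\emph{Proof of (ii), the main obstacle.} Any rank-one scheme, including amplitude-tapered and channel-aware ones, obeys $G\le\sigma_{\max}^{2}$ of the normalized kernel operator on $L^{2}[-1/2,1/2]$. The weights in \eqref{eq:weights} are unit-modulus, so the self-curvature factors $e^{jAu^{2}}$ and $e^{jBv^{2}}$ are unitary multiplications and do not change singular values. It therefore suffices to bound the finite Fourier operator with kernel $e^{j2\pi\NF uv}$. Its adjoint composition is the time- and band-limiting (sinc-kernel) operator, so $\sigma_{\max}^{2}=\lambda_{0}(c)/\NF$, where $\lambda_{0}(c)$ is Slepian's top concentration eigenvalue with $c=\pi\NF/2$.

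The delicate step is the exponential tail. I would invoke Slepian's asymptotic (equivalently, Fuchs' result),
\[
1-\lambda_{0}(c)\sim 4\sqrt{\pi c}\,e^{-2c},
\]
which with $c=\pi\NF/2$ gives $\geig=\NF^{-1}\bigl(1-\bigO(\sqrt{\NF}\,e^{-\pi\NF})\bigr)$. The normalization of $c$ must be checked carefully so that the exponent comes out as $\pi\NF$ rather than $2\pi\NF$. The remaining point is that the continuous-aperture operator is the correct object, which follows from Proposition~\ref{prop:amplitude}(a).
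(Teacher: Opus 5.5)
Your proposal is correct and follows essentially the paper's route: (i) is the paper's ridge reduction specialized to $\tau=1$, where the trapezoidal window becomes the triangle and $\gcf=|\varphi(\NF)|^{2}$ exactly, so your sharper $\bigO(\NF^{-1/2})$ remainder is valid and the paper's $\ln\NF$ is only slack. Likewise, (ii) is the same SVD-plus-prolate bound with $c=\pi\NF/2$ (your normalization check is right, giving $e^{-2c}=e^{-\pi\NF}$), (iii) reads off Theorem~\ref{thm:regions}(c), and (iv) is the conjugation symmetry $G(A,B)=G(-A,-B)$; one wording fix in (ii): the unitary multiplications you strip off are the channel's own self-curvature factors $e^{-j\pi N_{\mathrm{t}}u^{2}}$ and $e^{-j\pi N_{\mathrm{r}}v^{2}}$, not a property of the weights, since tapered and channel-aware weights are arbitrary unit-norm functions.
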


The proof is in Appendix~D of \suppl. Parts (i)--(ii) say the
con-focusing pair \eqref{eq:law} attains the rank-one eigenbound in both
order and leading constant: amplitude taper and channel knowledge are
asymptotically free, the relative margin being the
$\bigO(\NF^{-1/2}\ln\NF)$ of part~(i). Parts (i) and (ii) are
continuous-aperture statements; a discrete array adds the same
$\bigO(1/N) + \bigO(1/M)$ to both $\gcf$ and $\geig$
(Appendix~C of \suppl), leaving the ratio (and the
leading-constant-one conclusion) unchanged. Part (iii) places the
two fixed schemes against that bound: steering a constant factor
$\rho$ below it, $\gff/\gcf \to \rho$, and focusing a
further decade per decade of Fresnel number, $\gfoc/\gcf \to
\NF^{-1}$. Both are read off \eqref{eq:deepNF}. The conjugate pair
(iv) focuses both apertures on the external similitude center beyond
the receiver and attains the same gain; for $\rho = 1$ it is plain
steering. Remark~\ref{rem:optics} gives the geometric reading.

\begin{remark}\label{rem:optics}
The law \eqref{eq:law} has an elementary reading in geometric optics
(Fig.~\ref{fig:law}). An aperture of length $L_{\mathrm{t}}$ focused
at depth $r_{\mathrm{t}}$ produces, at range $D$, a beam of width
$W = L_{\mathrm{t}}\,|D/r_{\mathrm{t}} - 1|$. The condition
$W = L_{\mathrm{r}}$ (the beam fills the receive aperture
exactly) has two solutions, the two similitude centers of the
aperture pair: they are \eqref{eq:law} and its conjugate
(Theorem~\ref{thm:confocus}(iv)). The parameter $\tau$ of
Theorem~\ref{thm:landscape}(b) is the footprint ratio
$\tau = W/L_{\mathrm{r}}$, and the constant $\min(\tau, 1/\tau)$ is
a power count: for $W > L_{\mathrm{r}}$ the receiver intercepts only
the fraction $1/\tau$ of the beam; for $W < L_{\mathrm{r}}$ the beam
illuminates only the fraction $\tau$ of the receive elements. Steering has $W = L_{\mathrm{t}}$, hence $\tau = 1/\rho$: this is
the factor $\rho$ of Theorem~\ref{thm:regions}(c). Focusing shrinks
$W$ to the diffraction spot
$\lambda D/L_{\mathrm{t}} \ll L_{\mathrm{r}}$, and almost the entire
receive aperture sits idle. Leaving the confocal set costs fast
(Theorem~\ref{thm:landscape}(c)); moving along it costs only the
footprint mismatch. Figures~\ref{fig:law}, \ref{fig:map},
and~\ref{fig:sweep} show one example ($\rho = 1/3$, $\NF = 10$): in
Fig.~\ref{fig:sweep} the two peaks are the similitude centers, and
the valley between them is the user.
\end{remark}

\begin{figure}[!tb]
\centering
\includegraphics[width=\columnwidth]{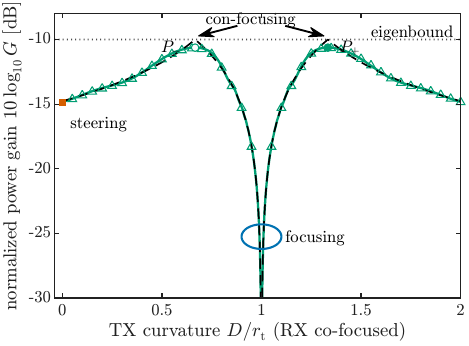}
\caption{Power gain along the con-focusing family ($\rho = 1/3$,
$\NF = 10$, the geometry of Fig.~\ref{fig:law}) as the common focus
moves along the axis, parametrized by the TX curvature
$D/r_{\mathrm{t}}$ (steering at $0$, the user at $1$). The similitude
centers $P_{\mp}$ at $1 \mp \rho$ are the two peaks (con-focusing),
riding the eigenbound; the user is the valley between them, where the
family collapses to the focusing null (ringed). Solid line: exact
one-dimensional ridge reduction (Appendix~D of \suppl); dashed:
crest envelope $\min(\tau,1/\tau)/\NF$; markers: exact spherical
channel, no Fresnel approximation.}
\label{fig:sweep}
\end{figure}

\begin{remark}\label{rem:semantics}
The optimal pair \eqref{eq:law} is itself a focused profile, so the
title deserves precision: what fails is the choice of focal point, not
the focusing operation. ``Focus on the user'' places the two foci on
the opposite arrays, $r_{\mathrm{t}} = r_{\mathrm{r}} = D$; they are
distinct, the pair is not confocal, and Theorem~\ref{thm:landscape}(c)
applies. The optimum instead places both foci on one shared point,
which is never the opposite array: plain steering for $\rho = 1$, a
mild curvature correction toward a virtual focus for $\rho < 1$, and
the user itself only as $\rho \to 0$, where it recovers the MISO
prescription. The near-field error of the XL-MIMO literature is thus
carrying the MISO focal point into a MIMO geometry.
\end{remark}

\begin{remark}\label{rem:pp}
For the LoS channel \eqref{eq:channel} the matrix $\bH$ is a
deterministic function of the link geometry, so channel knowledge
carries no information that geometry does not already contain;
Theorem~\ref{thm:confocus} makes the comparison quantitative. Every
rank-one scheme, channel-aware included, is bounded by $\geig$, and
the con-focusing pair attains $\geig$ with leading constant one from
three geometric quantities ($D$, $L_{\mathrm{t}}$,
$L_{\mathrm{r}}$): whatever full CSI can add vanishes as $\NF$
grows. Moreover, the standard way of spending CSI under the
unit-modulus constraint (keep the per-element phases of the
dominant singular vectors of $\bH$, set all amplitudes to
$1/\sqrt{N}$ and $1/\sqrt{M}$) falls below plain steering
beyond the crossover. The reason is that the singular
vectors carry a phase profile and an amplitude taper,
asymptotically the prolate taper of
Borgiotti~\cite{Borgiotti1966}; the projection keeps the phases and
discards the taper, and the taper is worth more. Both statements
are specific to the LoS model: in a multipath channel $\bH$ carries
information beyond geometry, and CSI recovers its usual role.
\end{remark}

In practice the law \eqref{eq:law} is built from a range estimate,
and its robustness is governed by the joint-aperture Fresnel number
$\nu_{+}$ of \eqref{eq:NF}: the larger $\nu_{+}$, the faster a range
error detunes the pair off the confocal set.

\begin{lemma}[Depth of focus]\label{lem:dof}
Let the con-focusing pair be built at the estimated range
$\widehat D = (1+\varepsilon)D$. There is an absolute constant
$c_{1} > 0$ such that the pair stays within $3$~dB of optimum for
$|\varepsilon| \le c_{1}/\nu_{+}$, and falls more than $3$~dB below
it for $|\varepsilon| \ge 16/(\pi\nu_{+})$. Equivalently, the
tolerated absolute ranging error is
\begin{equation}\label{eq:dof}
|\Delta D| \;=\; \Theta\!\left(\lambda
\Bigl(\frac{D}{L_{\mathrm{t}}+L_{\mathrm{r}}}\Bigr)^{2}\right),
\end{equation}
the classical depth of focus of an aperture of size
$L_{\mathrm{t}}+L_{\mathrm{r}}$ at f-number
$D/(L_{\mathrm{t}}+L_{\mathrm{r}})$.
\end{lemma}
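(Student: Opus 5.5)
We work in the curvature coordinates \eqref{eq:AB} and track how a range error moves the con-focusing pair off the confocal locus; Lemma~\ref{lem:delta} and Theorem~\ref{thm:landscape} then convert that displacement into a gain loss. The pair is built from $\widehat D$, so $r_{\mathrm{t}} = \widehat D/(1+\rho)$ and $r_{\mathrm{r}} = \rho\widehat D/(1+\rho)$. Substituting into \eqref{eq:AB} gives
\[
A = \pi N_{\mathrm{t}}\Bigl(\frac{1+\rho}{1+\varepsilon}-1\Bigr),
\qquad
B = \pi N_{\mathrm{r}}\Bigl(\frac{1+\rho}{\rho(1+\varepsilon)}-1\Bigr).
\]
To first order this is $A = \pi\NF - \pi N_{\mathrm{t}}(1+\rho)\varepsilon$ and $B = \pi\NF - \pi N_{\mathrm{r}}(1+\rho)\varepsilon/\rho$. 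The key step is the exact expression for the determinant. Using $N_{\mathrm{t}} = \NF/\rho$ and $N_{\mathrm{r}} = \rho\NF$, one finds
\[
\Delta = AB - \pi^{2}\NF^{2}
= -\pi^{2}\NF^{2}\,\frac{\varepsilon\,(1+\rho)^{2}/\rho}{(1+\varepsilon)^{2}}\bigl(1+\bigO(\varepsilon)\bigr).
\]
Since $\NF(1+\rho)^{2}/\rho = 4\nu_{+}$, this reads $|\Delta| \asymp \pi^{2}\NF\cdot 4\nu_{+}|\varepsilon|$. This identity is what shows that $\nu_{+}$, rather than $\NF$, sets the scale of the tolerance.

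\emph{Upper threshold (loss beyond 3 dB).} We invoke Theorem~\ref{thm:landscape}(c): $G \le \bigl[4\sqrt{\pi}\,|\Delta|^{-1/2} + \softO(\NF^{-1})\bigr]^{2}$. Substituting $|\Delta| \gtrsim 4\pi^{2}\NF\nu_{+}|\varepsilon|$ gives $G \lesssim 16\pi/(4\pi^{2}\nu_{+}|\varepsilon|\,\NF) = 4/(\pi\nu_{+}|\varepsilon|\NF)$. By Theorem~\ref{thm:confocus}(i), the optimum is $\gcf \approx \NF^{-1}$. The ratio is therefore below $1/2$ once $|\varepsilon| \ge 8/(\pi\nu_{+})$. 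The stated constant $16/(\pi\nu_{+})$ leaves room to absorb the factor $(1+\varepsilon)^{2}$ and the $\softO$ remainder. We will have to check the first-order linearisation carefully, and keep exact $\varepsilon$-dependence where $\varepsilon$ is not small, i.e. when $\nu_{+} = \bigO(1)$.

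\emph{Lower threshold (within 3 dB).} Here the decay law does not apply, so we need a perturbative lower bound near the locus. We write $\Phi = \Phi^{\star} + \delta\Phi$, where $\Phi^{\star}$ is the con-focusing phase with $A^{\star} = B^{\star} = \pi\NF$. The perturbation is $\delta\Phi = \delta A\,u^{2} + \delta B\,v^{2}$, with $|\delta A|, |\delta B| \lesssim \pi(N_{\mathrm{t}} + N_{\mathrm{r}}/\rho)|\varepsilon|$.

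A naive bound $|I - I^{\star}| \le \sup|\delta\Phi|$ is too crude, because $\delta A$ scales like $N_{\mathrm{t}}$ and not like $\nu_{+}/\NF$. The better route is to change variables to the principal axes of $\Omega^{\star}$. Under $\Phi^{\star}$ the integrand is flat along the null direction; along the orthogonal direction the integral localises to a strip of width $\Theta(\NF^{-1/2})$, and this strip yields the $\NF^{-1}$ gain. The perturbation has two components. Its component along the null direction is what matters, and it has size $\asymp |\Delta|/\NF \asymp \nu_{+}|\varepsilon|$. Its component along the steep direction merely shifts the stationary strip and costs only a relative $\bigO(\NF^{-1/2})$. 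So after the reduction to the one-dimensional ridge integral (Appendix~D's ridge reduction), the gain becomes $\NF^{-1}\,\bigl|\int_{-c}^{c} e^{j\kappa s^{2}}\,ds/(2c)\bigr|^{2}$ with $\kappa \asymp \nu_{+}\varepsilon$. This quantity stays above half its value at $\kappa = 0$ for $|\kappa| \le$ an absolute constant, which yields $c_{1}$.

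Finally, \eqref{eq:dof} follows from $|\Delta D| = |\varepsilon|D$ and $\nu_{+} = (L_{\mathrm{t}}+L_{\mathrm{r}})^{2}/(4\lambda D)$, so both thresholds equal $\Theta\bigl(\lambda D^{2}/(L_{\mathrm{t}}+L_{\mathrm{r}})^{2}\bigr)$.

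\textbf{Main obstacle.} The hardest step is making the ridge reduction uniform in $\varepsilon$: we must show that the transverse strip does not leak gain, and that the null-direction detuning is exactly governed by $\nu_{+}\varepsilon$ rather than by $N_{\mathrm{t}}\varepsilon$. We will attempt this through the explicit diagonalisation of $\Omega$ and a Fresnel-integral bound of the type in \eqref{eq:dnuphi}.
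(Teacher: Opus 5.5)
Your proposal follows essentially the paper's proof. The shared steps are the detuned curvatures, the discriminant $|AB-\pi^{2}\NF^{2}|\approx 4\pi^{2}\NF\nu_{+}|\varepsilon|$, the off-ridge bound of Theorem~\ref{thm:landscape}(c) for the outer threshold, and the principal-axis reduction in which only the null-direction curvature $-2\pi\nu_{+}\varepsilon$ matters for the inner one; two details are mild sharpenings: your determinant is exact, and you read 3~dB as $G\le\tfrac12\NF^{-1}$, which gives $8/(\pi\nu_{+})$ where the paper reaches $16/(\pi\nu_{+})$ via $G\le\tfrac14\NF^{-1}$. One caveat is shared with the paper's linearized argument: your exact $|\Delta|=4\pi^{2}\NF\nu_{+}|\varepsilon|/(1+\varepsilon)^{2}$ tends to $0$ as $\varepsilon\to+\infty$ (the pair tends to steering, which lies on the ridge with gain $\rho/\NF$), so the factor $(1+\varepsilon)^{2}$ is not absorbed by the slack in general and the bound forces the loss only while $|\varepsilon|/(1+\varepsilon)^{2}\ge 8/(\pi\nu_{+})$, i.e.\ for all negative $\varepsilon$ but only bounded positive $\varepsilon$, and for $\rho>\tfrac12$ the outer claim genuinely fails as $\varepsilon\to+\infty$.
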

The $3$-dB tolerance \eqref{eq:dof} is thus $\Theta(1/\nu_{+})$:
relative ranging accuracy must scale as the inverse joint-aperture
Fresnel number, while the absolute tolerance \emph{grows} with link
distance as $D^{2}$. Evaluating the closed-form gain at the perturbed ridge gives the sharp
constant $\varepsilon_{3\mathrm{dB}} \approx c_{0}/\nu_{+}$ with
$c_{0} \approx 3/\pi$, inside the analytic band derived in
Appendix~D of \suppl.

\begin{remark}[Which con-focal point: robustness under range drift]
\label{rem:robust}
The two similitude centers of Theorem~\ref{thm:confocus} share the
static optimum and the depth of focus of Lemma~\ref{lem:dof}, but their
\emph{off-optimum} landscapes differ. The focusing valley ($A = B = 0$)
lies between them in curvature: the internal center $P_{+}$
($A = B = \pi\NF$) sits on its high-curvature side, the external center
$P_{-}$ ($A = B = -\pi\NF$) on the steering side. A shrinking range
lowers both curvatures by the detuning (Appendix~D of \suppl), driving
$P_{+}$ toward $A = B = 0$: the pair enters the focusing valley, where
the off-ridge bound (Appendix~D of \suppl) forces the collapse. The same
detuning moves $P_{-}$ toward the steering point on the ridge, which is
itself order-optimal (Appendix~D of \suppl, at $\tau = 1/\rho$), so it
degrades only gently; a growing range is benign for both. The external
center thus has no catastrophic direction and is the robust operating
point under mobility, which is why the acquisition of
Section~\ref{sec:acquisition} targets it.
\end{remark}

\subsection{Beyond Boresight: Tilt and Roll}
\label{ssec:rotations}

The boresight assumption can be relaxed at no conceptual cost.
Relative to the line of sight (LoS), an array's orientation is set by
three rotations (Fig.~\ref{fig:angles}): two \emph{tilts} (pitch and
yaw, about the two transverse axes) and a \emph{roll} about the LoS
axis itself. The two tilts shorten the array's projection onto the
broadside plane, replacing the apertures by
$L_{\mathrm{t}}\cos\theta_{\mathrm{t}}$ and $L_{\mathrm{r}}
\cos\theta_{\mathrm{r}}$; as Corollary~\ref{cor:rotations} shows, that
projection is all that matters, and the boresight theory carries over
verbatim. The roll rotates the array within the transverse plane. For
a planar aperture it is benign (Remark~\ref{rem:planar}); for a
\emph{linear} array it sets the angle $\phi$ between the two array
axes, and that angle reshapes the focusing--steering comparison
(Proposition~\ref{prop:roll}). We take the tilts first, then the
roll.

\begin{figure}[!tb]
\centering
\subfloat[3D setup]{\includegraphics[height=1.1in]{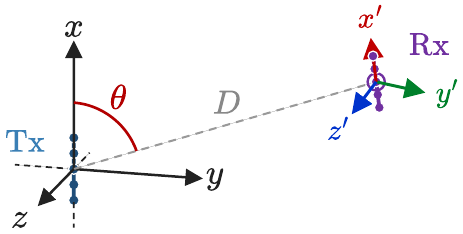}}\hfil
\subfloat[array rotations]{\includegraphics[height=1.1in]{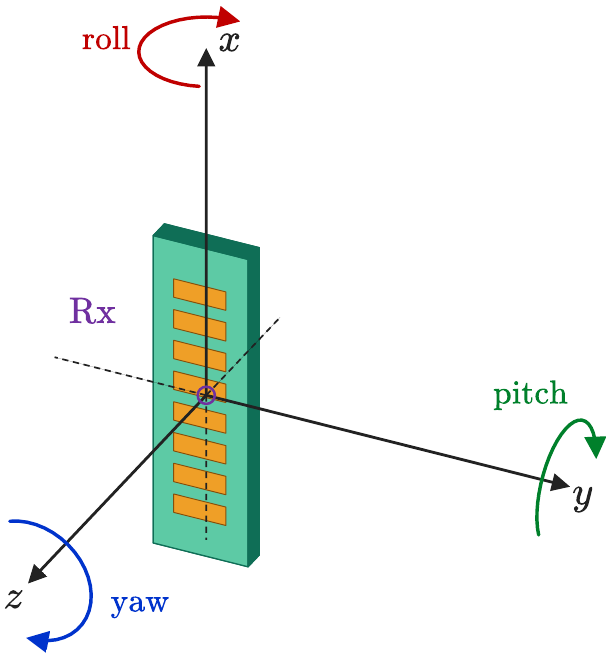}}
\caption{Off-boresight geometry. (a)~The transmit and receive uniform
linear arrays at separation $D$, with the global frame $(x,y,z)$ and
the rotated receive frame $(x',y',z')$; $\theta$ is the array angle to
the link axis. (b)~The three rigid rotations of the receive array
(roll, pitch, yaw). The two tilts about axes transverse to the line of
sight only rescale the broadside projections
$L_{\mathrm{t}}\cos\theta_{\mathrm{t}}$,
$L_{\mathrm{r}}\cos\theta_{\mathrm{r}}$ that enter the gains, so the
boresight theory applies verbatim under the map \eqref{eq:angmap};
the rotation about the line of sight turns the receive axis by $\phi$
relative to the transmit axis and, for linear arrays, weights the
cross-coupling by $\cos\phi$, which knocks steering off the confocal
ridge while the con-focusing law follows the geometry (benign for a
planar aperture, Remark~\ref{rem:planar}).}
\label{fig:angles}
\end{figure}

Let $\hat{\mathbf{z}}$ be the link axis
(transmit to receive centre) and $P = I - \hat{\mathbf{z}}
\hat{\mathbf{z}}^{\top}$ the projector onto the broadside plane. The
transmit array axis is a unit vector $\hat{\mathbf{a}}_{\mathrm{t}}$,
with tilt $\theta_{\mathrm{t}}$ defined by $\cos\theta_{\mathrm{t}} =
\|P\hat{\mathbf{a}}_{\mathrm{t}}\|$ and broadside projection
$\hat{\mathbf{e}}_{\mathrm{t}} = P\hat{\mathbf{a}}_{\mathrm{t}}/
\cos\theta_{\mathrm{t}}$; likewise $\hat{\mathbf{a}}_{\mathrm{r}},
\theta_{\mathrm{r}}, \hat{\mathbf{e}}_{\mathrm{r}}$ at the receiver. The
roll is the angle between the projected axes, $\cos\phi =
\hat{\mathbf{e}}_{\mathrm{t}}\cdot\hat{\mathbf{e}}_{\mathrm{r}}$. A
transmit element at array coordinate $x$ sits at
$x\hat{\mathbf{a}}_{\mathrm{t}}$ (longitudinal component
$x\sin\theta_{\mathrm{t}}$ along $\hat{\mathbf{z}}$ and transverse
offset $\tilde x\hat{\mathbf{e}}_{\mathrm{t}}$ with $\tilde x =
x\cos\theta_{\mathrm{t}}$), and likewise a receive element at $y$ has
longitudinal $y\sin\theta_{\mathrm{r}}$ and transverse
$\tilde y\hat{\mathbf{e}}_{\mathrm{r}}$, $\tilde y =
y\cos\theta_{\mathrm{r}}$. As $\hat{\mathbf{z}}$ is orthogonal to the
broadside plane, the element separation splits as $d^{2} = (D+a)^{2} +
c^{2}$, with longitudinal offset $a = y\sin\theta_{\mathrm{r}} -
x\sin\theta_{\mathrm{t}}$ and transverse part
\begin{equation}\label{eq:lawcos}
c^{2} = \|\tilde x\hat{\mathbf{e}}_{\mathrm{t}} -
\tilde y\hat{\mathbf{e}}_{\mathrm{r}}\|^{2}
= \tilde x^{2} + \tilde y^{2} - 2\,\tilde x\,\tilde y\cos\phi,
\end{equation}
the only place the roll enters. The paraxial expansion $d = D + a +
c^{2}/(2D) + \varepsilon_{3}$ holds under the cubic (coma-type)
condition quantified in Appendix~E of \suppl.
Steering conjugates the linear term $a$; focusing
conjugates in addition the separable curvatures $\tilde x^{2}$,
$\tilde y^{2}$. Two residual kernels survive in the projected
coordinates,
\begin{equation}\label{eq:twistkernels}
\underbrace{e^{\,jk\cos\phi\,\tilde x\tilde y/D}}_{\text{focusing}},
\qquad
\underbrace{e^{-jk(\tilde x^{2}+\tilde y^{2}
-2\cos\phi\,\tilde x\tilde y)/(2D)}}_{\text{steering}},
\end{equation}
which are exactly the boresight kernels \eqref{eq:gfoc-sum}--%
\eqref{eq:gff-sum} with two changes: the apertures are the projections
$L_{\mathrm{t}}\cos\theta_{\mathrm{t}}$, $L_{\mathrm{r}}
\cos\theta_{\mathrm{r}}$, and every cross term carries the factor
$\cos\phi$. Writing $\NFe = \NF\cos\theta_{\mathrm{t}}
\cos\theta_{\mathrm{r}}$, the two cases $\phi = 0$ and $\phi \neq 0$
separate cleanly.

\begin{corollary}[Covariance under coplanar tilt]\label{cor:rotations}
Let $\phi = 0$. Then \eqref{eq:twistkernels} are the boresight kernels
verbatim in the projected coordinates, so
Theorems~\ref{thm:closedform}--\ref{thm:confocus} and
Corollary~\ref{cor:map} hold unchanged under
\begin{equation}\label{eq:angmap}
(\NF, \rho) \;\longmapsto\;
\Bigl(\NFe,\;
\rho\,\tfrac{\cos\theta_{\mathrm{r}}}{\cos\theta_{\mathrm{t}}}\Bigr),
\qquad
\NFe = \NF\cos\theta_{\mathrm{t}}\cos\theta_{\mathrm{r}}.
\end{equation}
In particular the crossover moves to $\NFs(\theta) =
1.947/(\cos\theta_{\mathrm{t}}\cos\theta_{\mathrm{r}})$: tilt only
rescales the geometry, and the crossover, the asymptotic laws, and the
con-focusing law of this section carry over unchanged.
\end{corollary}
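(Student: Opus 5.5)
The plan is to show that, at $\phi = 0$, the tilted link produces exactly the boresight field integral \eqref{eq:gAB} with the apertures replaced by their broadside projections. Every earlier result then transfers by relabeling the parameters.

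First, the phase. With $\phi = 0$, \eqref{eq:lawcos} becomes $c^{2} = (\tilde x - \tilde y)^{2}$. The paraxial path is therefore
\[
d = D + a + \frac{\tilde x^{2}}{2D} + \frac{\tilde y^{2}}{2D} - \frac{\tilde x\tilde y}{D} + \varepsilon_{3},
\]
which is \eqref{eq:fresnel} with $(x,y) \mapsto (\tilde x, \tilde y)$, plus the term $a$ that is linear in $x$ and in $y$. The longitudinal offset $a = y\sin\theta_{\mathrm{r}} - x\sin\theta_{\mathrm{t}}$ is separable and linear. Every beamformer of the family (steering, focusing, con-focusing) conjugates it exactly, since it is a per-element phase. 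The focused weights \eqref{eq:weights}, rebuilt in the projected coordinates, i.e.\ $e^{jk\sqrt{r_{\mathrm{t}}^{2} + \tilde x^{2}}}$ times the conjugate of the linear term, cancel the separable curvatures in the same way as before. The residual phase is then \eqref{eq:Phi} with $u = \tilde x/(L_{\mathrm{t}}\cos\theta_{\mathrm{t}})$ and $v = \tilde y/(L_{\mathrm{r}}\cos\theta_{\mathrm{r}})$, both ranging over $[-\tfrac12,\tfrac12]$. Uniformly spaced elements stay uniformly spaced in $\tilde x$ and $\tilde y$, so the normalized sums converge to the same integral \eqref{eq:gAB}.

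Second, identify the parameters. The cross coefficient is $k\tilde x\tilde y/D = 2\pi\,(L_{\mathrm{t}}\cos\theta_{\mathrm{t}})(L_{\mathrm{r}}\cos\theta_{\mathrm{r}})/(\lambda D)\,uv = 2\pi\NFe uv$. The per-aperture numbers become $N_{\mathrm{t}}\cos^{2}\theta_{\mathrm{t}}$ and $N_{\mathrm{r}}\cos^{2}\theta_{\mathrm{r}}$. The curvatures $A$ and $B$ of \eqref{eq:AB} follow with these numbers, and $\nu_{\pm}$ is rebuilt from the projected lengths. The aperture ratio is the ratio of projected lengths, $\rho\cos\theta_{\mathrm{r}}/\cos\theta_{\mathrm{t}}$. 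If this exceeds one, we swap the roles of the two ends, which the paper already notes is without loss of generality. This gives the map \eqref{eq:angmap}.

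Theorems~\ref{thm:closedform}--\ref{thm:confocus}, Lemma~\ref{lem:delta}, and Corollary~\ref{cor:map} are all statements about \eqref{eq:gAB} as a function of $(\NF,\rho)$ alone. They therefore hold verbatim after substitution. In particular, the unique crossing $\NFe = 1.947$ of Theorem~\ref{thm:regions}(d) becomes $\NF = 1.947/(\cos\theta_{\mathrm{t}}\cos\theta_{\mathrm{r}})$. This requires equal projected apertures, which is the $\rho = 1$ case of the mapped parameter. The con-focusing law \eqref{eq:law} holds with projected depths in the projected frame.

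The main obstacle is the error control. The longitudinal offset $a$ also enters the denominators: $(D+a)^{2} + c^{2}$ expands to $D + a + c^{2}/(2(D+a)) + \dots$. This leaves the cross terms $-a c^{2}/(2D^{2})$, which are cubic and coma-type. They are not conjugated by any rank-one weight. I would bound them by $k(L_{\mathrm{t}}+L_{\mathrm{r}})^{3}/D^{2}$, invoking the cubic condition of Appendix~E as the tilted analogue of Assumption~\ref{ass:paraxial}. A phase perturbation of size $\delta \ll 1$ changes $|I|^{2}$ by a relative amount $\bigO(\delta)$ only when the gain is not small. In the deep near field, where $G \sim \NF^{-1}$, I would instead use the absolute bound $|\Delta I| \le \delta$ and require $\delta \ll \NF^{-1/2}$. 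The asymptotic statements carry over under that strengthened condition.
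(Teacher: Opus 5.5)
Your proposal is correct and is essentially the paper's own argument: at $\phi=0$ the transverse term \eqref{eq:lawcos} collapses to $(\tilde x-\tilde y)^{2}$, the weights absorb the linear offset $a$ and the separable curvatures, and the residual is the boresight form \eqref{eq:Phi} in projected normalized coordinates, with cross coefficient $2\pi\NFe$ and ratio $\rho\cos\theta_{\mathrm{r}}/\cos\theta_{\mathrm{t}}$, so every result transfers under \eqref{eq:angmap} with the coma remainder controlled by the cubic condition of Appendix~E (for the eigenbound, say explicitly that $\bH$ itself factors as unit-modulus diagonals around the projected bilinear kernel, which is the paper's prolate remark). Your additions are sound refinements the paper leaves implicit, namely that the $1.947$ value requires equal \emph{projected} apertures and that a projected ratio above one must be inverted; however, with the crude bound $|\Delta I|\le\delta$ the $\NF^{-2}$ focusing asymptotics would need $\delta\ll\NF^{-1}$, not merely $\delta\ll\NF^{-1/2}$.
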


When the arrays are rolled out of a common plane the $\cos\phi$ in
\eqref{eq:twistkernels} multiplies the cross-coupling but not the self
terms, and this single asymmetry reshapes the comparison. The effect
is specific to the \emph{linear} apertures of our model, for which the
roll sets the angle $\phi$ between the two array lines; as
Remark~\ref{rem:planar} shows, it has no counterpart for planar
arrays.

\begin{proposition}[Roll]\label{prop:roll}
Fix $\theta_{\mathrm{t}}, \theta_{\mathrm{r}}$, write
$\NFe = \NF\cos\theta_{\mathrm{t}}\cos\theta_{\mathrm{r}}$, and let the
roll angle $\phi \in (0, \pi/2]$. To Fresnel order and within the
validity of Appendix~E of \suppl, the gains of the rolled link
satisfy the following.
\begin{enumerate}
\item[(a)] Steering leaves the confocal ridge of the modified kernel
by the discriminant $\pi^{2}(\NFe)^{2}\sin^{2}\phi$, and
$\gff = \softO(\NF^{-2})$.
\item[(b)] $\gfoc = \bigl[\tfrac{2}{\pi\NFe\cos\phi}\,
\Si(\tfrac{\pi\NFe\cos\phi}{2})\bigr]^{2}$ is increasing in $\phi$,
with $\gfoc \to 1$ as $\phi \to \pi/2$; hence $\gfoc - \gff$ has no
zero, and $\gfoc > \gff$ for all $\phi$ beyond a threshold $\phi_{c}$,
with $\phi_{c} \to 45^{\circ}$ as $\NFe \to \infty$.
\item[(c)] The ridge $AB = \pi^{2}(\NFe\cos\phi)^{2}$ is nonempty; its
equal-subtense pair, the symmetric member being
$r_{\mathrm{t}}^{\star} = r_{\mathrm{r}}^{\star} = D/(1+\cos\phi)$,
attains the order-optimal $G = \Theta((\NFe\cos\phi)^{-1})$ for
$\NFe\cos\phi \gg 1$; as $\phi \to \pi/2$ the rolled kernel becomes
separable and $G = \Theta(1)$.
\end{enumerate}
\end{proposition}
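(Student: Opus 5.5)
The plan is to reduce every part to results already proved for the boresight kernel. In the projected coordinates $u=\tilde x/(L_{\mathrm{t}}\cos\theta_{\mathrm{t}})$ and $v=\tilde y/(L_{\mathrm{r}}\cos\theta_{\mathrm{r}})$, the kernels \eqref{eq:twistkernels} make the focused-pair residual phase the quadratic form \eqref{eq:Phi}. The only change is the cross coefficient: $2\pi\NF\,uv$ becomes $2\pi\NFe\cos\phi\,uv$. The self-curvatures keep the form \eqref{eq:AB}, with $N_{\mathrm{t}},N_{\mathrm{r}}$ replaced by their projected values $\tilde N_{\mathrm{t}},\tilde N_{\mathrm{r}}$, and these satisfy $\tilde N_{\mathrm{t}}\tilde N_{\mathrm{r}}=(\NFe)^{2}$. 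The rolled matrix is therefore
\[
\Omega_{\phi}=\begin{pmatrix}A & \pi\NFe\cos\phi\\ \pi\NFe\cos\phi & B\end{pmatrix}.
\]
Throughout I take the paraxial validity of Appendix~E of \suppl{} as given.

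For (a), steering sets $(A,B)=(-\pi\tilde N_{\mathrm{t}},-\pi\tilde N_{\mathrm{r}})$. This gives $\Delta=\pi^{2}(\NFe)^{2}-\pi^{2}(\NFe)^{2}\cos^{2}\phi=\pi^{2}(\NFe)^{2}\sin^{2}\phi$, which is the stated discriminant. The off-ridge bound of Theorem~\ref{thm:landscape}(c), whose proof uses only the quadratic form, then gives $\gff\le[4\sqrt\pi/(\pi\NFe\sin\phi)+\softO(\NF^{-1})]^{2}=\softO(\NF^{-2})$ at fixed angles.

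For the comparison in (b) I also need the sharp constant. I will check the hypothesis of Lemma~\ref{lem:delta}(i). The trace of $\Omega_\phi$ is $-\pi\NFe(\rho'+1/\rho')$, where $\rho'$ is the projected aperture ratio of \eqref{eq:angmap}, and the determinant is $\pi^{2}(\NFe)^{2}\sin^{2}\phi$. Hence $\min_i|\mu_i|\gtrsim \NFe\sin^{2}\phi/(\rho'+1/\rho')\to\infty$, and the lemma yields $\gff\sim\pi^{2}/|\Delta|=1/((\NFe)^{2}\sin^{2}\phi)$.

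For (b), the focusing kernel $e^{\,j2\pi\NFe\cos\phi\,uv}$ is exactly the boresight focusing kernel with $\NF\mapsto\NFe\cos\phi$, so the proof of Theorem~\ref{thm:closedform} gives the $\Si$ closed form verbatim. For monotonicity, write $h(x)=\Si(x)/x$. Then $h'(x)=(\sin x-\Si(x))/x^{2}$. This is negative on $(0,\pi]$ by the Taylor comparison $\Si(x)>\sin x$, and negative for $x>\pi$ because $\Si(x)\ge\Si(2\pi)>1\ge\sin x$. Since $h>0$, $h^{2}$ is decreasing. The argument $\pi\NFe\cos\phi/2$ decreases in $\phi$, so $\gfoc$ increases in $\phi$, and $h(0^{+})=1$ gives $\gfoc\to1$ as $\phi\to\pi/2$. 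Comparing leading terms, $\gfoc\sim 1/(\NFe\cos\phi)^{2}$ by Theorem~\ref{thm:regions}(c) and $\gff\sim 1/(\NFe\sin\phi)^{2}$ from above. Focusing therefore wins once $\cos^{2}\phi<\sin^{2}\phi$ up to vanishing corrections, which gives $\phi_c\to45^{\circ}$. On $(\phi_c,\pi/2]$, $\gfoc$ increases while $\gff$ stays $\softO(\NF^{-2})$, so $\gfoc-\gff$ has no zero there.

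For (c), I apply Theorem~\ref{thm:landscape}(b) and Theorem~\ref{thm:confocus} to $\Omega_\phi$ with $\NF$ replaced by $\NFe\cos\phi$. The ridge $AB=\pi^{2}(\NFe\cos\phi)^{2}$ is reachable with $A=B=\pi\NFe\cos\phi$. For equal projected apertures, \eqref{eq:AB} reads $\pi\NFe(D/r-1)=\pi\NFe\cos\phi$, i.e.\ $r^\star=D/(1+\cos\phi)$, and this pair attains $\Theta((\NFe\cos\phi)^{-1})$. At $\phi=\pi/2$ the form is separable, and $A=B=0$ gives $G=1$.

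The main obstacle is the asymptotic comparison in (b). The edge correction in Lemma~\ref{lem:delta} is only $\bigO(\min_i|\mu_i|^{-1/2})$, and it is not uniform in $\phi$: it degenerates as $\phi\to0$, where steering returns to the ridge. My first attempt would be to evaluate $\gff$ exactly by diagonalizing $\Omega_\phi$ and using products of Fresnel integrals, as in \eqref{eq:tri}, to control the error at fixed $\phi$ near $45^{\circ}$.
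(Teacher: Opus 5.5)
Your treatment of (a), (c), and the closed form in (b) matches the paper's proof: cross coefficient $\pi\NFe\cos\phi$, the steering discriminant fed into the off-ridge bound, and the ridge reduction at $\tau=1$ giving $r^{\star}=D/(1+\cos\phi)$. Your monotonicity argument for $\Si(x)/x$ fills a step the paper only asserts. On $(0,\pi]$, justify $\Si(x)>\sin x$ by $(\Si-\sin)'(x)=\sin x/x-\cos x>0$ rather than by a Taylor comparison, which only works near $0$.

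The genuine difference is how you reach $\phi_{c}\to45^{\circ}$. The paper divides $\gfoc$ by the off-ridge \emph{upper} bound from (a) and reads the angle off $\tan^{2}\phi=1$. You instead verify the hypothesis of Lemma~\ref{lem:delta}(i) through $\min_i|\mu_i|\ge\pi\NFe\sin^{2}\phi/(\rho'+1/\rho')$. You then use its sharp constant $\pi^{2}$, which gives the two-sided law $\gff\sim(\NFe\sin\phi)^{-2}$ against $\gfoc\sim(\NFe\cos\phi)^{-2}$. Your route is the one that actually yields $45^{\circ}$. If the constant of the upper bound is kept, $\gff^{\mathrm{ub}}=16/(\pi(\NFe)^{2}\sin^{2}\phi)$, and that comparison certifies $\gfoc>\gff$ only once $\tan^{2}\phi>16/\pi$, about $66^{\circ}$. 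Worse, the logarithmic remainder of the same bound exceeds $\gfoc^{1/2}$ by a logarithmic factor as $\NFe$ grows. A limiting value of $\phi_{c}$ needs matching upper and lower estimates on $\gff$ with the exact constant, and that is what you supply.

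One step does not go through as written. The claim that on $(\phi_{c},\pi/2]$ ``$\gfoc$ increases while $\gff$ stays $\softO(\NF^{-2})$'' cannot exclude a zero. Just above $\phi_{c}$ the focusing gain is itself $\Theta(\NF^{-2})$, so the two quantities are of the same order. Reuse your sharp asymptotic instead:
\begin{itemize}
\item Your lower bound on $\min_i|\mu_i|$ is uniform for $\phi\in[\phi_{0},\pi/2]$ with $\phi_{0}>0$. Hence $\gff\le(1+\lilo(1))(\NFe\sin\phi)^{-2}$ uniformly there.
\item With $x=\NFe\cos\phi$, one has $\gfoc=x^{-2}(1+\bigO(x^{-1}))$ for large $x$, and $\gfoc=\Theta(1)$ for bounded $x$.
\item Together these give $\gfoc>\gff$ on $[45^{\circ}+\epsilon,\pi/2]$ for all large $\NFe$. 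The symmetric estimate gives $\gff>\gfoc$ at $45^{\circ}-\epsilon$.
\end{itemize}
The same uniformity removes the obstacle you flag. The degeneration as $\phi\to0$ never enters the $45^{\circ}$ limit, so the exact Fresnel evaluation of the diagonalized integral is unnecessary.
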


The proof is in Appendix~E of \suppl; we comment on the three
parts in turn.

Part~(a). Steering leaves an uncompensated quadratic phase
across the joint aperture. From the steering kernel in
\eqref{eq:twistkernels} this residual phase is
\begin{equation}\label{eq:twistphase}
\psi(\tilde x,\tilde y) = -\frac{k}{2D}\bigl(\tilde x^{2} + \tilde y^{2}
- 2\cos\phi\,\tilde x\tilde y\bigr).
\end{equation}
For parallel arrays ($\phi = 0$) it reduces to the perfect square
$-\frac{k}{2D}(\tilde x - \tilde y)^{2}$, constant along the diagonal
$\tilde x = \tilde y$: the two apertures stay correlated along that
strip and add coherently. Rolling the receive array replaces the unit
cross-coupling by $\cos\phi < 1$, so $\psi$ is no longer a perfect
square; it decorrelates across the aperture and the power transfer
becomes less efficient. Quantitatively, steering sits at curvatures
$A = -\pi N_{\mathrm{t}}^{\mathrm{eff}}$, $B = -\pi
N_{\mathrm{r}}^{\mathrm{eff}}$ in \eqref{eq:AB}, a distance
$AB - \pi^{2}(\NFe\cos\phi)^{2} = \pi^{2}(\NFe)^{2}\sin^{2}\phi$ off the
confocal ridge (Theorem~\ref{thm:landscape} with $\NF \mapsto
\NFe\cos\phi$), and the off-ridge bound (Appendix~D of \suppl) gives
$\gff = \softO(\NF^{-2})$, down from the $\Theta(\NF^{-1})$ of the
parallel case.

Part~(b). Focusing has already cancelled the self-terms
$\tilde x^{2}, \tilde y^{2}$; its residual is the cross-coupling alone,
$e^{jk\cos\phi\,\tilde x\tilde y/D}$ in \eqref{eq:twistkernels}, and the
roll enters only through $\cos\phi$. Its gain \eqref{eq:gfoc-cf} does
not fall with $\phi$: focusing is essentially insensitive to the roll.
Since steering decorrelates while focusing does not, the comparison
tips to focusing: the boresight crossover no longer occurs, and
focusing dominates steering beyond a threshold $\phi_{c}$ that tends to
$45^{\circ}$ as $\NFe \to \infty$ ($50^{\circ}$--$55^{\circ}$ at the
Fresnel numbers of Section~\ref{sec:numerical}).

Part~(c) makes precise why (b) holds. Order-optimal
$\Theta((\NFe\cos\phi)^{-1})$ gain is attained on the confocal ridge
$AB = \pi^{2}(\NFe\cos\phi)^{2}$; its equal-curvature point, the
con-focusing pair, has symmetric focal depth
$r_{\mathrm{t}}^{\star} = D/(1+\cos\phi)$, obtained from $A = B =
\pi\NFe\cos\phi$ in \eqref{eq:AB} at $\rho = 1$. As the roll grows the
focal point migrates from the midpoint $D/2$ at $\phi = 0$ toward the
receiver, reaching $r_{\mathrm{t}}^{\star} = D$ at $\phi = \pi/2$: the
optimal focus lands on the receive array itself, the con-focusing law
coincides with focusing, and the link has degenerated to a
single transmit-to-point mode (MISO-like). 

\begin{remark}\label{rem:planar}
Which rotation matters depends on the array's dimension. For a
\emph{linear} array the roll is decisive: a line has no extent
transverse to its axis, so the coupling reduces to the scalar product
$\tilde x\tilde y\cos\phi$ and vanishes when the two lines are
orthogonal, $\hat{\mathbf{e}}_{\mathrm{t}}\cdot\hat{\mathbf{e}}_
{\mathrm{r}}=0$, the collapse of Proposition~\ref{prop:roll}. For a
\emph{planar} array the roll is harmless: the transverse positions are
vectors $\mathbf{p},\mathbf{q}\in\mathbb{R}^{2}$, the coupling is the
inner product $\mathbf{p}\cdot\mathbf{q}$, and an in-plane roll
$\mathbf{q}\mapsto R_{\phi}\mathbf{q}$ leaves the focusing gain
$|\langle e^{jk\,\mathbf{p}\cdot R_{\phi}\mathbf{q}/D}\rangle|^{2}$
invariant for a disk and only weakly $\phi$-dependent for a
square, bounded by the aperture's modest anisotropy and exactly
invariant at multiples of $\pi/2$; in both, the
$\mathrm{DoF}\approx A_{\mathrm{t}}A_{\mathrm{r}}/
(\lambda D)^{2}$~\cite{Miller2000,LuZeng2024} and the whole
focusing--steering comparison are essentially unchanged. What degrades a
planar link instead is the \emph{tilt}: pitch and yaw foreshorten each
aperture by $\cos\theta$, which is exactly the covariance of
Corollary~\ref{cor:rotations}. In short, roll is the linear-array
effect, tilt is the planar-array effect.
\end{remark}

\section{Geometry-Free Acquisition of the Optimum}\label{sec:acquisition}

The con-focusing law \eqref{eq:law} returns the optimal depths from $D$,
$L_{\mathrm{t}}$, $L_{\mathrm{r}}$. The apertures are known from device
specifications; the range $D$ is not. Once standard sweeping has aligned
the far-field beams, the terminals exchange no geometry: only
synchronized probe indices and scalar received-power feedback, the
signaling of one beam-refinement round. We estimate $D$ from a few such
probes and apply \eqref{eq:law}.

\subsection{Observation model}\label{ssec:obs}

A probe is a focused pair aimed at $(r_{\mathrm{t}}, r_{\mathrm{r}})$
and excited by a unit-modulus pilot $s_{\ell}$, $|s_{\ell}| = 1$, of $L$
symbols,
\begin{equation}\label{eq:rx}
y_{\ell} = \sqrt{p}\,\wrx^{H}\bH\,\wt\,s_{\ell} + n_{\ell}, \qquad
n_{\ell} \sim \mathcal{CN}(0,\sigma^{2}) .
\end{equation}
Matched filtering and subtracting the calibrated noise floor give the
unbiased gain estimate
\begin{equation}\label{eq:Ghat}
\widehat G = \frac{1}{L\gamma}\sum_{\ell=1}^{L}
\frac{|s_{\ell}^{*}y_{\ell}|^{2}}{\sigma^{2}} - \frac{1}{\gamma},
\qquad \gamma \triangleq \frac{p\,NM}{\sigma^{2}},
\end{equation}
with $\gamma$ the per-probe SNR; its variance is that of a noncentral
power measurement,
\begin{equation}\label{eq:varG}
\operatorname{Var}\widehat G = \frac{2\gamma G + 1}{L\,\gamma^{2}} .
\end{equation}
Since $\rho, L_{\mathrm{t}}, L_{\mathrm{r}}$ are known, the gain
$G(r_{\mathrm{t}}, r_{\mathrm{r}}; D)$ of \eqref{eq:g} depends on the
lone unknown $D$ through $\NF = \rho\, d_{\mathrm{F}}/(2D)$
and the curvatures \eqref{eq:AB}. Acquisition is the estimation of $D$
from probes \eqref{eq:rx}, followed by \eqref{eq:law}.

\begin{algorithm}[!b]
\caption{Geometry-free acquisition of the confocal optimum}
\label{alg:acq}
\begin{algorithmic}[1]
\Require apertures $L_{\mathrm{t}}, L_{\mathrm{r}}$, ratio $\rho$,
wavelength $\lambda$
\Ensure focal depths $(r_{\mathrm{t}}^{\star}, r_{\mathrm{r}}^{\star})$
\State $c_{0} \gets (\infty, \infty)$,\quad
$c_{1} \gets (\infty,\, L_{\mathrm{r}}^{2}/\lambda)$
\Comment{far field; near-field probe}
\State $\mathcal{M} \gets \{(c_{0}, \widehat G(c_{0})),\,
(c_{1}, \widehat G(c_{1}))\}$
\State $\widehat D \gets \arg\max_{D}
\sum_{(c,g) \in \mathcal{M}} \ln p(g \mid D)$
\Comment{coarse ML estimate}
\Repeat
\State $c \gets \arg\max_{c}\,
[\partial_{D}G(c)]^{2}/\operatorname{Var}\widehat G\big|_{\widehat D}$
\Comment{Fisher-optimal, near $\widehat D/(1-\rho)$; \eqref{eq:fisher}}
\State $\mathcal{M} \gets \mathcal{M} \cup \{(c, \widehat G(c))\}$
\Comment{measure (over subcarriers)}
\State $\widehat D \gets \arg\max_{D}
\sum_{(c,g) \in \mathcal{M}} \ln p(g \mid D)$
\Comment{ML, $p$ from \eqref{eq:varG}}
\Until{$\sqrt{1/J(\widehat D)}\,/\widehat D < c_{1}/\nu_{+}$}
\Comment{depth of focus, Lemma~\ref{lem:dof}}
\State $r_{\mathrm{t}}^{\star} \gets \widehat D/(1-\rho)$,\quad
$r_{\mathrm{r}}^{\star} \gets -\rho\,r_{\mathrm{t}}^{\star}$
\Comment{external con-focal point $P_{-}$ (Remark~\ref{rem:robust})}
\State \Return $(r_{\mathrm{t}}^{\star}, r_{\mathrm{r}}^{\star})$
\end{algorithmic}
\end{algorithm}

\subsection{Solvability}\label{ssec:lever}

The gain $G(\cdot\,;D)$ is a known function of the lone unknown $D$, so
two probes at distinct configurations determine $D$ by inverting the
model; in the noiseless case the recovery is exact, for any $\NF$ and
$\rho$. Solvability is therefore not in question; only accuracy is.

The information sits where $G$ depends most sharply on $D$. Since $D$
enters only through $\NF = \rho\, d_{\mathrm{F}}/(2D)$ and
the curvatures \eqref{eq:AB}, and the far-field gain
$\gff \simeq \rho/\NF \propto D$ is the largest gain the link affords, a
far-field reference together with one near-field probe already pin the
range; sharper sensitivity, used below, comes from probing near the
confocal maximum, where a small change in $D$ shifts a needle-sharp
peak. A power ascent, by contrast, is useless: along the confocal
family $G$ is bimodal: steering and the con-focusing pair are
distinct maxima flanking the focusing valley. The inversion
must therefore use the model, not a search.

\subsection{Active refinement and probe placement}\label{ssec:crb}

At finite SNR the error is set by the variance \eqref{eq:varG}. For $L \gg 1$ the estimate $\widehat G$ is approximately
Gaussian (by central limit theorem), and $K$ probes at configurations $c_{k} = (r_{\mathrm{t}},
r_{\mathrm{r}})_{k}$ carry Fisher information
\begin{equation}\label{eq:fim}
J(D) = \sum_{k=1}^{K}
\frac{\bigl[\partial_{D}G(c_{k})\bigr]^{2}}
{\operatorname{Var}\widehat G_{k}}
= L\gamma^{2}\sum_{k=1}^{K}
\frac{\bigl[\partial_{D}G(c_{k})\bigr]^{2}}{2\gamma G(c_{k})+1},
\end{equation}
where $\partial_{D}G$ follows from the closed forms of
Section~\ref{sec:gains}, and the variance-derivative term is smaller by
a factor $\bigO(1/L)$ than the retained mean-gradient term. In the
asymptotic regime (large $L$, with the likelihood unimodal about the
truth) the maximum-likelihood estimator attains
$\operatorname{Var}\widehat D \to 1/J(D)$; the subcarrier averaging that
flattens the ripple is what keeps the likelihood unimodal and the
estimator in this regime. The design variable is then the choice of
probes: the contribution
$[\partial_{D}G]^{2}/\operatorname{Var}\widehat G$ vanishes where $G$ is
flat in $D$ (far field, deep focus) and is largest where $G$ varies
fastest, near the sharp confocal maximum. Selecting
\begin{equation}\label{eq:fisher}
c_{k+1} = \arg\max_{c}\,
\frac{\bigl[\partial_{D}G(c)\bigr]^{2}}{\operatorname{Var}\widehat G}
\bigg|_{\widehat D}
\end{equation}
is sequential $D$-optimal design: since $J$ is additive, each probe adds
the largest term available, minimizing $1/J(D)$. The maximizer lies on
the confocal ridge: off it the gain is small and slowly varying,
$G = \softO(\NF^{-2})$ with $\partial_{D}G$ of the same order
(Appendix~D of \suppl), whereas on the ridge the needle-sharp peak gives
$\partial_{D}G = \Theta(\nu_{+}G)$, so \eqref{eq:fisher} is dominated by
the on-ridge configuration on the steep flank of $P_{-}$.

Let $\widehat G(c)$ denote the gain estimate \eqref{eq:Ghat} returned by
a probe at configuration $c = (r_{\mathrm{t}}, r_{\mathrm{r}})$,
averaged over the pilot and a few subcarriers, whose ripple patterns
decorrelate and keep the likelihood off the sidelobes.
Algorithm~\ref{alg:acq} acquires the robust external con-focal point
$P_{-}$ (Remark~\ref{rem:robust}) by maximum likelihood. A
far-field reference and one near-field probe give a coarse $\widehat D$;
each further probe is then placed by \eqref{eq:fisher} (near the
predicted depth $\widehat r_{\mathrm{t}}^{\star} = \widehat D/(1-\rho)$,
where the gain is steepest in $D$), and $\widehat D$ is refreshed by
maximum likelihood over all probes, until the relative range
uncertainty $\sqrt{1/J(\widehat D)}/\widehat D$ falls within the
depth-of-focus tolerance $c_{1}/\nu_{+}$ of Lemma~\ref{lem:dof}. The
protocol is one beam-refinement round, with no ranging and no geometry
exchange; its accuracy is validated against the Cram\'er--Rao bound in
Section~\ref{sec:numerical}.

\section{Numerical Results}\label{sec:numerical}

\begin{figure}[!b]
\centering
\includegraphics[width=0.8\columnwidth]{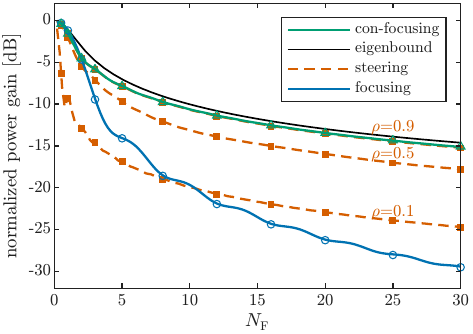}
\caption{Normalized power gain versus $\NF$. con-focusing attains the
rank-one eigenbound for any $\rho$; focusing decays as $\NF^{-2}$;
steering depends on the aperture ratio, shown for $\rho = 0.9, 0.5,
0.1$ (it coincides with con-focusing at $\rho = 1$). Lines: analysis;
markers: exact channel. The abscissa is the link Fresnel number
$\NF = \rho\,d_{\mathrm{F}}/(2D)$, with $d_{\mathrm{F}} =
2L_{\mathrm{t}}^{2}/\lambda$ the Fraunhofer distance of the larger
aperture.}
\label{fig:gains}
\end{figure}

\begin{figure}[!t]
\centering
\includegraphics[width=0.99\columnwidth]{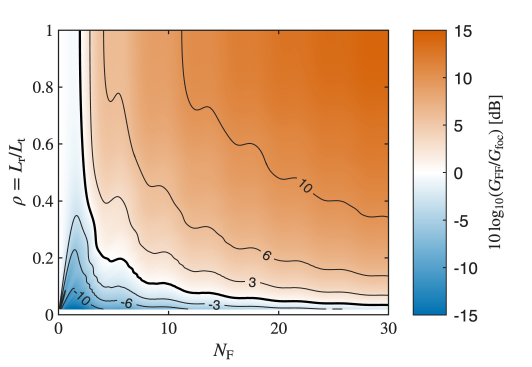}
\caption{Steering-to-focusing power-gain ratio
$10\log_{10}(\gff/\gfoc)$ [dB] over the $(\NF,\rho)$ plane. Vermillion:
steering wins; blue: focusing wins; the heavy $0$~dB contour is the
crossover separatrix $\NFs(\rho)$, tightest at $\rho = 1$ and receding
as $\rho \to 0$ (the MISO edge). The contour ripples near
$\rho \approx 0.2$ and $0.11$ are the multi-crossing windows of
Corollary~\ref{cor:map}.}
\label{fig:boundary}
\end{figure}

\begin{figure}[!b]
\centering
\includegraphics[width=0.8\columnwidth]{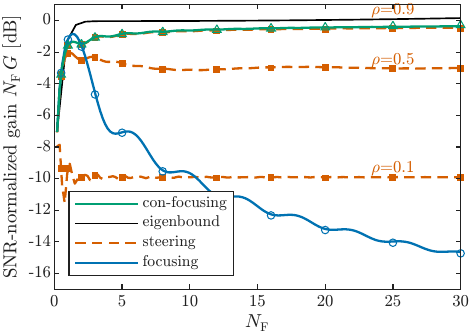}
\caption{Approach to the rank-one eigenbound. As $\NF$ grows the
con-focusing pair attains the bound in leading constant
($\NF G_{\mathrm{cf}} \to 0$~dB) and steering saturates a factor $\rho$
below, shown for $\rho = 0.9, 0.5, 0.1$; focusing decays as $\NF^{-1}$.
Lines: analysis; markers: exact channel.}
\label{fig:eig}
\end{figure}

\begin{figure*}[!t]
\centering
\subfloat[$\rho = 0.9$]{\includegraphics[width=0.32\textwidth]{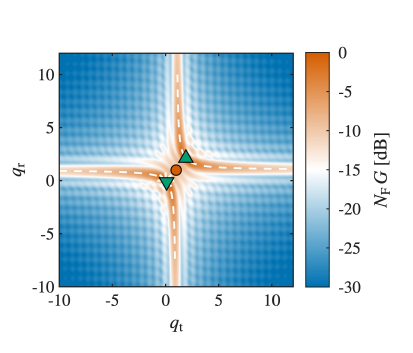}}\hfil
\subfloat[$\rho = 0.5$]{\includegraphics[width=0.32\textwidth]{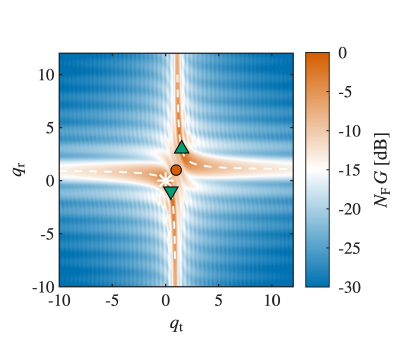}}\hfil
\subfloat[$\rho = 0.1$]{\includegraphics[width=0.32\textwidth]{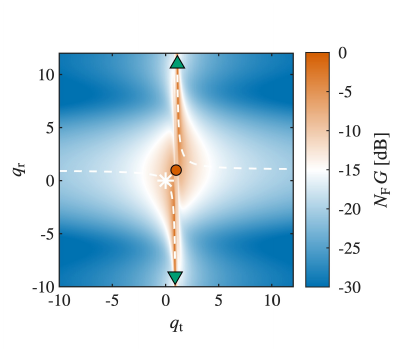}}
\caption{Normalized power gain $\NF G$ (eigenbound at $0$~dB) over the
curvature plane $(q_{\mathrm{t}}, q_{\mathrm{r}}) =
(D/r_{\mathrm{t}}, D/r_{\mathrm{r}})$ at $\NF = 10$, on common axes, for
(a)~$\rho = 0.9$, (b)~$\rho = 0.5$, (c)~$\rho = 0.1$. The confocal ridge
$(q_{\mathrm{t}}-1)(q_{\mathrm{r}}-1) = 1$ (dashed) is the same for
every $\rho$; focusing $(1,1)$ (circle) is a trough, steering $(0,0)$
(star) lies on the ridge, and the two con-focal centres
$P_\pm = (1\pm\rho,\,\pm(1\pm\rho)/\rho)$ (triangles) are the optima,
sliding outward as $\rho$ shrinks toward the MISO limit.}
\label{fig:landscape}
\end{figure*}

\begin{figure}[!t]
\centering
\includegraphics[width=0.8\columnwidth]{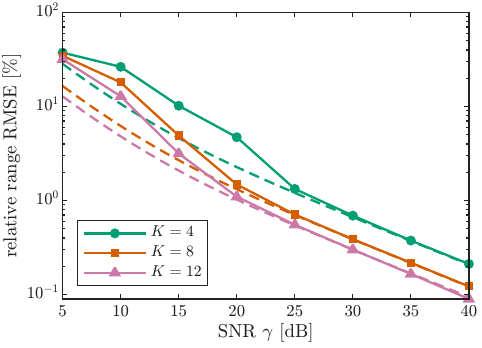}
\caption{Range-acquisition accuracy (Algorithm~\ref{alg:acq}),
$\NF = 10$, $\rho = 1/3$. Relative RMSE of $\widehat D$ versus SNR for
$K = 4, 8, 12$ probes; solid with markers: adaptive maximum-likelihood
(Monte-Carlo on the exact channel), dashed: Cram\'er--Rao bound
$\sqrt{1/J}/D$ \eqref{eq:fim}.}
\label{fig:acqrmse}
\end{figure}

\begin{figure}[!t]
\centering
\includegraphics[width=0.8\columnwidth]{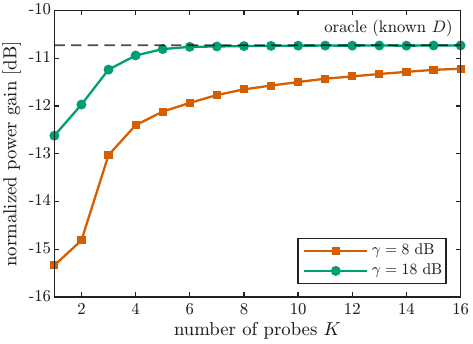}
\caption{Convergence of the acquired con-focusing gain to the oracle
(known $D$, dashed) versus the number of probes $K$, for two array SNRs
$\gamma = pNM/\sigma^{2}$ ($\NF = 10$, $\rho = 1/3$); the oracle is
reached in about five probes at high SNR.}
\label{fig:acqgain}
\vspace{-.5cm}
\end{figure}

Simulations use the exact spherical channel \eqref{eq:channel} (no
Fresnel approximation) on half-wavelength ULAs at a $30$~GHz carrier
($\lambda = 10$~mm), with up to $256$ elements per array and apertures
$L_{\mathrm{t}}, L_{\mathrm{r}}$ set per figure by $(\NF, \rho)$;
analysis curves are the closed forms of
Sections~\ref{sec:gains}--\ref{sec:confocus}. Lines denote analysis,
markers simulation, throughout.

Fig.~\ref{fig:gains} validates Theorems~\ref{thm:closedform}
and~\ref{thm:confocus}. The exact-channel markers fall on the
closed-form curves over the whole range of $\NF$; con-focusing rides
the rank-one eigenbound, the exact $\sigma_{\max}^2$, while focusing
decays as $\NF^{-2}$. Steering is the lone aperture-ratio-dependent
scheme: shown for $\rho = 0.9, 0.5, 0.1$, its gap to the bound widens
as the apertures grow unequal, and it overtakes focusing only past a
moderate $\NF$; at $\rho = 1$ it is identical to con-focusing.

Fig.~\ref{fig:boundary} maps the steering-to-focusing power-gain ratio
$10\log_{10}(\gff/\gfoc)$ over the $(\NF,\rho)$ plane, so the comparison
is read in decibels rather than as a binary verdict. The zero contour is
the crossover separatrix $\NFs(\rho)$ of Corollary~\ref{cor:map}:
focusing wins to its left and steering to its right, by the margin shown,
which exceeds $10$~dB deep in the near field. The separatrix recedes to
ever larger $\NF$ as $\rho \to 0$ (the MISO edge, where focusing always
wins) and tightens to its minimum at $\rho = 1$ (equal apertures); the
ripples in the contours near $\rho \approx 0.2$ and $\rho \approx 0.11$
are the narrow multi-crossing windows of Corollary~\ref{cor:map}.

Fig.~\ref{fig:landscape} shows the gain, normalized to the eigenbound
as $\NF G$ (optimum at $0$~dB), over the curvature plane at
$\NF = 10$ for shrinking $\rho$: only the confocal ridge of
Theorem~\ref{thm:landscape} survives, focusing on the user sits in a
trough, and the con-focal centre is the peak; as $\rho \to 0$ the
landscape stretches and degenerates toward the MISO limit, where only
the transmit curvature $q_{\mathrm{t}}$ matters.

Fig.~\ref{fig:eig} is the constructive half of the paper. Plotted as
$\NF G$, the con-focusing pair rides the rank-one eigenbound toward its
leading constant ($\NF G \to 1$), steering saturates a factor $\rho$
below, and focusing collapses ($\NF G \to 0$), confirming the three
regimes of Theorem~\ref{thm:confocus}.

Figs.~\ref{fig:acqrmse} and~\ref{fig:acqgain} validate the blind
acquisition of Algorithm~\ref{alg:acq}. The relative range error of the
adaptive maximum-likelihood estimator (Fig.~\ref{fig:acqrmse}, markers)
tracks the Cram\'er--Rao bound $\sqrt{1/J}/D$ \eqref{eq:fim} as the SNR
grows, for $K \in \{4, 8, 12\}$ probes: the estimator is efficient and
the accuracy is set by the Fisher information, with each probe adding a
term. The con-focusing gain built on the estimate $\widehat D$
(Fig.~\ref{fig:acqgain}) then climbs to the oracle (known $D$) within a
handful of probes, about
five at high SNR, so the protocol reaches the optimum from received
power alone, with no ranging and no geometry exchange. Each probe is
measured at its own SNR $\gamma G$, set by its array gain through
\eqref{eq:varG}; as the beam focuses this SNR rises, so the noisiest
measurement is the early far-field probe and the cleanest are the
con-focal ones near the peak, and the acquisition is self-accelerating.

\section{Conclusion}\label{sec:conclusion}

Near-field beamfocusing, the spherical-wavefront phase matched to the
user, is the default prescription for extremely large-scale
multiple-input multiple-output (XL-MIMO) arrays at millimeter-wave and
sub-terahertz carriers, yet its justification is invariably drawn for a
large array serving a single-antenna, multiple-input single-output
(MISO) user. We generalized near-field beamforming to the single-user
line-of-sight (LoS) MIMO link in which the user, too, carries an
extended aperture, by characterizing the entire family of focused beam
pairs between the two apertures, of which focusing on the user and
far-field steering are the extreme members. The comparison collapses
onto two dimensionless quantities, the link Fresnel number $\NF$ and
the aperture ratio $\rho$: focusing wins only below a crossover
$\NFs(\rho)$, the universal value $1.947$ for equal apertures and
receding to infinity as $\rho \to 0$, while steering wins beyond it,
the signal-to-noise ratio (SNR) gap widening by ten decibels per decade
of $\NF$. The celebrated focusing gain is thus the
vanishing-receive-aperture corner of an asymmetry map, not a near-field
universal.

The order-optimal unit-modulus member of the family at every $\NF$ is a
geometry-only \emph{con-focusing} law: instead of focusing each
aperture on the other, both are aimed at the common axial point from
which they subtend equal angles. It requires no channel state
information (CSI), attains the rank-one eigenbound in leading constant
as $\NF$ grows, so that amplitude tapering and full CSI add only a
vanishing margin, degenerates to plain steering for equal apertures,
tolerates ranging errors up to the classical depth of focus, and is
acquirable blindly, from received power alone, within a single
beam-refinement round. In short, in a true near-field MIMO link one
should fill the receive aperture, not focus on the user.

The analysis is confined to single-stream transmission over uniform
linear arrays (ULAs) in the paraxial regime. Natural extensions are
multi-stream operation, where the same confocal geometry should
organize the near-field multiplexing modes; planar and non-uniform
apertures; and the multiuser and wideband regimes of XL-MIMO.

\section*{Acknowledgment}
The authors would like to thank Prof.~Umberto Spagnolini for his
insightful advice and valuable suggestions.

\bibliographystyle{IEEEtran}
\bibliography{bibliography}

\end{document}


\title{Supplementary Material for\\``From Focusing to Con-Focusing:
Optimal Power Transfer in Line-of-Sight Near-Field MIMO''}
\author{Marouan~Mizmizi,~\IEEEmembership{Member,~IEEE,}
and~Sanzhar~Yergaliev,~\IEEEmembership{Student~Member,~IEEE}}
\maketitle

This document collects the proofs of the results of the main paper.
Cross-references to theorems, equations, and sections without the
``S'' prefix refer to the main paper; objects internal to this
supplement are prefixed with ``S''.

\appendices

\section{Auxiliary Oscillatory-Integral Estimates}\label{app:aux}

Every gain in the paper is the squared modulus of an oscillatory
integral over the normalized aperture $I = [-\tfrac12,\tfrac12]$. We
collect, once, the three estimates used throughout, in the Landau
notation of Section~\ref{sec:model}.

\begin{lemma}\label{lem:osc}
Let $b,\mu \in \mathbb{R}\setminus\{0\}$.
\begin{enumerate}
\item[(i)] For every $\beta \in \mathbb{R}$,
$\bigl|\int_{I} e^{\,j(bs^{2}+\beta s)}\,ds\bigr| \le 4/\sqrt{|b|}$.
\item[(ii)] $\int_{\mathbb{R}} e^{\,j\mu s^{2}}ds =
\sqrt{\pi/|\mu|}\,e^{\,j(\pi/4)\operatorname{sgn}\mu}$, and truncating
to $[a,b]$ with $0 \notin [a,b]$ changes the value by
$\bigO\bigl(1/(\mu a)\bigr)$.
\item[(iii)] For $\Phi(\bm q) = \bm q^{\top}\Omega\bm q$ with
$\Delta = \det\Omega \neq 0$, eigenvalues $\mu_{1},\mu_{2}$, and
signature $\sigma$,
\begin{equation}\label{eq:planeval}
\int_{\mathbb{R}^{2}} e^{\,j\Phi(\bm q)}\,d\bm q
= \frac{\pi}{\sqrt{|\Delta|}}\,e^{\,j(\pi/4)\sigma},
\end{equation}
and restricting the domain to $I^{2}$ changes the modulus by
$\bigO\bigl(\min_{i}|\mu_{i}|^{-1/2}\bigr)$.
\end{enumerate}
\end{lemma}

\begin{IEEEproof}
Bound (i) is the Van der Corput second-derivative test
\cite[Ch.~VIII]{Stein1993}. Bound (ii) is the Fresnel integral together
with its stationary-phase truncation \cite[Ch.~3]{Olver1974}; the
truncation error is one integration by parts,
\begin{equation}\label{eq:ibp}
\int_{a}^{b} e^{\,j\mu s^{2}}\,ds
= \left[\frac{e^{\,j\mu s^{2}}}{2j\mu s}\right]_{a}^{b}
+ \frac{1}{2j\mu}\int_{a}^{b}\frac{e^{\,j\mu s^{2}}}{s^{2}}\,ds
= \bigO\!\left(\frac{1}{\mu a}\right).
\end{equation}
Bound (iii) follows by diagonalizing $\Omega = R\,\mathrm{diag}(
\mu_{1},\mu_{2})\,R^{\top}$ with $R$ a rotation: the substitution
$\bm s = R^{\top}\bm q$ has unit Jacobian and separates the integral
into two copies of (ii), with $|\mu_{1}\mu_{2}| = |\Delta|$, while the
difference between the square $I^{2}$ and the plane is the edge slices,
each bounded through (ii) by $\bigO(\min_{i}|\mu_{i}|^{-1/2})$.
\end{IEEEproof}

\section{Closed-Form Gains and the Crossover}\label{app:proofs-gains}

\begin{IEEEproof}[Proof of Lemma~\ref{lem:delta}]
The field integral is $I = \int_{I^{2}} e^{\,j\Phi}$ with
$\Phi(\bm q) = \bm q^{\top}\Omega\bm q$ and $\det\Omega = \Delta$
(\eqref{eq:quadform}--\eqref{eq:Omega}). For $\Delta \neq 0$ the
gradient $\nabla\Phi = 2\Omega\bm q$ vanishes only at the aperture
centre $\bm q = 0$, where $\Phi = 0$: a single stationary point, so the
gain is the coherent contribution of that centre. Lemma~\ref{lem:osc}(iii)
evaluates it: $|I| = \pi/\sqrt{|\Delta|}$ up to the edge correction
$\bigO(\min_{i}|\mu_{i}|^{-1/2})$, whence
\begin{equation}
G = |I|^{2} = \Theta\bigl(|\Delta|^{-1}\bigr),
\end{equation}
with constant $\pi^{2}$ as $\min_{i}|\mu_{i}| \to \infty$. When
$\Delta \to 0$ one eigenvalue vanishes, the coherent band outgrows the
aperture, and \eqref{eq:planeval} no longer applies; $G$ is then capped
by the largest squared singular value of the kernel, the rank-one
eigenbound of order $\NF^{-1}$, attained on the locus
(Theorem~\ref{thm:confocus}).
\end{IEEEproof}

\begin{IEEEproof}[Proof of Theorem~\ref{thm:regions}]
Throughout, $\Si$ and the Fresnel function $F = C - jS$ are taken in
their standard small- and large-argument expansions.

\emph{(a) Far field ($\NF \to 0$).} The small-argument expansions
$\Si(z) = z - z^{3}/18 + \bigO(z^{5})$ and
$\varphi(\nu) = 1 + \bigO(\nu)$ (with $\nu_{+} - \nu_{-} = \NF$) give
\begin{equation}
\gfoc = 1 - \tfrac{\pi^{2}}{36}\NF^{2} + \bigO(\NF^{4}),
\qquad
\gff = 1 + \bigO(\nu_{+}),
\end{equation}
both tending to $1$.

\emph{(b) MISO edge ($\rho \ll 1$).} The first claim is
\eqref{eq:gfoc-cf} at $\NF \lesssim 1$. For the second, the difference
quotient in \eqref{eq:gff-cf} converges to the derivative
\eqref{eq:dnuphi} at $\nu_{0}$, whose modulus lies in
$\tfrac{1}{2\sqrt{\nu_0}}[1 - \tfrac{2}{\pi}\nu_0^{-1/2},\,
1 + \tfrac{2}{\pi}\nu_0^{-1/2}]$; squaring gives the
$\Theta(\nu_{0}^{-1})$ law with leading constant $\tfrac14$.

\emph{(c) Deep near field ($\NF \to \infty$).} The large-argument form
$\nu\varphi(\nu) = \sqrt{2\nu}\,F(\sqrt{2\nu}) - (1-e^{-j\pi\nu})/(j\pi)$,
the Fresnel bound
$\nu_{\pm}\varphi(\nu_{\pm}) = \tfrac{1-j}{\sqrt{2}}\sqrt{\nu_{\pm}} + \bigO(1)$,
and the exact identity
$\sqrt{\nu_{+}} - \sqrt{\nu_{-}} = L_{\mathrm{r}}/\sqrt{\lambda D} = \sqrt{\rho\NF}$
give
\begin{equation}
\gff^{1/2} = \frac{\sqrt{\nu_{+}} - \sqrt{\nu_{-}}}{\NF} + \bigO(\NF^{-1})
= \sqrt{\frac{\rho}{\NF}} + \bigO(\NF^{-1}),
\end{equation}
so $\gff = \rho/\NF + \lilo(\NF^{-1})$. The cancellation is essential:
for $\rho < 1$ the $\nu_{-}$ term removes half of the $\sqrt{\nu_{+}}$
contribution, and dropping it would give
$\bigl(\tfrac{1+\rho}{2\sqrt{\rho}}\bigr)^{2}\NF^{-1}$, which exceeds
the eigenbound of Theorem~\ref{thm:confocus}(ii) and is therefore
impossible. For focusing,
$\Si(z) = \tfrac{\pi}{2} - \cos z/z + \bigO(z^{-2})$ gives
\begin{equation}
\gfoc = \NF^{-2}\bigl(1 + \bigO(\NF^{-1})\bigr).
\end{equation}

\emph{(d) The crossover at $\rho = 1$.} The crossover $\NFs$ has no
closed form, and $\delta = \gfoc^{1/2} - \gff^{1/2}$, which shares the
sign of $\gfoc - \gff$, is built from the oscillatory $\Si$ and $F$ and
is not monotone, so uniqueness needs more than a sign of the
derivative. We bound $\delta$ analytically away from the crossover and
certify the single sign change in between; write
$\sigma = (1+\rho^{2})/(2\rho) \ge 1$. For small $\NF$, the alternating
series of $\Si$ and the entire series of $\nu\varphi$, with explicit
tail bounds, give
\begin{equation}\label{eq:smallNF}
\delta = \frac{(2\sigma^{2}-1)\,\pi^{2}}{180}\,\NF^{2} + R,
\qquad |R| \le 0.35\,(\sigma\NF)^{3},
\end{equation}
the cubic majorant constant $0.35$ being the certified sum of the
leading series tails. The cubic stays below the quadratic, hence
$\delta > 0$, while $(\sigma\NF) < (2\sigma^{2}-1)\pi^{2}/(63\sigma)$,
i.e. for $\sigma\NF \le 0.157$ (worst case $\sigma = 1$). For large
$\NF$, the bounds $|F - \tfrac{1-j}{2}| \le 2/(\pi z)$ and
$\Si \le 1.852$ give $\gff^{1/2} \ge \sqrt{\rho/\NF} - 8/(\pi\NF)$ and
$\gfoc^{1/2} \le 1.179/\NF$, so $\delta < 0$ once
$\sqrt{\rho\NF} > 8/\pi + 1.179$, i.e. $\NF \ge 13.9/\rho$, tightening
to $\NF \ge 6.02$ at $\rho = 1$ (where $\nu_{-} = 0$ halves the Fresnel
tail to $4/\pi$). On the remaining interval $[0.157, 6.02]$ at
$\rho = 1$, a verified computer-assisted certificate, Arb ball
arithmetic fed by the analytic majorants above for $|\delta'|$ and
$|\delta''|$, isolates a single sign change confined to
$[1.94698, 1.94742]$, with $\delta' < 0$ throughout and every enclosure
rigorously accounting for rounding; the same certificate confines the
zeros for $\rho \in [0.1, 1]$ (Corollary~\ref{cor:map}). The
certificate is provided as supplementary material
(\texttt{certify\_crossover.py}).
\end{IEEEproof}

\section{Discretization and Path Loss}\label{app:discrete}

\begin{proposition}[Sampling condition]\label{prop:sampling}
With $d_{\mathrm{t}} = L_{\mathrm{t}}/(N-1)$, $d_{\mathrm{r}} =
L_{\mathrm{r}}/(M-1)$, the sums
\eqref{eq:gfoc-sum}--\eqref{eq:gff-sum} agree with their continuous
limits up to a relative error $\bigO(1/N) + \bigO(1/M)$ provided each
residual phase is sampled above Nyquist. The binding term is the
steering quadratic phase, requiring $N > N_{\mathrm{t}} + \NF$ and
$M > N_{\mathrm{r}} + \NF$; for equal apertures ($\rho = 1$,
$N_{\mathrm{t}} = N_{\mathrm{r}} = \NF$) this is $N, M > 2\NF$. If
violated, grating (alias) foci appear and the continuous comparison
fails.
\end{proposition}

\begin{IEEEproof}
The quadrature is exact once every residual phase is sampled above
Nyquist; the proof identifies the fastest phase and reads off the rate.
The instantaneous frequency of the steering quadratic phase
$e^{-j\pi(x-y)^{2}/(\lambda D)}$ in $x$ is $|x-y|/(\lambda D) \le
c_{+}/(\lambda D)$, $c_{+} = (L_{\mathrm{t}}+L_{\mathrm{r}})/2$, larger
than the $|y|/(\lambda D) \le L_{\mathrm{r}}/(2\lambda D)$ of the
bilinear kernel $e^{jkxy/D}$; it is therefore binding. Sampling above
twice it, $1/d_{\mathrm{t}} > 2c_{+}/(\lambda D)$, gives $N - 1 >
L_{\mathrm{t}}(L_{\mathrm{t}}+L_{\mathrm{r}})/(\lambda D) =
N_{\mathrm{t}} + \NF$ (the bilinear kernel alone would give only
$\NF$), and symmetrically $M - 1 > N_{\mathrm{r}} + \NF$. Poisson
summation \cite[Ch.~VII]{Stein1993} then places all alias images
outside the band, and the Euler--Maclaurin formula
\cite[Ch.~8]{Olver1974} bounds the quadrature error by $\bigO(1/N) +
\bigO(1/M)$. For half-wavelength spacing the condition holds
automatically under Assumption~\ref{ass:paraxial} (it amounts to
$L_{\mathrm{t}} + L_{\mathrm{r}} < 2D$).
\end{IEEEproof}

\begin{proposition}[Path loss, uniform in $\NF$]\label{prop:pathloss}
Replace the unit-modulus entries by $H_{mn} = (D/d_{nm})
e^{-jkd_{nm}}$ and set $\eta \le (L_{\mathrm{t}} +
L_{\mathrm{r}})^{2}/(8D^{2})$. Then the relative perturbations obey,
uniformly over the $\NF$ sweep, $|\Delta\gff|/\gff \le C_{1}\eta$,
$|\Delta\gfoc|/\gfoc \le C_{2}\eta(1 + \ln_{+}\NF)$, and
$|\Delta\geig|/\geig \le C_{3}\eta$, with absolute constants.
\end{proposition}

\begin{IEEEproof}
Write the amplitude taper as $A(s) = (1+s^{2}/D^{2})^{-1/2}$, so that
$\eta$ bounds its deviation from unity over the aperture; we bound each
gain by the tool matched to its structure.
\emph{Steering.} The trapezoid collapse applies to the perturbed
integrand, and Lemma~\ref{lem:osc}(i) localizes the perturbation to the
same coherent strip $|s| \lesssim \sqrt{\lambda D}$ that carries the
main term, giving $|\Delta\gff^{1/2}| \le C\eta\sqrt{\lambda
D}/L_{\mathrm{t}}$, of the same order as $\gff^{1/2}$ itself.
\emph{Focusing.} Split at the focal-spot scale $|x| \le \lambda
D/L_{\mathrm{r}}$ and integrate by parts in $y$ outside it; the
harmonic integral yields the $\ln_{+}\NF$ factor.
\emph{Eigenbound.} Extend $A - 1$ to a $C^{2}$
compactly supported function and expand it in modulations:
$\Delta\bH = \tfrac{1}{2\pi}\int \hat a(\omega)\,
\mathrm{D}_{x}(\omega)\bH\mathrm{D}_{y}(\omega)^{H} d\omega$ with
unitary diagonal $\mathrm{D}$'s, so $\|\Delta\bH\|_{2} \le
\tfrac{1}{2\pi}\|\hat a\|_{1}\sigma_{\max} \le C\eta\,\sigma_{\max}$
and Weyl's perturbation inequality \cite[Ch.~7]{HornJohnson2013}
concludes; no lower bound on $\geig$ is required. All three bounds control the field-level (square-root)
perturbation; the relative power perturbation is at most twice as
large to first order, which the constants absorb.
\end{IEEEproof}

\section{The Confocal Landscape and the Con-Focusing Law}
\label{app:landscape}

The field integral \eqref{eq:gAB} of Section~\ref{sec:gains}
generalizes, with transverse offsets $(\alpha, \beta)$, to
$G(A,B,\alpha,\beta) = |I(A,B,\alpha,\beta)|^{2}$,
\begin{equation}\label{eq:gABoff}
I(A,B,\alpha,\beta) = \iint_{[-\frac12,\frac12]^{2}}
e^{j\Phi(u,v)}\,du\,dv ,
\end{equation}
with $\Phi = Au^{2} + Bv^{2} + 2\pi\NF uv + \alpha u + \beta v$; as in
Appendix~\ref{app:proofs-gains}, all bounds are derived on $I$ and
squared at the end.
The quadratic form degenerates iff $AB = \pi^{2}\NF^{2}$, which is
equivalent to $r_{\mathrm{t}} + r_{\mathrm{r}} = D$ (signed radii):
the confocal ridge. Steering sits at $(A,B) =
(-\pi N_{\mathrm{t}}, -\pi N_{\mathrm{r}})$, which satisfies the
ridge condition; focusing sits at $(0,0)$, which does
not. Conjugating the integrand and flipping $v$ shows
$G(A,B,\alpha,\beta) = G(-A,-B,-\alpha,\beta)$, which proves
Theorem~\ref{thm:landscape}(d) (FF equals the midpoint confocal pair
$(\pi N_{\mathrm{t}}, \pi N_{\mathrm{r}})$) and the degeneracy of
the converging and conjugate con-focusing branches
(Theorem~\ref{thm:confocus}).

\subsection{Off the ridge}
For $A \neq 0$, completing the square in $u$ writes
$I = \int_{I}\Psi(v)\,e^{\,jB_{\mathrm{eff}}v^{2}}\,dv$, with effective
curvature $B_{\mathrm{eff}} = B - \pi^{2}\NF^{2}/A$ and
$\Psi(v) = \int_{I}e^{\,jA(u-u_{\star})^{2}}\,du$ an incomplete Fresnel
integral about the stationary point
$u_{\star}(v) = -(\pi\NF v + \alpha/2)/A$. Split
$\Psi = \sqrt{\pi/|A|}\,e^{\,j\vartheta} + R$: the constant part times
the $v$-integral, bounded by Lemma~\ref{lem:osc}(i), produces the first
term of \eqref{eq:lemB}, while the tail obeys
$|R(v)| \le 2/\bigl(\pi|A|\,\mathrm{dist}(u_{\star}(v),\partial I)\bigr)$
by Lemma~\ref{lem:osc}(ii), and $\int_{I}|R(v)|\,dv$ is a harmonic
integral whose worst case is the window centred at $u_{\star} = 0$,
yielding the logarithmic remainder. Uniformly in $(\alpha, \beta)$,
\begin{equation}\label{eq:lemB}
|I| \le \frac{4\sqrt{\pi}}{\sqrt{|AB - \pi^{2}\NF^{2}|}}
+ \frac{4}{\pi\NF}\Bigl(1 + \ln_{+}\frac{3\pi\NF}{2\sqrt{|A|}}\Bigr),
\end{equation}
whose square is Theorem~\ref{thm:landscape}(c). The case $A = 0$,
$B \neq 0$ (and $B = 0$, $A \neq 0$) follows by the symmetry
$G(A,B,\alpha,\beta) = G(B,A,\beta,\alpha)$, reducing the $v$- (resp.\
$u$-) integral first; the case $A = B = 0$ (focusing family, any
offset) is handled directly, giving
$|I| \le \tfrac{2}{\pi\NF}(1 + \ln\tfrac{\pi\NF}{2})$, hence
$G = \bigO(\NF^{-2}\ln^{2}\NF)$.

\subsection{On the ridge}
Parametrize $A = \pi\NF\tau$, $B = \pi\NF/\tau$. The quadratic form
is the exact square $\tfrac{\pi\NF}{\tau}(\tau u + v)^{2}$;
substituting $s = \tau u + v$ collapses \eqref{eq:gAB} to a
one-dimensional Fresnel integral against a trapezoidal window, and a
stationary-phase evaluation with explicit remainder gives
\begin{equation}\label{eq:ridge}
G_{\mathrm{ridge}}(\tau) = \frac{\min(1,\tau)^{2}}{\tau\NF}
\Bigl(1 + O\!\Bigl(\frac{\ln\NF}{\sqrt{\NF}}\Bigr)\Bigr):
\end{equation}
every ridge pair is order-optimal with power-gain constant
$\kappa(\tau) = \min(\tau, 1/\tau) \le 1$, maximized at
$\tau = 1$, which is the equal-subtense pair \eqref{eq:law}; steering corresponds to $\tau = 1/\rho$, with constant
$\rho$.

\begin{IEEEproof}[Proof of Theorem~\ref{thm:confocus}]
\emph{(i) The law attains the ridge crest.} This is \eqref{eq:ridge} at
$\tau = 1$, where $\kappa(\tau) = \min(\tau,1/\tau)$ is maximal.

\emph{(ii) The crest meets the eigenbound.} Any unit-norm pair obeys
the rank-one (SVD) bound
\begin{equation}
G = \frac{|\wrx^{H}\bH\wt|^{2}}{NM}
\le \frac{\sigma_{\max}(\bH)^{2}}{NM},
\end{equation}
with equality at the dominant singular vectors; it thus bounds every
rank-one scheme, amplitude-tapered or channel-aware. Factor
$\bH = \bm{D}_{\mathrm{r}}\bm{K}\bm{D}_{\mathrm{t}}$,
where $\bm{D}_{\mathrm{t}}, \bm{D}_{\mathrm{r}}$ are diagonal
unit-modulus matrices carrying the separable quadratic phases
$e^{-jkx_{n}^{2}/2D}$, $e^{-jky_{m}^{2}/2D}$ and $\bm{K}_{mn} =
e^{\,jkx_{n}y_{m}/D}$ is the bilinear kernel; unitary diagonal factors
leave singular values unchanged, so $\sigma_{\max}(\bH) =
\sigma_{\max}(\bm{K})$. The continuous kernel
$\bm{K}(u,v) = e^{\,j\pi\NF\,uv}$ on $[-\tfrac12,\tfrac12]$ is, up to
the interval scaling, exactly the finite-Fourier (band-limiting)
operator whose singular functions are the prolate spheroidal wave
functions \cite{SlepianPollak1961,BoydKogelnik1962,Borgiotti1966}; with
time--bandwidth $c = \pi\NF/2$ its largest singular value is the
prolate $\lambda_{0}(c)$, $1 - \lambda_{0}(c) = \bigO(\sqrt{c}\,e^{-2c})$.
For continuous apertures this gives
$\geig = \NF^{-1}\lambda_{0}(c)$. On a discrete array, by
Proposition~\ref{prop:sampling} the discrete $\bm{K}$ agrees with this
operator up to a relative $\bigO(1/N) + \bigO(1/M)$ once $N, M > 2\NF$, so
\begin{equation}
\geig = \NF^{-1}\bigl(1 + \bigO(\sqrt{\NF}\,e^{-\pi\NF})
+ \bigO(1/N) + \bigO(1/M)\bigr),
\end{equation}
the exponential being the prolate gap and the $\bigO(1/N)$ the
discretization.

\emph{(iii) Ratio.} Dividing (i) by (ii) and invoking
\eqref{eq:deepNF} gives the deep-field ratio.
\end{IEEEproof}

\begin{IEEEproof}[Proof of Lemma~\ref{lem:dof}]
The argument has four steps: linearize the curvature mismatch in
$\varepsilon$, convert it into the off-ridge discriminant, invoke the
off-ridge upper bound for the outer ($3$-dB-loss) threshold, and the
exact perturbed ridge reduction for the matching inner (persistence)
threshold; the absolute-error form follows by a change of variables.

\emph{Step 1 (curvature mismatch).} The focal phases of the con-focusing pair are computed for the estimated range $\widehat D = (1+\varepsilon)D$,
so the applied per-aperture curvatures scale as $1/\widehat D$ while
the true channel curvature scales as $1/D$. Writing
$1/\widehat D = (1-\varepsilon+\bigO(\varepsilon^{2}))/D$, the residual
quadratic-phase curvatures of \eqref{eq:gAB}, which vanish on the
confocal ridge at $\varepsilon=0$, are
\begin{align}
A &= \pi\NF - \pi N_{\mathrm{t}}(1+\rho)\,\varepsilon
+ \bigO(\varepsilon^{2}), \label{eq:detune}\\
B &= \pi\NF - \pi N_{\mathrm{r}}(1+\rho^{-1})\,\varepsilon
+ \bigO(\varepsilon^{2}).\nonumber
\end{align}

\emph{Step 2 (discriminant).} Substituting \eqref{eq:detune} into the
ridge discriminant and using the identity
$N_{\mathrm{t}}(1+\rho)+N_{\mathrm{r}}(1+\rho^{-1})
= (L_{\mathrm{t}}+L_{\mathrm{r}})^{2}/(\lambda D) = 4\nu_{+}$,
\begin{equation}\label{eq:disc}
\begin{aligned}
|AB - \pi^{2}\NF^{2}|
&= \pi\NF\bigl[\pi N_{\mathrm{t}}(1+\rho)
   + \pi N_{\mathrm{r}}(1+\rho^{-1})\bigr]|\varepsilon|\\
&\quad + \bigO(\varepsilon^{2})
= 4\pi^{2}\NF\,\nu_{+}|\varepsilon| + \bigO(\varepsilon^{2}):
\end{aligned}
\end{equation}
the pair leaves the ridge at a rate set precisely by the
joint-aperture Fresnel number $\nu_{+}$.

\emph{Step 3 (outer threshold; necessity).} By the off-ridge bound
\eqref{eq:lemB}, $g := G^{1/2} \le 4\sqrt{\pi}\,
|AB-\pi^{2}\NF^{2}|^{-1/2} + \softO(\NF^{-1})$. The optimal gain is
$G_{\star} = \NF^{-1}(1+\lilo(1))$ by \eqref{eq:ridge} at $\tau=1$, so
falling more than $3$~dB below it means $G \le \tfrac14\NF^{-1}$,
i.e.\ $g \le \tfrac12\NF^{-1/2}$. The leading term of \eqref{eq:lemB}
reaches this value when $|AB-\pi^{2}\NF^{2}| = 64\pi\NF$, which by
\eqref{eq:disc} is $|\varepsilon| = 16/(\pi\nu_{+})$. For
$|\varepsilon|$ beyond this the upper bound \emph{forces} the loss,
giving the outer threshold.

\emph{Step 4 (inner threshold; persistence).} That the gain genuinely
\emph{persists} near the optimum below an $\bigO(1/\nu_{+})$ scale
requires a matching lower bound, obtained from the exact ridge
reduction perturbed to second order. For general $\rho$ the detuned form $Q = \bm{q}^{H}\Omega\bm{q}$ in
\eqref{eq:gAB} has trace $A+B = 2\pi\NF - 4\pi\nu_{+}\varepsilon +
\bigO(\varepsilon^{2})$ and determinant \eqref{eq:disc}, hence
eigenvalues $\mu_{1} = 2\pi\NF + \bigO(\varepsilon)$ (the on-ridge
curvature) and $\mu_{2} = \Delta/\mu_{1} = -2\pi\nu_{+}\varepsilon +
\bigO(\varepsilon^{2})$ (the transverse curvature). To first order the
on-ridge direction is undetuned, while the transverse direction
acquires the quadratic phase $\mu_{2}w^{2}$ over an $\bigO(1)$ extent of
the normalized aperture; the gain therefore stays within $3$~dB of its
$\varepsilon=0$ value as long as $|\mu_{2}| = \bigO(1)$, i.e.\
$|\varepsilon| \le c_{1}/\nu_{+}$ for a constant $c_{1}>0$ bounded
uniformly in $\rho$, the inner threshold. For $\rho=1$ the
eigen-directions are $s=(u+v)/\sqrt2$, $w=(u-v)/\sqrt2$ and the integral
factors exactly,
\begin{equation}\label{eq:Qdiag}
Q = (2\pi\NF-\bar\delta)\,s^{2} - \bar\delta\,w^{2},
\qquad \bar\delta = 2\pi\nu_{+}\varepsilon,
\end{equation}
recovering the threshold with an explicit constant.

\emph{Absolute-error form.} With $\Delta D = \varepsilon D$ and
$\nu_{+} = (L_{\mathrm{t}}+L_{\mathrm{r}})^{2}/(4\lambda D)$, the
relative bound $|\varepsilon| \le c_{1}/\nu_{+}$ becomes
$|\Delta D| \le 4c_{1}\lambda\,(D/(L_{\mathrm{t}}+L_{\mathrm{r}}))^{2}$,
which is \eqref{eq:dof}; the two thresholds pin the implied constant
between $4c_{1}$ and $64/\pi$, and evaluating the closed-form gain at
the perturbed ridge places the sharp value $c_{0} \approx 3/\pi$ inside
this band.
\end{IEEEproof}

\section{Rotations: Tilt and Roll}\label{app:rotations}

The strategy is to reduce the off-boresight gains to the boresight ones
by projecting each aperture onto the broadside plane: an affine map
turns every boresight functional into the same functional of the
projected apertures, with $\NF \mapsto \NFe$ and the cross terms scaled
by $\cos\phi$; the two proofs then read off the tilt and the roll.

With the projector reduction of Section~\ref{sec:confocus} (the exact
identity $d^{2} = (D+a)^{2} + c^{2}$, longitudinal offset $a$,
transverse part $c^{2}$ of \eqref{eq:lawcos}), the paraxial expansion
carries a third-order remainder,
\begin{equation}\label{eq:offb-exp}
d = D + a + \frac{c^{2}}{2D} + \varepsilon_{3},
\qquad
|\varepsilon_{3}| \le \frac{c_{+}^{2}\tilde a}{2D(D - \tilde a)}
+ \frac{c_{+}^{4}}{8(D-\tilde a)^{3}},
\end{equation}
with $\tilde a = \tfrac12(L_{\mathrm{t}}|\sin\theta_{\mathrm{t}}| +
L_{\mathrm{r}}|\sin\theta_{\mathrm{r}}|)$,
$c_{+} = \tfrac12(L_{\mathrm{t}}\cos\theta_{\mathrm{t}} +
L_{\mathrm{r}}\cos\theta_{\mathrm{r}})$ the projected joint
half-aperture, and $\tilde\nu_{+} = c_{+}^{2}/(\lambda D)$ its Fresnel
number: off boresight, validity requires the additional cubic (coma)
condition $\pi\tilde\nu_{+}\tilde a/D \ll 1$.

Substituting \eqref{eq:lawcos} and conjugating the appropriate phases
leaves, in the projected coordinates, the two residual kernels
\eqref{eq:twistkernels}. The affine map $x \mapsto \tilde x =
x\cos\theta_{\mathrm{t}}$ (and $y \mapsto \tilde y$) pushes the
normalized uniform aperture measure onto the projected apertures
$L_{\mathrm{t}}\cos\theta_{\mathrm{t}}$, $L_{\mathrm{r}}
\cos\theta_{\mathrm{r}}$, so every boresight quantity is the same
functional of these two integrals with $\NF \mapsto \NFe$ and the
cross terms scaled by $\cos\phi$.

\begin{IEEEproof}[Proof of Corollary~\ref{cor:rotations}]
At $\phi = 0$, \eqref{eq:lawcos} reduces to $c^{2} = (\tilde x -
\tilde y)^{2}$ and \eqref{eq:twistkernels} become the focusing kernel
$e^{jk\tilde x\tilde y/D}$ and the steered quadratic phase
$e^{-jk(\tilde x - \tilde y)^{2}/(2D)}$, identical to the boresight integrands
\eqref{eq:gfoc-sum}--\eqref{eq:gff-sum} in $(\tilde x, \tilde y)$.
Every result of Theorems~\ref{thm:closedform}--\ref{thm:confocus} and
Corollary~\ref{cor:map} is a functional of these integrals and of the
prolate decomposition of the focusing kernel, whose time--bandwidth
product is $\pi\NFe/2$; hence they transfer verbatim under
\eqref{eq:angmap}. The crossover value follows from
Theorem~\ref{thm:regions} with $\NF \mapsto \NFe$, giving
$\NFs(\theta) = 1.947/(\cos\theta_{\mathrm{t}}\cos\theta_{\mathrm{r}})$.
\end{IEEEproof}

\begin{IEEEproof}[Proof of Proposition~\ref{prop:roll}]
For general $\phi$ the cross-coupling parameter of
\eqref{eq:twistkernels} is $\NFe\cos\phi$, so the confocal ridge of
the modified kernel is $AB = \pi^{2}(\NFe\cos\phi)^{2}$ in the
curvature coordinates \eqref{eq:AB} of the projected apertures, with
per-aperture numbers $N_{\mathrm{t}}^{\mathrm{eff}} =
(L_{\mathrm{t}}\cos\theta_{\mathrm{t}})^{2}/(\lambda D)$,
$N_{\mathrm{r}}^{\mathrm{eff}}$, and $\NFe =
\sqrt{N_{\mathrm{t}}^{\mathrm{eff}} N_{\mathrm{r}}^{\mathrm{eff}}}$.

\emph{(a)} Steering sets both focal depths to infinity,
i.e.\ $A = -\pi N_{\mathrm{t}}^{\mathrm{eff}}$ and $B = -\pi
N_{\mathrm{r}}^{\mathrm{eff}}$, whence $AB = \pi^{2}
N_{\mathrm{t}}^{\mathrm{eff}} N_{\mathrm{r}}^{\mathrm{eff}} =
\pi^{2}(\NFe)^{2}$ and the off-ridge discriminant is
\begin{equation}\label{eq:twistdisc}
\begin{aligned}
AB - \pi^{2}(\NFe\cos\phi)^{2}
&= \pi^{2}(\NFe)^{2}(1 - \cos^{2}\phi)\\
&= \pi^{2}(\NFe)^{2}\sin^{2}\phi .
\end{aligned}
\end{equation}
The off-ridge bound \eqref{eq:lemB} then gives
\begin{equation}
\begin{aligned}
G^{1/2}
&\le \frac{4\sqrt{\pi}}{\sqrt{\pi^{2}(\NFe)^{2}\sin^{2}\phi}}
   + \softO(\NF^{-1})\\
&= \frac{4}{\sqrt{\pi}\,\NFe|\sin\phi|} + \softO(\NF^{-1}),
\end{aligned}
\end{equation}
so for fixed $\phi \neq 0$, $G = \softO(\NF^{-2})$.

\emph{(b)} The focusing gain is \eqref{eq:gfoc-cf} evaluated at the
cross-coupling $\NFe\cos\phi$,
$\gfoc = [\tfrac{2}{\pi\NFe\cos\phi}\Si(\tfrac{\pi\NFe\cos\phi}{2})]^{2}$,
which is decreasing in its argument and hence increasing as $\phi$
grows; as $\phi \to \pi/2$ the argument $\to 0$, $\Si(z) = z + \bigO
(z^{3})$, and $\gfoc \to 1$: the kernel \eqref{eq:twistkernels} is
then separable and all $NM$ pairs combine coherently. Dividing by the
upper bound of (a), $\gfoc/\gff^{\mathrm{ub}} \asymp \tan^{2}\phi$; since
$\gff \le \gff^{\mathrm{ub}}$, focusing exceeding the bound implies
$\gfoc > \gff$ a fortiori, so the condition $\tan^{2}\phi = 1$ gives a
\emph{conservative} dominance angle $\phi_{c} \to 45^{\circ}$ as
$\NFe \to \infty$. Finite-$\NFe$ corrections, valid within the
paraxial-plus-coma regime of this appendix, place it near
$50^{\circ}$--$55^{\circ}$ at moderate Fresnel numbers.

\emph{(c)} The ridge $AB = \pi^{2}(\NFe\cos\phi)^{2}$ is nonempty; its
equal-subtense ($\tau = 1$) member has $A = B = \pi\NFe\cos\phi$.
With $A = \pi N_{\mathrm{t}}^{\mathrm{eff}}(D/r_{\mathrm{t}} - 1)$ from
\eqref{eq:AB} and $N_{\mathrm{t}}^{\mathrm{eff}} = \NFe$ at $\rho = 1$,
this reads $D/r_{\mathrm{t}} - 1 = \cos\phi$, i.e.\
$r_{\mathrm{t}}^{\star} = r_{\mathrm{r}}^{\star} = D/(1+\cos\phi)$; the
general asymmetric pair solves $A = B = \pi\NFe\cos\phi$ for
$(r_{\mathrm{t}}, r_{\mathrm{r}})$. On this ridge \eqref{eq:ridge}
with $\NF \mapsto \NFe\cos\phi$ yields $G = \Theta((\NFe\cos\phi)^{-1})$
for $\NFe\cos\phi \gg 1$, matching the eigenbound order; once
$\NFe\cos\phi \lesssim 1$ (roll near $\pi/2$) the kernel is essentially
separable and $G = \Theta(1)$, so the roll-corrected pair dominates
both (a) and (b) at every $\phi$.
\end{IEEEproof}

\bibliographystyle{IEEEtran}
\bibliography{bibliography}